  \providecommand\BibTeX{{%
    \normalfont B\kern-0.5em{\scshape i\kern-0.25em b}\kern-0.8em\TeX}}}
\newtheorem{problem}{Problem}
\crefname{problem}{Problem}{Problems}
\newtheorem{assumption}{Assumption}
\crefname{assumption}{Assumption}{Assumptions}
\tikzset{cross/.style={path picture={
  \draw[black]
(path picture bounding box.south east) -- (path picture bounding box.north west) (path picture bounding box.south west) -- (path picture bounding box.north east);
}}}
\tikzstyle{chanode}=[fill=white,draw=black,circle,cross,inner sep=.8mm]
\tikzstyle{pl1node}=[fill=black,draw=black,circle,inner sep=.5mm]
\tikzstyle{pl2node}=[fill=white,draw=black,circle,inner sep=.55mm]
\tikzstyle{termina}=[fill=white,draw=black,inner sep=.6mm]
\tikzstyle{decpt}  =[fill=black,draw=black,inner sep=.8mm]
\tikzstyle{obspt}  =[fill=white,draw=black,cross,inner sep=0.95mm]
\newcommand{\bull}{\hspace*{2mm}$\bullet$\hspace*{1mm}}
\NewDocumentCommand{\fhighlight}{O{black!15}mm}{%
\fill[#1] (#2.north west) rectangle (#3.south east);
}
\renewcommand{\vec}[1]{\bm{#1}}
\newcommand{\defeq}{\coloneqq}
\newcommand{\emptyseq}{\varnothing}
\newcommand{\Hist}{\mathscr{H}}
\newcommand{\Z}{\mathscr{Z}}
\newcommand{\A}{\mathscr{A}}
\newcommand{\cR}{\mathcal{R}}
\newcommand{\cX}{\mathcal{X}}
\newcommand{\cQ}{\mathcal{Q}}
\DeclareMathOperator{\co}{co}
\newcommand{\bbP}{\mathbb{P}}
\newcommand{\bbE}{\mathbb{E}}
\newcommand{\bbR}{\mathbb{R}}
\newcommand{\bbS}{\mathbb{S}}
\newcommand{\bbN}{\mathbb{N}_{>0}}
\NewDocumentCommand{\pure}{O{i}}{%
\ifthenelse{\equal{#1}{}}%
    {\vec{\pi}}%
    {\vec{\pi}^{(#1)}}%
}
\NewDocumentCommand{\hatpure}{O{i}}{%
\ifthenelse{\equal{#1}{}}%
    {\hat{\vec{\pi}}}%
    {\hat{\vec{\pi}}^{(#1)}}%
}
\NewDocumentCommand{\puret}{O{i}O{t}}{\vec{\pi}^{(#1),\,#2}}
\NewDocumentCommand{\Pure}{O{i}}{\Pi^{(#1)}}
\NewDocumentCommand{\tdev}{O{i}O{\hat{\sigma}}O{\hat{\vec{\pi}}}}{%
    \phi^{(#1)}_{#2\to#3}
}
\NewDocumentCommand{\Mdev}{O{i}O{\hat{\sigma}}O{\hat{\vec{\pi}}}}{%
    \vec{M}^{(#1)}_{#2\to#3}
}
\NewDocumentCommand{\cJ}{O{i}}{\mathscr{J}^{(#1)}}
\NewDocumentCommand{\Seqs}{O{i}}{\Sigma^{(#1)}}
\NewDocumentCommand{\pc}{O{z}}{p^{(c)}(#1)}
\NewDocumentCommand{\ut}{O{i}}{u^{(#1)}}
\NewDocumentCommand{\bbone}{O{XXX}}{\mathds{1}_{\{#1\}}}
\NewDocumentCommand{\Ph}{O{i}}{\Psi^{(#1)}}
\NewDocumentCommand{\Tph}{O{i}}{(\co\Ph)}
\NewDocumentCommand{\Phj}{O{i}O{\hat\sigma}}{\bar{\Psi}^{(#1)}_{#2}}
\newcommand{\alp}{\alpha_z^{(i),\,t}}
\NewDocumentCommand{\Q}{O{i}}{\mathcal{Q}^{(#1)}}
\NewDocumentCommand{\q}{}{\vec{q}}
\NewDocumentCommand{\qt}{O{i}O{t}}{\vec{q}^{(#1),\,#2}}
\NewDocumentCommand{\seq}{m}{\texttt{#1}}
\newcommand{\numberthis}[1]{\refstepcounter{equation}\tag{\theequation}\label{#1}}
    \patchcmd\algocf@Vline{\vrule}{\vrule \kern-0.4pt}{}{}
    \patchcmd\algocf@Vsline{\vrule}{\vrule \kern-0.4pt}{}{}
\definecolor{darkgrey}{gray}{0.3}
\definecolor{commentcolor}{gray}{0.5}
\crefname{algocf}{Algorithm}{Algorithms}
\let\cref@old@stepcounter\stepcounter
\def\stepcounter#1{%
  \cref@old@stepcounter{#1}%
  \cref@constructprefix{#1}{\cref@result}%
  \@ifundefined{cref@#1@alias}%
    {\def\@tempa{#1}}%
    {\def\@tempa{\csname cref@#1@alias\endcsname}}%
  \protected@edef\cref@currentlabel{%
    [\@tempa][\arabic{#1}][\cref@result]%
    \csname p@#1\endcsname\csname the#1\endcsname}}
\newcommand{\linkfootnote}[1]{%
    \protect\hyperref[#1]{%
        \hbox{\textsuperscript{\normalfont\ref{#1}}}%
    }%
}
\begin{document}

\title{Simple Uncoupled No-Regret Learning Dynamics for Extensive-Form Correlated Equilibrium}\thanks{A short version of this article appeared in \emph{Advances in Neural Information Processing Systems 33: Annual Conference on Neural Information Processing Systems 2020, NeurIPS 2020}~\citep{Celli20:NoRegret}.}

\author{Gabriele Farina}
\email{gfarina@cs.cmu.edu}
\affiliation{%
  \institution{Carnegie Mellon University}
  \streetaddress{5000 Forbes Avenue}
  \city{Pittsburgh}
  \state{Pennsylvania}
  \country{USA}
}

\author{Andrea Celli}
\email{andrea.celli@polimi.it}
\affiliation{%
 \institution{Politecnico di Milano}
 \streetaddress{Piazza Leonardo da Vinci, 32}
 \city{Milan}
 \country{Italy}}

\author{Alberto Marchesi}
\email{alberto.marchesi@polimi.it}
\affiliation{%
 \institution{Politecnico di Milano}
 \streetaddress{Piazza Leonardo da Vinci, 32}
 \city{Milan}
 \country{Italy}}

\author{Nicola Gatti}
\email{nicola.gatti@polimi.it}
\affiliation{%
 \institution{Politecnico di Milano}
 \streetaddress{Piazza Leonardo da Vinci, 32}
 \city{Milan}
 \country{Italy}}


\begin{abstract}
    The existence of simple uncoupled no-regret learning dynamics that converge to correlated equilibria in normal-form games is a celebrated result in the theory of multi-agent systems. Specifically, it has been known for more than 20 years that when all players seek to minimize their \emph{internal} regret in a repeated normal-form game, the empirical frequency of play converges to a normal-form correlated equilibrium. Extensive-form (that is, tree-form) games generalize normal-form games by modeling both sequential and simultaneous moves, as well as imperfect information.
    Because of the sequential nature and presence of private information in the game, correlation in extensive-form games possesses significantly different properties than its counterpart in normal-form games, many of which are still open research directions.
    Extensive-form correlated equilibrium (EFCE) has been proposed as the natural extensive-form counterpart to the classical notion of correlated equilibrium in normal-form games. Compared to the latter, the constraints that define the set of EFCEs are significantly more complex, as the correlation device ({\em a.k.a.} mediator) must keep into account the evolution of beliefs of each player as they make observations throughout the game. Due to that significant added complexity, the existence of uncoupled learning dynamics leading to an EFCE has remained a challenging open research question for a long time.
    In this article, we settle that question by giving the first uncoupled no-regret dynamics that converge to the set of EFCEs in $n$-player general-sum extensive-form games with perfect recall. We show that each iterate can be computed in time polynomial in the size of the game tree, and that, when all players play repeatedly according to our learning dynamics, the empirical frequency of play is proven to be a $O(1/\sqrt{T})$-approximate EFCE with high probability after $T$ game repetitions, and an EFCE almost surely in the limit.
\end{abstract}

\begin{CCSXML}
<ccs2012>
   <concept>
       <concept_id>10003752.10010070.10010099.10010105</concept_id>
       <concept_desc>Theory of computation~Convergence and learning in games</concept_desc>
       <concept_significance>500</concept_significance>
       </concept>
   <concept>
       <concept_id>10003752.10010070.10010099.10010103</concept_id>
       <concept_desc>Theory of computation~Exact and approximate computation of equilibria</concept_desc>
       <concept_significance>300</concept_significance>
       </concept>
 </ccs2012>
\end{CCSXML}

\ccsdesc[500]{Theory of computation~Convergence and learning in games}
\ccsdesc[300]{Theory of computation~Exact and approximate computation of equilibria}
\keywords{extensive-form games, correlated equilibrium, regret minimization, multi-agent learning}

\maketitle

\section{Introduction}

The {\em Nash equilibrium} (NE)~\citep{nash1950equilibrium} is the most common notion of rationality in game theory, and its computation in two-player zero-sum games has been the flagship computational challenge in the area at the interplay between computer science and game theory (see, {\em e.g.}, the landmark results in heads-up no-limit poker by~\citet{brown2017superhuman} and~\citet{moravvcik2017stack}).
The assumption underpinning NE is that the interaction among players is fully {\em decentralized}.
Therefore, an NE is an element of the {\em uncorrelated} strategy space of the game, that is, a product of independent probability distributions over actions, one per player.
A competing notion of rationality is the {\em correlated equilibrium} (CE) proposed by~\citet{aumann1974subjectivity}.
A {\em correlated strategy} is an arbitrary probability distribution over joint action profiles---defining an action for each player---and it is customarily modeled via a trusted external {\em mediator} that draws an action profile from this distribution, and privately recommends to each player their component.
A correlated strategy is a CE if no player has an incentive to choose an action different from the mediator's recommendation, because, assuming that all other players follow their recommended action, the suggested action is the best in expectation.

Many real-world strategic interactions involve more than two players with arbitrary ({\em i.e.}, general-sum) utilities.
In those settings, the CE is an appealing solution concept, as it overcomes several weaknesses of the NE.
First, the NE is prone to equilibrium selection issues, raising the question on how players can select an equilibrium while they are assumed not to be able to communicate with each other.
Second, computing an NE is computationally intractable, being \textsf{PPAD}-complete even in two-player games~\citep{chen2006settling,daskalakis2009complexity}, whereas a CE can be computed in polynomial time.\footnote{In normal-form games, a CE can be computed in polynomial time via linear programming. In extensive-form games, the computational complexity of computing a CE depends on the specific notion of correlation that is adopted. As discussed in more detail in the following, the problem can be solved in polynomial time for the notion studied in this article.}
Third, the social welfare that can be attained by an NE may be arbitrarily lower than what can be achieved via a CE~\citep{koutsoupias1999worst,roughgarden2002bad,celli2018}.
Lastly, in normal-form (that is, simultaneous-move) games, the notion of CE arises from simple \emph{uncoupled} learning dynamics even in general-sum settings with an arbitrary number of players. In words, these learning dynamics are such that each player adjusts their strategy on the basis of their own payoff function, and on other players' strategies, but not on the payoff functions of other players. The existence of uncoupled dynamics enables to overcome the---often unreasonable---assumption that players have perfect knowledge of other players' payoff functions, while at the same time offering a parallel, scalable avenue for finding equilibria. In contrast, in the case of the NE, uncoupled learning dynamics are only known in the two-player zero-sum setting~\citep{hart2000simple,Hart03:Uncoupled,Cesa-Bianchi06:Prediction}.
All of the above considerations contribute to the idea that CE is often a better prescriptive solution concept than NE in general-sum and multiplayer settings.

%
%

{\em Extensive-form correlated equilibrium} (EFCE), introduced by~\citet{von2008extensive}, is a natural extension of the correlated equilibrium to the case of extensive-form (that is, tree-form, sequential) games.
Extensive-form games generalize normal-form games by modeling both sequential and simultaneous moves, as well as imperfect information. 
In an EFCE, the mediator draws, before the beginning of the sequential interaction, a recommended action for each of the possible decision points (that is, {\em information sets}) that players may encounter in the game, but these recommendations are not immediately revealed to each player.
Instead, the mediator incrementally reveals relevant individual moves as players reach new information sets.
At any decision point, the acting player is free to deviate from the recommended action, but doing so comes at the cost of future recommendations, which are no longer issued to that player if they deviate.
It is up to the mediator to make sure that the recommended behavior is indeed an equilibrium---that is, that no player would be better off ever deviating from following the mediator's recommendations at each information set.
Compared to the constraints that characterize the set of CEs in normal-form games, those that define the set of EFCEs in extensive-form games are significantly more complex. Indeed, the main challenge of the EFCE case is that the mediator must keep into account the evolution of beliefs of each player as they make observations throughout the game tree.

%
In general-sum extensive-form games with an arbitrary number of players (including potentially the {\em chance player} modeling exogenous stochastic events), the problem of computing a feasible EFCE can be solved in polynomial time in the size of the game tree~\citep{huang2008computing} via a variation of the {\em Ellipsoid Against Hope} algorithm~\citep{papadimitriou2008,jiang2015polynomial}.
\citet{dudik2009sampling} provide an alternative sampling-based algorithm to compute EFCEs. However, their algorithm is centralized and based on MCMC sampling, which limits its applicability on large-scale problems. 
In practice, these approaches cannot scale beyond toy problems. On the other hand, methods based on uncoupled learning dynamics usually work quite well in large real-world problems, while retaining the nice properties of uncoupled dynamics that we discussed above.
%
%
The following fundamental research question remains open: {\em is it possible to devise simple uncoupled learning dynamics that converge to an EFCE?} 

We show that the answer is positive. 
To do so, we introduce the idea of \emph{trigger regret}, which builds on an equivalent formulation of EFCE based on the notion of {\em trigger agent}, introduced by~\citet{Gordon08:No}.
This is a notion of {internal regret} suitable for extensive-form games that naturally expresses the regret incurred by trigger agents in the definition of EFCE.
Specifically, trigger regret is a particular instantiation of the framework known as {\em phi-regret minimization} introduced by~\citet{Stoltz07:Learning}, and building on previous work by~\citet{Greenwald03:General}.
In general, phi-regret minimization works with a notion of regret defined with respect to a given set of linear transformations on the decision set.
In order to define trigger regret, we identify suitable linear transformations that allow us to encode the behavior of trigger agents in the definition of EFCE. We call them \emph{canonical trigger deviation functions}.
Intuitively, trigger deviation functions encode the possible ways in which a trigger agent may deviate from recommended behavior in the EFCE formulation; this happens whenever a corresponding action sequence is recommended, which results in the agent playing from that point on some continuation strategy different from the recommended one.
Our core result on trigger regret is the following: if each player plays according to any uncoupled no-regret learning dynamics that minimizes trigger regret, then the resulting empirical frequency of play approaches the set of EFCEs.

In the rest of the article, we provide an efficient (that is, requiring time polynomial in the size of the game tree at each iteration) algorithm that minimizes trigger regret.
The algorithm is based on the general framework by~\citet{Gordon08:No}, which builds a phi-regret minimizer by employing a standard regret minimizer having the set of linear transformations defining phi-regret as decision space.
For our purposes, this boils down to designing a regret minimizer for the set of all valid canonical trigger deviation functions, which we do by exploiting non-trivial combinatorial structures of the set.
Moreover, in order to efficiently compute the next strategy to play at each iteration, the framework by~\citet{Gordon08:No} also needs that each linear transformation in the set admits a fixed point, and that such fixed point can be computed efficiently. 
Thus, our algorithm requires a procedure to compute a fixed point of any linear mapping defining a canonical trigger deviation function in time polynomial in the size of the game tree.
We show that such procedure can be implemented by visiting the game tree in a top down fashion, so as to incrementally build a fixed point.
In conclusion, our main result is that the proposed algorithm minimizes trigger regret, exhibiting $O(\sqrt{T})$ trigger regret with high probability after $T$ iterations.
Thus, when all players play according to the uncoupled learning dynamics defined by our algorithm, the empirical frequency of play after $T$ game repetitions is proven to be a $O(1/\sqrt{T})$-approximate EFCE with high probability, and an EFCE almost surely in the limit.
%
%
These results generalize the seminal work by~\citet{hart2000simple} to the extensive-form game case via a simple and natural framework.

\paragraph{ Related Work}

The study of adaptive procedures leading to a CE dates back to at least the seminal works by~\citet{foster1997calibrated},~\citet{fudenberg1995consistency,fudenberg1999conditional}, and~\citet{hart2000simple,hart2001general}; see also the monograph by~\citet{fudenberg1998theory}.
In particular, the work by~\citet{hart2000simple} proves that simple dynamics based on the notion of {\em internal regret} converge to a CE in normal-form games. The strategy that the authors introduce---the so-called \emph{regret matching}---is conceptually simple, and guarantees that if all players follow this strategy, then the empirical frequency of play converges to the set of CEs (see also~\citet{cahn2004general}).
Other works describe extensions to the models studied in the aforementioned papers. For example,~\citet{Stoltz07:Learning} describe an adaptive procedure converging to a CE in games with an infinite, but compact, set of actions, while~\citet{kakade2003correlated} consider efficient algorithms for computing correlated equilibria in graphical games.

In more recent years, a growing effort has been devoted to understanding the relationships between no-regret learning dynamics and equilibria in extensive-form games.
These games pose additional challenges when compared to normal-form games, due to their sequential nature and the presence of imperfect information.
While in two-player zero-sum extensive-form games it is widely known that no-regret learning dynamics converge to an NE---with the \emph{counterfactual regret minimization} (CFR) algorithm and its variations being the state of the art for equilibrium finding in such games~\citep{zinkevich2008regret,tammelin2014solving,tammelin2015solving,lanctot2009monte,brown2019solving}---the general case is less understood.
\citet{celli2019learning} provide some variations of the classical CFR algorithm for $n$-player general-sum extensive-form games, showing that they provably converge to a {\em normal-form coarse correlated equilibrium}, which is based on a form of correlation that is less appealing than that of EFCE in sequential games.

In a recent paper, \citet{morrill2020hindsight} conduct a study on various forms of correlation in extensive-form games, defining a taxonomy of solution concepts.
Each of their solution concepts is attained by a particular set of no-regret learning dynamics, which is obtained by instantiating the phi-regret minimization framework \citep{Greenwald03:General,Stoltz07:Learning,Gordon08:No} with a suitably-defined deviation function.
As part of their analysis, \citet{morrill2020hindsight} investigate some properties of the well-established CFR regret minimization algorithm~\citep{zinkevich2008regret} applied to $n$-player general-sum extensive-form games, establishing that it is hindsight-rational with respect to a specific set of deviation functions, which the authors coin {\em blind counterfactual deviations}.%
\footnote{In a very recent working paper, \citet{morrill2021efficient} extend their prior work~\citep{morrill2020hindsight} by identifying a general class of deviations---called {\em behavioral deviations}---that induce equilibria that can be found through uncoupled no-regret learning dynamics. Behavioral deviations are defined as those specifying an action transformation independently at each information set of the game. As the authors note, the deviation functions involved in the definition of EFCE do not fall under that category. However, the authors suggest that a particular class of behavioral deviation functions---called \emph{causal partial sequence deviations}---induce solution concepts that are (subsets of) EFCEs. In private communications with the authors, they indicated that they are working towards completing the proof of that claim. Once completed, their result would beget an alternative set of no-regret learning dynamics that converge to EFCE, based on a different set of deviation functions than those we use in this article.}

\section{Preliminaries}

In this section we provide some standard definitions related to extensive-form games and regret minimization which will be employed in the remainder of the article. A more comprehensive treatment of basic concepts in game theory can be found in the book by~\citet{Shoham08:Multiagent}, and an introduction to the theory of learning in games can be found in the book by~\citet{Cesa-Bianchi06:Prediction}.

\subsection{Mathematical Notation and Algorithmic Conventions}\label{sec:notation}

In this article we adopt the following notational and algorithmic conventions.
\begin{itemize}
    \item We denote the set of real numbers as $\bbR$, the set of nonnegative real numbers as $\bbR_{\ge0}$, and the set $\{1,2,\dots\}$ of positive integers as $\bbN$.
    \item The set $\{1,\ldots,n\}$, where $n\in\bbN$, is compactly denoted as $[n]$. The empty set is denoted as $\emptyset$.
    \item Given a set $S$, we denote its convex hull with the symbol $\co {S}$.
    \item Vectors and matrices are marked in bold.
    \item Given a discrete set $S = \{s_1,\dots,s_n\}$, we denote as $\bbR^{|S|}$ (resp., $\bbR_{\ge0}^{|S|}$) the set of real (resp., nonnegative real) $|S|$-dimensional vectors whose entries are denoted as $\vec{x}[s_1], \dots,\vec{x}[s_n]$.
    \item Similarly, given a discrete set $S$, we denote as $\bbR^{|S|\times|S|}$ (resp., $\bbR_{\ge0}^{|S|\times|S|}$) the set of real (resp., nonnegative real) $|S|\times|S|$ square matrices $\vec{M}$ whose entries are denoted as $\vec{M}[s_r,s_c]$ ($s_r,s_c\in S$), where $s_r$ corresponds to the row index and $s_c$ to the column index.
    \item Given a discrete set $S$, we denote by $\Delta^{|S|}$ the $|S|$-simplex, that is, the set $\Delta^{|S|} \defeq \{\vec{x}\in\bbR_{\ge0}^{|S|}: \sum_{s\in S} \vec{x}[s] = 1\}$. The symbol $\Delta^n$, with $n\in\bbN$, is used to mean $\Delta^{|[n]|}$. 
    \item Given a discrete set $S$, we use the symbol $\bbS^{|S|} \subseteq \bbR_{\ge0}^{|S|\times|S|}$ to denote the set of stochastic matrices, that is, nonnegative square matrices whose columns all sum up to~$1$. The symbol $\bbS^n$, where $n\in\bbN$, is used to mean $\bbS^{|[n]|}$. 
    \item Given two functions $f:X\to Y$ and $g:Y\to Z$, we denote by $g\circ f: X\to Z$ their composition $\vec{x} \mapsto g(f(x))$.
    \item Given a proposition $\textsf{p}$, we denote with $\bbone[\textsf{p}]$ the indicator function of that proposition: $\bbone[\textsf{p}] = 1$ if $\textsf{p}$ is true, and $\bbone[\textsf{p}] =0$ if not.
    \item Given a partially ordered set $(S, \prec)$ and two elements $s,s'\in S$, we use the standard derived symbols $s \preceq s'$ to mean that $(s = s') \lor (s \prec s')$, $s \succ s'$ to mean that $s' \prec s$, and $s \succeq s'$ to mean that $s' \preceq s$. Furthermore, we use the crossed symbols $\not\prec, \not\preceq, \not\succ$, and  $\not\succeq$ to mean that the relations $\prec,\preceq,\succ$, and $\succeq$ (respectively) do not hold.
    \item Several of the algorithms presented in this article take as input, give as output, or otherwise manipulate, linear functions. Therefore, in order to study the complexity of our routines, it is necessary to settle on a representation for such linear functions. Unless otherwise specified, we will always assume that a linear function $f$ is stored in memory using coordinates relative to the canonical basis of their domain and codomain, and call that representation the $\emph{canonical representation}$ of $f$, denoted $\langle f\rangle$. Specifically:
    \begin{itemize}
        \item If $f$ is a linear function from $\bbR^{|S|}$ (for some discrete set $S$) to $\bbR$, then its canonical representation $\langle f\rangle$ is the (unique) vector $\vec{v} \in\bbR^{|S|}$ such that
        \[
            f(\vec{x}) = \vec{v}^\top \vec{x}\qquad\forall\ \vec{x}\in\bbR^{|S|},
        \]
        where $\top$ denotes transposition.
        \item If $f$ is a linear function from $\bbR^{|S|}$ to $\bbR^{|S|}$ (for some discrete set $S$), then its canonical representation $\langle f\rangle$ is the (unique) matrix $\vec{M}\in \bbR^{|S|\times |S|}$ such that
        \[
            f(\vec{x}) = \vec{M} \vec{x} \qquad\forall\ \vec{x} \in \bbR^{|S|}.
        \]
        \item If $F$ is a linear functional, mapping linear functions $\phi:\bbR^{|S|}\to\bbR^{|S|}$ to reals, then its canonical representation $\langle F\rangle$ is the (unique) matrix $\vec{\Lambda} \in \bbR^{|S|\times|S|}$ such that
        \[
            F(\phi) = \sum_{s_r, s_c\in S} \vec{\Lambda}[s_r, s_c]\cdot \langle\phi\rangle[s_r, s_c]\qquad \forall\ \phi:\bbR^{|S|}\to\bbR^{|S|},\numberthis{eq:functional canonical}
        \]
        where $\langle\phi\rangle$ is the canonical representation of $\phi$.
    \end{itemize}
\end{itemize}

\subsection{Extensive-Form Games}\label{sec:extensive form}

In this subsection we introduce some standard concepts, terminology, and notation that we will use to deal with extensive-form games. A summary of the notation we introduce can be found in \cref{tab:notation}. \cref{ex:preliminary,ex:preliminary part 2} demonstrate some of the notation in a simple extensive-form game.
\begin{table}[htp]
    \begin{tabular}{p{1.5cm}p{11.5cm}}
    \toprule
        \centering\textbf{Symbol} & \textbf{Description}\\
    \midrule
        \centering$\Hist$ & Set of nodes of the game tree.\\
        \centering$\Hist^{(i)}$& Set of nodes at which Player~$i$ acts.\\
        \centering $\A(h)$ & Actions available to the player acting at $h \in \Hist$ (empty set if $h$ is a terminal node).\\
        \centering$\cJ$ & Information partition of Player~$i$.\\
        \centering$\A(j)$ & Set of actions available at any node in the information set $j$.\\
    \midrule
        \centering$\Z$ & Set of terminal nodes (leaves of the game tree).\\
        \centering$\ut(z)$ & Payoff of Player~$i$ at terminal node $z \in \Z$.\\
        \centering$\pc$ & Product of probabilities of all the stochastic events on the path from the root to terminal node $z\in\Z$.\\
    \midrule
        \centering$\Seqs$ & Set of sequences of Player~$i$, defined as $\Seqs \defeq \{(j,a) : j \in \cJ, a \in A_j\} \cup \{\emptyseq\}$,\\
        \centering$\emptyseq$ & where the special element $\emptyseq$ is called the \emph{empty sequence}.\\
                \centering$\Seqs_*$ & Set of sequences of Player~$i$, excluding the empty sequence $\emptyseq$.\\
        \centering$\sigma^{(i)}(z)$ & Last sequence of Player~$i$ encountered on the path from the root to node $z\in\Z$.\\
        \centering$\sigma^{(i)}(j)$ & Last sequence of Player~$i$ on the path from the root to any node in $j \in \cJ$.\\
    \midrule
        \centering$j' \prec j$ & Information set $j\in\cJ$ is an ancestor of $j'\in\cJ$, that is, there exists a path in the game tree connecting a node $h\in j$ to some node $h'\in j'$. \\
        \centering$\sigma\prec\sigma'$ & Sequence $\sigma$ precedes sequence $\sigma'$, where $\sigma,\sigma'$ belong to the same player. \\
        \centering $\sigma\succeq j$ & Sequence $\sigma=(j',a')$ is such that $j' \succeq j$.\\
        \centering $\Seqs_j$ & Sequences at $j \in \cJ$ and all of its descendants, $\Seqs_j\defeq \{\sigma\in\Seqs: \sigma \succeq j\}$. \\
    \midrule
        \centering $\Q$ & Sequence-form strategies of Player~$i$ (\cref{def:sequence_form_polytope}).\\
        \centering $\Q_j$ & Sequence-form strategies for the subtree\linkfootnote{fn:subtree} rooted at $j \in \cJ$ (\cref{def:Qj}).\\
        \centering $\Pure$ & Deterministic sequence-form strategies of Player~$i$.\\
        \centering $\Pure_j$ & Deterministic sequence-form strategies for the subtree\linkfootnote{fn:subtree} rooted at $j \in \cJ$.\\
        \centering $\Pi$ & Set of joint deterministic sequence-form strategies, $\Pi\defeq\bigtimes_{i\in[n]}\Pure$.\\
    \bottomrule
    \end{tabular}
    \caption{Summary of game-theoretic notation used in this article.}
    \label{tab:notation}
\end{table}

An extensive-form game is played on an oriented rooted \emph{game tree}. We denote by $\Hist$ the set of nodes of the game tree. Each node $h\in\Hist$ that is not a leaf of the game tree is called a \emph{decision node}, and has an associated player that acts at that node. In an $n$-player extensive-form game, the set of valid players is the set $[n]\cup\{c\}$, where $c$ denotes the \emph{chance player}---a fictitious player that selects actions according to a known fixed probability distribution and models exogenous stochasticity of the environment (for example, a roll of a dice or a drawing a card from a deck). The player that acts at $h$ is free to pick any one of the actions $\A(h)$ that are available at $h$. For each possible action $a\in\A(h)$, an edge connects $h$ to the node to which the game transitions whenever action $a$ is picked at $h$. Given a player~$i \in [n]\cup\{c\}$, we denote with $\Hist^{(i)} \subseteq \Hist$ the set of all decision nodes that belong to Player~$i$.

Leaves of the game tree are called \emph{terminal nodes}, and represent the outcomes of the game. As such, they are not associated with any acting player, and the set of actions is conventionally set to the empty set. The set of all terminal nodes in the game is denoted with the letter $\Z$. So, the set of all nodes in the game tree is the disjoint union $\Hist = \Hist^{(1)} \cup \dots \cup \Hist^{(n)}\cup \Z$. When the game transitions to a terminal node~$z\in\Z$, payoffs are assigned to each of the non-chance players by the set of functions $\{\ut:\Z\to\mathbb{R}\}_{i\in[n]}$.
Furthermore, we let $p^{(c)} : \Z\to (0,1)$ denote the function assigning each terminal node $z\in\Z$ to the product of probabilities of chance moves encountered on the path from the root of the game tree to $z$.

\subsubsection{Imperfect information}
To model imperfect information, the nodes $\Hist^{(i)}$ of each player $i\in[n]$ are partitioned into a set $\cJ$ of groups, called \emph{information sets}. Each information set $j\in\cJ$ groups together nodes that Player~$i$ cannot distinguish between. Since a player always knows what actions are available at a decision node, any two nodes $h,h'\in j$ must have the same action set, that is, $\A(h) = \A(h')$. For that reason, we slightly overload notation and write $\A(j)$ to mean the set of actions available at any node that belongs to information set $j$.

As is standard in the literature, we assume that the extensive-form game has \emph{perfect recall}, that is, information sets are such that no player forgets information once acquired. 
An immediate consequence of perfect recall is that, for any player $i\in[n]$ and any two nodes $h,h'$ in the same information set $j \in \cJ$, the sequence of Player~$i$'s actions encountered along the path from the root to $h$ and from the root to $h'$ must coincide (or otherwise Player~$i$ would be able to distinguish among the nodes, since the player remembers all of the actions they played in the past). This suggests the following partial ordering $\prec$ on the set $\cJ$: we write $j\prec j'$---and say that $j\in\cJ$ is an \emph{ancestor} of $j'\in\cJ$ or equivalently that $j'$ is a \emph{descendant} of $j$---if there exist nodes $h'\in j'$ and $h\in j$ such that the path from the root of the game tree to $h'$ passes through $h$. 

It is a well-known consequence of perfect recall that the partially ordered set $(\cJ,\prec)$ is a forest for any player $i\in[n]$, in the precise sense that, given any information set $j\in\cJ$, the set of all of its predecessors forms a chain (that is, it is well-ordered by $\prec$).

\subsubsection{Sequences}\label{sec:sequences}
For any player $i\in[n]$, and given an information set $j\in\cJ$ and an action $a\in\A(j)$, we denote as $\sigma=(j,a)$ the sequence of Player~$i$'s actions encountered on the path from the root of the game tree down to action $a$ (included) at any node in information set $j$.
In perfect-recall extensive-form games, such sequence is guaranteed to be uniquely determined because paths that reach decision nodes belonging to the same information set identify the same sequence of Player~$i$'s actions. 
A special element $\emptyseq$ denotes the \emph{empty sequence} of Player~$i$.
Then, the set of Player~$i$'s sequences is defined as 
\[
    \Seqs\defeq\mleft\{ (j,a): j\in\cJ, a\in\A(j)\mright\}\cup \mleft\{\emptyseq\mright\}.
\]
Moreover, we let  $\Seqs_* \defeq \Seqs \setminus \{\emptyseq\}$ be the set of all sequences of Player~$i$ other than the empty one.

Given a node $h\in\Hist$, we denote by $\sigma^{(i)}(j)\in\Seqs$ the last sequence (information set-action pair) of Player~$i$ encountered on the path from the the root of the game tree to any node in $j$. If Player~$i$ does not act before $h$, then $\sigma^{(i)}(h)$ is set to the empty sequence $\emptyseq$. 
If $\sigma^{(i)}(j) = \emptyseq$ we say that information set $j$ is a {\em root information set} of Player~$i$, while, whenever $\sigma^{(i)}(j)=(j',a)$, we say that information set $j$ is {\em immediately reachable} from sequence $\sigma^{(i)}(j)$, because $j'\prec j$ and Player~$i$ does not need to take other actions after choosing $a$ at $j'$ in order to reach $j$. 
Analogously, for all $z\in\Z$, we define $\sigma^{(i)}(z)\in\Seqs$ as the last sequence of Player~$i$'s actions encountered the path from the root of the game tree to terminal node $z$ (notice that $\sigma^{(i)}(z) = \emptyseq$ whenever Player~$i$ never plays on the path from the root to $z$).

Just like information sets, there exists a natural partial ordering on sequences, which we also denote with the same symbol $\prec$. For every $i \in [n]$ and any pair of sequences $\sigma,\sigma' \in \Seqs$, the relation $\sigma\prec\sigma'$ holds if $\sigma=\emptyseq\neq \sigma'$, or if the sequences are of the form $\sigma=(j,a),\sigma'=(j',a')$, and the set of Player $i$'s actions encountered on the path from the root of the tree to any node in~$j'$ includes playing action~$a$ at one of information set~$j$'s nodes. As for information sets, it is a direct consequence of the perfect recall assumption that the partially ordered set $(\Seqs, \prec)$ is a forest.
%
%
Finally, we introduce the overloaded notation $\sigma\succeq j$ (or equivalently $j \preceq \sigma$), defined for any player $i \in [n]$, information set $j\in\cJ$, and sequence $\sigma\in\Seqs$, to mean that the sequence of Player~$i$ 's actions that is denoted by $\sigma$ must lead the player to pass through (some node in) $j$; formally $\sigma = (j',a') \in \Seqs_* \land j'\succeq j$.
With that, we let $\Seqs_j \defeq\{\sigma\in\Seqs: \sigma\succeq j\} \subseteq\Seqs$ be the set of Player~$i$'s sequences that terminate at $j$ or any of its descendant information sets.

\begin{example}\label{ex:preliminary}
    To illustrate some of the concepts and notation described so far, we consider the simple two-player extensive-form game in \cref{fig:preliminary_example}, in which black round nodes belong to Player~$1$, and white round nodes belong to Player~$2$. The gray clusters of nodes identify the information sets. Since we chose different action numbers for different information sets, there exists a one-to-one correspondence between actions and sequences, and we will sometimes refer to sequences using the corresponding action number. For example, we will sometimes refer to sequence ``\seq{3}'' to mean sequence $(\textsc{b},\seq{3})$, sequence ``\seq{8}'' to mean sequence $(\textsc{d},\seq{8})$, \emph{et cetera}.
    Player~$1$ has four information sets---denoted \textsc{a}, \textsc{b}, \textsc{c}, and \textsc{d}---with two actions each. Player~$2$ only has two information sets, \textsc{r} and \textsc{s}, each with two actions. Information set \textsc{d} of Player~$1$ contains two nodes, and models Player~$1$'s lack of knowledge of the action taken by Player 2 at information set \textsc{s}. The partial ordering between information sets for Player~$1$ is $\textsc{a}\prec\textsc{b},\textsc{a}\prec\textsc{c},\textsc{a}\prec\textsc{d}$.
    Moreover, we have that $\sigma^{(1)}(\textsc{a})=\emptyseq$, $\sigma^{(1)}(\textsc{b})=\sigma^{(1)}(\textsc{c})=\seq{1}$, and  $\sigma^{(1)}(\textsc{d})=\seq{2}$. For the terminal node $z$ in the picture, $\sigma^{(1)}(z)=\seq{3}$.
    Finally, we have that $\Sigma^{(1)}_{\textsc{a}}=\Sigma_*^{(1)}=\{\seq{1},\seq{2},\dots,\seq{8}\}$, $\Sigma^{(1)}_{\textsc{b}}=\{\seq{3},\seq{4}\}$, $\Sigma^{(1)}_{\textsc{c}}=\{\seq{5},\seq{6}\}$, and $\Sigma^{(1)}_{\textsc{d}}=\{\seq{7},\seq{8}\}$.
    \begin{figure}[th]
        \centering
        \raisebox{1cm}{\begin{tikzpicture}[>=latex',baseline=0pt,scale=.98]
    \def\done{.8*1.6}
    \def\dtwo{.40*1.6}
    \def\dleaf{.22*1.6}
    \def\dvert{-.8*1.2}

    \node[fill=black,draw=black,circle,inner sep=.5mm] (A) at (0, 0) {};
    \node[fill=white,draw=black,circle,inner sep=.5mm] (X) at ($(-\done,\dvert)$) {};
    \node[fill=white,draw=black,circle,inner sep=.5mm] (Y) at ($(\done,\dvert)$) {};
    \node[fill=black,draw=black,circle,inner sep=.5mm] (B) at ($(X) + (-\dtwo, \dvert)$) {};
    \node[fill=black,draw=black,circle,inner sep=.5mm] (C) at ($(X) + (\dtwo, \dvert)$) {};
    \node[fill=white,draw=black,inner sep=.6mm] (l1) at ($(B) + (-\dleaf, \dvert)$) {};
    \node[fill=white,draw=black,inner sep=.6mm] (l2) at ($(B) + (\dleaf, \dvert)$) {};
    \node[fill=white,draw=black,inner sep=.6mm] (l3) at ($(C) + (-\dleaf, \dvert)$) {};
    \node[fill=white,draw=black,inner sep=.6mm] (l4) at ($(C) + (\dleaf, \dvert)$) {};
    
    \node[fill=black,draw=black,circle,inner sep=.5mm] (D1) at ($(Y) + (-\dtwo, \dvert)$) {};
    \node[fill=black,draw=black,circle,inner sep=.5mm] (D2) at ($(Y) + (\dtwo, \dvert)$) {};
    \node[fill=white,draw=black,inner sep=.6mm] (l5) at ($(D1) + (-\dleaf, \dvert)$) {};
    \node[fill=white,draw=black,inner sep=.6mm] (l7) at ($(D1) + (\dleaf, \dvert)$) {};
    \node[fill=white,draw=black,inner sep=.6mm] (l8) at ($(D2) + (-\dleaf, \dvert)$) {};
    \node[fill=white,draw=black,inner sep=.6mm] (l10) at ($(D2) + (\dleaf, \dvert)$) {};

    \draw[semithick] (A) edge[->] node[fill=white,inner sep=.9] {\small\seq{1}} (X);
    \draw[->,semithick] (A) --node[fill=white,inner sep=.9] {\small\seq{2}} (Y);
    \draw[->,semithick] (B) --node[fill=white,inner sep=.9] {\small\seq{3}} (l1);
    \draw[->,semithick] (B) --node[fill=white,inner sep=.9] {\small\seq{4}} (l2);
    \draw[->,semithick] (C) --node[fill=white,inner sep=.9] {\small\seq{5}} (l3);
    \draw[->,semithick] (C) --node[fill=white,inner sep=.9] {\small\seq{6}} (l4);
    \draw[->,semithick] (D1) --node[fill=white,inner xsep=0,inner ysep=.9] {\small\seq{7}} (l5);
    \draw[->,semithick] (D1) --node[fill=white,inner xsep=0,inner ysep=.9,xshift=.6] {\small\seq{8}} (l7);
    \draw[->,semithick] (D2) --node[fill=white,inner xsep=0,inner ysep=.9] {\small\seq{7}} (l8);
    \draw[->,semithick] (D2) --node[fill=white,inner xsep=0,inner ysep=.9,xshift=.6] {\small\seq{8}} (l10);
    \draw[->,semithick] (X) -- (B);
    \draw[->,semithick] (X) -- (C);
    \draw[->,semithick] (Y) -- (D1);
    \draw[->,semithick] (Y) -- (D2);
    
    \draw[black!60!white] (X) circle (.2);
    \node[black!60!white]  at ($(X) + (-.4, 0)$) {\textsc{r}};
    \draw[black!60!white] (Y) circle (.2);
    \node[black!60!white]  at ($(Y) + (.4, 0)$) {\textsc{s}};
    
    \draw[black!60!white] (A) circle (.2);
    \node[black!60!white]  at ($(A) + (-.4, 0)$) {\textsc{a}};
    
    \draw[black!60!white] (B) circle (.2);
    \node[black!60!white]  at ($(B) + (-.38, 0)$) {\textsc{b}};
    
    \draw[black!60!white] (C) circle (.2);
    \node[black!60!white]  at ($(C) + (-.38, 0)$) {\textsc{c}};
    
    \draw[black!60!white] ($(D1) + (0, .2)$) arc (90:270:.2);
    \draw[black!60!white] ($(D1) + (0, .2)$) -- ($(D2) + (0, .2)$);
    \draw[black!60!white] ($(D1) + (0, -.2)$) -- ($(D2) + (0, -.2)$);
    \draw[black!60!white] ($(D2) + (0, -.2)$) arc (-90:90:.2);
    \node[black!60!white]  at ($(D2) + (.4, 0)$) {\textsc{d}};
    
    \node[black!60!white]  at ($(l1) + (-.3, 0)$) {$z$};
\end{tikzpicture}}
        \qquad\qquad
        \raisebox{1cm}{\begin{tikzpicture}[baseline=0pt,scale=1]
    \def\done{1.2}
    \def\dtwo{.6}
    \def\dleaf{.33}
    \def\dvert{-1.2}

    \node[decpt] (A) at (0, 0) {};
    \node[decpt] (D) at ($(\done,\dvert)$) {};
    \node[decpt] (B) at ($(-\done, \dvert)$) {};
    \node[decpt] (C) at ($(0, \dvert)$) {};


    \draw[semithick] (B) edge node[above,sloped,fill=white,inner sep=1.5pt]{\small$\succ$} (A);
    \draw[semithick] (A) edge node[above,sloped,fill=white,inner sep=1.5pt]{\small$\prec$} (C);
    \draw[semithick] (A) edge node[above,sloped,fill=white,inner sep=1.5pt]{\small$\prec$} (D);


    \node[black!60!white]  at ($(A) + (-.4, 0)$) {\textsc{a}};
    \node[black!60!white]  at ($(B) + (-.38, 0)$) {\textsc{b}};
    \node[black!60!white]  at ($(C) + (-.38, 0)$) {\textsc{c}};
    \node[black!60!white]  at ($(D) + (.4, 0)$) {\textsc{d}};
    
    \node[gray,anchor=north] at (0,-1.8) {$(\textsc{a} \prec \textsc{b},~~~\textsc{a} \prec \textsc{c},~~~\textsc{a} \prec \textsc{d})$};
\end{tikzpicture}}
        \caption{(Left) Example of an extensive-form game with two players. Black round nodes belong to Player 1, white round nodes belong to Player 2. Small white square nodes represent terminal nodes.  The gray partitions represent the information sets of the game. The numbers on the edges identify each of Player~$1$'s actions. (Right): Forest of information sets of Player~$1$, corresponding to the partially ordered set $(\cJ[1],\prec)$.}
        \label{fig:preliminary_example}
    \end{figure}
\end{example}

\subsubsection{Sequence-form strategies}
Conceptually, a strategy for a player specifies a probability distribution over the actions at each information set for that player. So, perhaps the most intuitive representation of a strategy, called a \emph{behavioral strategy} in the literature, is as a vector that assigns to each information set-action pair $(j,a)\in\Seqs_*$ the probability of picking action $a$ at information set $j$. That representation has a major drawback: the probability of reaching any given terminal node $z \in \Z$ is expressed as the product of several entries in the vector (one per each action on the path from the root of the game tree to $z$), rendering critical quantities---including the expected utility of a player---a non-convex function of the behavioral strategies of the players. As is standard in the literature, to soundly overcome the issue of non-convexity, throughout this article we will exclusively use a different representation of strategies, known as the \emph{sequence-form} representation~\citep{Romanovskii62:Reduction,Koller96:Efficient,Stengel96:Efficient}.

Like behavioral strategies, a \emph{sequence-form strategy}\footnote{Sequence-form strategies are also known under the term \emph{realization plans} in the literature (\emph{e.g.}, \citep{Stengel96:Efficient}). We will not use that latter term in this article.} for Player~$i \in [n]$ is a vector $\q\in\mathbb{R}^{|\Seqs|}_{\ge 0}$. However, unlike behavioral strategies, each entry $\q[(j,a)]$ of a sequence-form strategy $\q$ contains the \emph{product} of the probabilities of playing all of Player~$i$'s actions on the path from the root of the game tree down to action $a$ at information set $j$ included. Furthermore, the entry $\q[\emptyseq]$ corresponding to the empty sequence is defined as the constant value $1$.

To ensure consistency, all sequence-form strategies must satisfy the probability-mass-conservation constraints
\[
\q{}[\emptyseq]=1, \hspace{1cm} \q{}[\sigma^{(i)}(j)]=\sum_{a\in\A(j)}\q{}[(j,a)], \hspace{.3cm}\forall\, j\in\cJ.
\]
The above probability-mass-conservation constraints are linear, and therefore the set of sequence-form strategies is a convex polytope, suggesting the following definition. 

\begin{definition}\label{def:sequence_form_polytope}
The \emph{sequence-form strategy polytope} for Player $i \in [n]$ is the convex polytope
\[
\Q\defeq \mleft\{ \q\in \mathbb{R}^{|\Seqs|}_{\ge 0}: \q{}[\emptyseq]=1\hspace{.3cm}\text{\normalfont and} \hspace{.3cm} \q{}[\sigma^{(i)}(j)]=\sum_{a\in\A(j)}\q{}[(j,a)], \hspace{.1cm}\forall\, j\in\cJ \mright\}.
\]
\end{definition}

As we mentioned in \cref{sec:sequences}, the partially ordered set $(\cJ,\prec)$ is a forest. Thus, it makes sense to consider partial strategies that only specify behavior at an information set $j$ and all of its descendants $j' \succ j$. We make that formal through the following definition.
\begin{definition}\label{def:Qj}
Let $i\in[n]$ be a player and $j\in\cJ$ be an information set for Player~$i$. The \emph{set of sequence-form strategies for the subtree}\footnote{\label{fn:subtree}The term ``subtree'' does not refer to a subtree of the game tree, but rather to a subtree of the partially ordered set $(\cJ,\prec)$. In other words, the term \emph{subtree} here refers to the fact that the quantities are specified only at information set $j$ and all of its descendants.} \emph{rooted at $j$}, denoted $\Q_j$, is the set of all vectors $\q \in \bbR_{\ge 0}^{|\Seqs_j|}$ such that probability-mass-conservation constraints hold at information set $j$ and all of its descendants $j' \succ j$, specifically
\[
    \Q_j\defeq \mleft\{ \q\in\mathbb{R}_{\ge0}^{|\Seqs_j|} : \sum_{a\in\A(j)} \q[(j,a)]=1, \hspace{.3cm}\text{\normalfont and}\hspace{.3cm} \q[\sigma^{(i)}(j')]=\sum_{a\in\A(j')}\q{}[(j',a)] \quad\forall\, j'\succ j \mright\}.
\]
\end{definition}

\subsubsection{Deterministic sequence-form strategies}

Deterministic strategies are those that select, at each information set at which the player acts, exactly one action with probability one. Since the probability mass on each action is either $0$ or $1$, the set of deterministic sequence-form strategies for Player~$i$---which we denote with the capital letter $\Pure$---corresponds exactly with the set of all sequence-form strategies whose components are all either 0 or 1. 
\begin{definition}\label{def:deterministic_sequence_form}
The set of \emph{deterministic sequence-form strategies} for Player $i \in [n]$ is the set
\[
    \Pure\defeq \Q\cap \{0,1\}^{|\Seqs|}.
\]
Similarly, the set of \emph{deterministic sequence-form strategies for the subtree\linkfootnote{fn:subtree} rooted at $j$} is 
\[
    \Pure_j \defeq \Q_j \cap \{0,1\}^{|\Seqs_j|}.
\]
\end{definition}

The set of deterministic sequence-form strategies corresponds one-to-one to the game-theoretic notion of \emph{reduced normal-form strategies} (\emph{e.g.},~\citet[Section 4]{Stengel96:Efficient}). %
Furthermore, Kuhn's Theorem~\cite{kuhn1953} implies that
\[
    \Q = \co\Pure, \Q_j = \co\Pure_j  \qquad\quad\forall\,i\in[n], j\in\cJ.
\]

When it is important to emphasize that an arbitrary sequence-form strategy $\q\in\Q$ (or $\q \in \Q_j$ for some $j\in\cJ$) of Player~$i \in [n]$ need \emph{not} be a deterministic sequence-form strategy, we will say that $\q$ is a \emph{mixed} sequence-form strategy.

Given a sequence-form strategy $\q\in\Q$, it is possible to build an unbiased sampling scheme resulting in a (random) deterministic strategy $\vec{\pi}\in\Pure$ such that $\bbE[\vec{\pi}]=\q$.
A natural unbiased sampling procedure is the following. Start from any root information set of Player~$i$, that is, an information set $j \in \cJ$ such that $\sigma^{(i)}(j) = \emptyseq$. Given any information set $j \in \cJ$, an action $a_j\in\A(j)$ is sampled with probability $\q{}[(j,a_j)]/\q{}[\sigma^{(i)}(j)]$; then, the same procedure is applied recursively to all information sets immediately reachable from sequence $(j,a_j)$, that is, the information sets $j' \in\cJ$ such that $\sigma^{(i)}(j') = (j,a_j)$. The process is repeated for all the root information sets of Player~$i$. 
The final deterministic sequence-form strategy $\vec{\pi}$ is obtained by setting $\vec{\pi}[(j,a_j)]=1$ for each information set $j \in \cJ$ visited during the procedure, and all other entries equal to $0$.

Finally, we denote as $\Pi\defeq\bigtimes_{i\in[n]}\Pure$ the set of \emph{joint} deterministic sequence-form strategies of all the players. Therefore, an element of $\Pi$ is a tuple $\vec{\pi} = (\vec{\pi}^{(1)},\dots,\vec{\pi}^{(n)})$ specifying a deterministic sequence-form strategy $\pure$ for each player~$i \in [n]$.

\begin{example}\label{ex:preliminary part 2}
Continuing \cref{ex:preliminary}, in \cref{fig:seq_strategies} we provide one (mixed) sequence-form strategy $\vec{q}\in\Q[1]$, and five deterministic sequence-form strategies $\{\vec{q}_{\seq{135}},\vec{q}_{\seq{136}},\vec{q}_{\seq{145}},\vec{q}_{\seq{27}},\vec{q}_{\seq{28}}\}\subseteq\Pure[1]$ for the small game in \cref{fig:preliminary_example}~(Left).
One can check that these vectors are valid sequence-form strategies by verifying that the probability-mass-conservation constraints of \cref{def:sequence_form_polytope} hold.
Let us consider the mixed sequence-form strategy $\q$. There, $\q[\seq{1}]=\q[\seq{2}]=0.5$, and therefore Player 1 will select a sequence between $\seq{1}$ and $\seq{2}$ uniformly at random. Suppose Player 1 selects sequence $\seq{1}$. Then, if Player 1 reached information set \textsc{b}, she would select sequences $\seq{3}$ and $\seq{4}$ with probability $0.25 / 0.5=0.5$ each. On the other hand, if Player 1 reached information set \textsc{c}, she would choose sequence $\seq{5}$ with probability $0.1/0.5=0.2$, and sequence $\seq{6}$ with probability $0.4/0.5=0.8$. Analogously, if Player 1 played sequence $\seq{2}$ at information set \textsc{a}, upon reaching information set \textsc{d} she would play sequence $\seq{8}$ with probability $0.5/0.5=1$. As expected, the probability of playing action $a$ at a generic information set $j$ can be obtained by dividing $\q[(j,a)]$ by  $\q[\sigma^{(i)}(j)]$.
As a second example, consider the deterministic sequence-form strategy $\q_{\seq{136}}$. When Player~$1$ plays according to that strategy, she will always choose sequence $\seq{1}$ at information set \textsc{a}, sequence $\seq{3}$ at information set \textsc{b}, and sequence $\seq{6}$ at information set \textsc{c}. It is impossible for the player to reach information set \textsc{d} given her strategy at \textsc{a} and correspondigly $\q_{\seq{136}}[\seq{7}] = \q_{\seq{136}}[\seq{8}] = 0$.

\begin{figure}
\centering
    \setlength{\tabcolsep}{10.5pt}
    \begin{tabular}{p{1.5cm}|p{1.5cm}|p{1.5cm}|p{1.5cm}|p{1.5cm}|p{1.5cm}}
    \toprule
        \hfil $\vec{q}$ \hfil
    & 
        \hfil $\vec{q}_{\seq{135}}$ \hfil
    &
        \hfil $\vec{q}_{\seq{136}}$ \hfil
    &
        \hfil $\vec{q}_{\seq{145}}$ \hfil
    &
        \hfil $\vec{q}_{\seq{27}}$ \hfil
    &
        \hfil $\vec{q}_{\seq{28}}$ \hfil\\
    \midrule
    \!\!\begin{tikzpicture}[baseline=-\the\dimexpr\fontdimen22\textfont2\relax ]
            \tikzset{every left delimiter/.style={xshift=1.5ex},
                     every right delimiter/.style={xshift=-1ex}};
            \matrix [matrix of math nodes,left delimiter=(,right delimiter=),row sep=.007cm,column sep=.007cm,color=black!20](m)
            {
            |[black]| 1.00 \\
            |[black]| 0.50 \\
            |[black]|0.50 \\
            |[black]| 0.25 \\
            |[black]| 0.25 \\
            |[black]| 0.10 \\
            |[black]| 0.40 \\
            0.00 \\
            |[black]| 0.50 \\
            };
            
            \node[left=3pt of m-1-1] (left-0) {\small$\emptyseq$};
            \node[left=3pt of m-2-1] (left-0) {\small$\seq{1}$};
            \node[left=3pt of m-3-1] (left-0) {\small$\seq{2}$};
            \node[left=3pt of m-4-1] (left-0) {\small$\seq{3}$};
            \node[left=3pt of m-5-1] (left-0) {\small$\seq{4}$};
            \node[left=3pt of m-6-1] (left-0) {\small$\seq{5}$};
            \node[left=3pt of m-7-1] (left-0) {\small$\seq{6}$};
            \node[left=3pt of m-8-1] (left-0) {\small$\seq{7}$};
            \node[left=3pt of m-9-1] (left-0) {\small$\seq{8}$};
    \end{tikzpicture}\
    &
    \!\!\begin{tikzpicture}[baseline=-\the\dimexpr\fontdimen22\textfont2\relax ]
            \tikzset{every left delimiter/.style={xshift=1.5ex},
                     every right delimiter/.style={xshift=-1ex}};
            \matrix [matrix of math nodes,left delimiter=(,right delimiter=),row sep=.007cm,column sep=.007cm,color=black!25](m)
            {
            |[black]| 1.0 \\
            |[black]| 1.0 \\
            0.0 \\
            |[black]| 1.0 \\
            0.0 \\
            |[black]| 1.0 \\
            0.0 \\
            0.0 \\
            0.0 \\
            };
            
            \node[left=3pt of m-1-1] (left-0) {\small$\emptyseq$};
            \node[left=3pt of m-2-1] (left-0) {\small$\seq{1}$};
            \node[left=3pt of m-3-1] (left-0) {\small$\seq{2}$};
            \node[left=3pt of m-4-1] (left-0) {\small$\seq{3}$};
            \node[left=3pt of m-5-1] (left-0) {\small$\seq{4}$};
            \node[left=3pt of m-6-1] (left-0) {\small$\seq{5}$};
            \node[left=3pt of m-7-1] (left-0) {\small$\seq{6}$};
            \node[left=3pt of m-8-1] (left-0) {\small$\seq{7}$};
            \node[left=3pt of m-9-1] (left-0) {\small$\seq{8}$};
    \end{tikzpicture}\
    &
    \!\!\begin{tikzpicture}[baseline=-\the\dimexpr\fontdimen22\textfont2\relax ]
            \tikzset{every left delimiter/.style={xshift=1.5ex},
                     every right delimiter/.style={xshift=-1ex}};
            \matrix [matrix of math nodes,left delimiter=(,right delimiter=),row sep=.007cm,column sep=.007cm,color=black!20](m)
            {
            |[black]| 1.0 \\
            |[black]| 1.0 \\
            0.0 \\
            |[black]| 1.0 \\
            0.0 \\
            0.0 \\
            |[black]| 1.0 \\
            0.0 \\
            0.0 \\
            };
            
            \node[left=3pt of m-1-1] (left-0) {\small$\emptyseq$};
            \node[left=3pt of m-2-1] (left-0) {\small$\seq{1}$};
            \node[left=3pt of m-3-1] (left-0) {\small$\seq{2}$};
            \node[left=3pt of m-4-1] (left-0) {\small$\seq{3}$};
            \node[left=3pt of m-5-1] (left-0) {\small$\seq{4}$};
            \node[left=3pt of m-6-1] (left-0) {\small$\seq{5}$};
            \node[left=3pt of m-7-1] (left-0) {\small$\seq{6}$};
            \node[left=3pt of m-8-1] (left-0) {\small$\seq{7}$};
            \node[left=3pt of m-9-1] (left-0) {\small$\seq{8}$};
    \end{tikzpicture}\
    &
    \!\!\begin{tikzpicture}[baseline=-\the\dimexpr\fontdimen22\textfont2\relax ]
            \tikzset{every left delimiter/.style={xshift=1.5ex},
                     every right delimiter/.style={xshift=-1ex}};
            \matrix [matrix of math nodes,left delimiter=(,right delimiter=),row sep=.007cm,column sep=.007cm,color=black!20](m)
            {
            |[black]| 1.0 \\
            |[black]| 1.0 \\
            0.0 \\
            0.0 \\
            |[black]| 1.0 \\
            |[black]| 1.0 \\
            0.0 \\
            0.0 \\
            0.0 \\
            };
            
            \node[left=3pt of m-1-1] (left-0) {\small$\emptyseq$};
            \node[left=3pt of m-2-1] (left-0) {\small$\seq{1}$};
            \node[left=3pt of m-3-1] (left-0) {\small$\seq{2}$};
            \node[left=3pt of m-4-1] (left-0) {\small$\seq{3}$};
            \node[left=3pt of m-5-1] (left-0) {\small$\seq{4}$};
            \node[left=3pt of m-6-1] (left-0) {\small$\seq{5}$};
            \node[left=3pt of m-7-1] (left-0) {\small$\seq{6}$};
            \node[left=3pt of m-8-1] (left-0) {\small$\seq{7}$};
            \node[left=3pt of m-9-1] (left-0) {\small$\seq{8}$};
    \end{tikzpicture}\
    &
    \!\!\begin{tikzpicture}[baseline=-\the\dimexpr\fontdimen22\textfont2\relax ]
            \tikzset{every left delimiter/.style={xshift=1.5ex},
                     every right delimiter/.style={xshift=-1ex}};
            \matrix [matrix of math nodes,left delimiter=(,right delimiter=),row sep=.007cm,column sep=.007cm,color=black!20](m)
            {
            |[black]| 1.0 \\
            0.0 \\
            |[black]| 1.0 \\
            0.0 \\
            0.0 \\
            0.0 \\
            0.0 \\
            |[black]| 1.0 \\
            0.0 \\
            };
            
            \node[left=3pt of m-1-1] (left-0) {\small$\emptyseq$};
            \node[left=3pt of m-2-1] (left-0) {\small$\seq{1}$};
            \node[left=3pt of m-3-1] (left-0) {\small$\seq{2}$};
            \node[left=3pt of m-4-1] (left-0) {\small$\seq{3}$};
            \node[left=3pt of m-5-1] (left-0) {\small$\seq{4}$};
            \node[left=3pt of m-6-1] (left-0) {\small$\seq{5}$};
            \node[left=3pt of m-7-1] (left-0) {\small$\seq{6}$};
            \node[left=3pt of m-8-1] (left-0) {\small$\seq{7}$};
            \node[left=3pt of m-9-1] (left-0) {\small$\seq{8}$};
    \end{tikzpicture}\
    &
    \!\!\begin{tikzpicture}[baseline=-\the\dimexpr\fontdimen22\textfont2\relax ]
            \tikzset{every left delimiter/.style={xshift=1.5ex},
                     every right delimiter/.style={xshift=-1ex}};
            \matrix [matrix of math nodes,left delimiter=(,right delimiter=),row sep=.007cm,column sep=.007cm,color=black!20](m)
            {
            |[black]| 1.0 \\
            0.0 \\
            |[black]| 1.0 \\
            0.0 \\
            0.0 \\
            0.0 \\
            0.0 \\
            0.0 \\
            |[black]| 1.0 \\
            };
            
            \node[left=3pt of m-1-1] (left-0) {\small$\emptyseq$};
            \node[left=3pt of m-2-1] (left-0) {\small$\seq{1}$};
            \node[left=3pt of m-3-1] (left-0) {\small$\seq{2}$};
            \node[left=3pt of m-4-1] (left-0) {\small$\seq{3}$};
            \node[left=3pt of m-5-1] (left-0) {\small$\seq{4}$};
            \node[left=3pt of m-6-1] (left-0) {\small$\seq{5}$};
            \node[left=3pt of m-7-1] (left-0) {\small$\seq{6}$};
            \node[left=3pt of m-8-1] (left-0) {\small$\seq{7}$};
            \node[left=3pt of m-9-1] (left-0) {\small$\seq{8}$};
    \end{tikzpicture}\
    \\
    \bottomrule
    \end{tabular}
    \caption{Examples of sequence-form strategies for Player~$i$ in the game of \cref{fig:preliminary_example} (Left).
    }
    \label{fig:seq_strategies}
\end{figure}
\end{example}

\subsection{Extensive-Form Correlated Equilibrium (EFCE)}

Extensive-form correlated equilibrium has been proposed by~\citet{von2008extensive} as the natural counterpart to (normal-form) correlated equilibrium in extensive-form games.
In an EFCE, before the beginning of the game the mediator draws a recommended action for each of the possible information sets that players may encounter in the game, according to some probability distribution defined over joint reduced normal-form strategies.
These recommendations are not immediately revealed to each player. Instead, the mediator incrementally reveals relevant action recommendations as players reach new information sets.
At any information set, the acting player is free to deviate from the recommended action, but doing so comes at the cost of future recommendations, which are no longer issued if the player deviates.
In an EFCE, the recommended behavior is incentive-compatible for each player, that is, no player is strictly better off ever deviating from any of the mediator’s recommended actions.


Multiple equivalent definitions of EFCE can be given. In this article, we follow the equivalent formulation given by~\citet{farina2019correlation} based on the concept of \emph{trigger agents} introduced by~\citet{Gordon08:No}~and~\citet{dudik2009sampling}. In what follows, we will assume that an extensive-form game has been fixed.
%

\begin{definition}[Trigger agent]
    Let $i\in[n]$ be a player, let $\hat\sigma = (j,a) \in \Seqs_*$, and let $\hat{\vec{\pi}} \in \Pure_j$.
    The $(\hat\sigma,\hat{\vec{\pi}})$-{\em trigger agent} is the agent that plays the game as Player~$i$ according to the following rules.
    \begin{itemize}
        \item If the trigger agent has never been recommended to play action $a$ at information set $j$, the trigger agent will follow whatever recommendation is issued by the mediator. 
        \item When the trigger agent reaches information set $j$ and is recommended to play action $a$, we say that the trigger agent ``gets triggered'' by the {\em trigger sequence} $\hat\sigma = (j,a)$. This means that, from that point on, the trigger agent will disregard the recommendations and play according to the \emph{continuation strategy} $\hat{\vec{\pi}}$ from information set $j$ onward (that is, at $j$ and all of its descendant information sets). 
    \end{itemize}
\end{definition}
An EFCE is a probability distribution $\vec{\mu} \in \Delta^{|\Pi|}$ over $\Pi$ such that for any player $i \in [n]$, trigger sequence $\hat{\sigma} = (j,a) \in \Sigma_*^{(i)}$, and continuation strategy $\hat{\vec{\pi}} \in \Pure_j$, the expected utility of the $(\hat\sigma,\hat{\vec{\pi}})$-trigger agent is \emph{not} strictly greater than the expected utility that Player~$i$ would obtain by always following all of the mediator's recommendations.

In order to turn the above condition into an analytic expression, it is useful to introduce the following additional quantities.
Given a distribution $\vec{\mu} \in \Delta^{|\Pi|}$, we let $r_{\vec{\mu}}(z)$ be the probability that the game ends in terminal node $z \in \Z$ when all players follow recommendations issued by the mediator according to $\vec{\mu}$; in particular, for every $z \in \Z$, it holds:
\[
r_{\vec{\mu}}(z) \defeq \sum_{\substack{ (\pure[1],\dots,\pure[n]) \in \Pi \\ \pure[i][\sigma^{(i)}(z)] = 1 \,\,\, \forall i \in [n] }} \vec{\mu}[(\pure[1],\dots,\pure[n])],
\]
where the summation is over all joint strategies $(\pure[1],\dots,\pure[n]) \in \Pi$ such that terminal node $z$ is reachable when each player $i \in [n]$ plays according to $\pure[i]$.
%
%
%
Additionally, given a trigger sequence $\hat\sigma = (j,a) \in \Seqs_*$ for a player $i \in [n]$ and a continuation strategy $\hat{\vec{\pi}} \in \Pure_j$, we let $r^{(i)}_{\vec{\mu}, \hat{\sigma} \to \hat{\vec{\pi}}}(z)$ be the probability with which the $(\hat\sigma,\hat{\vec{\pi}})$-trigger agent reaches terminal node $z$. In particular, for every terminal node $z \in \Z$ such that $\sigma^{(i)}(z) \succeq j$ it holds that:
\[
r^{(i)}_{\vec{\mu},\,\hat{\sigma} \to \hat{\vec{\pi}}}(z) \defeq \left( \sum_{\substack{ (\pure[1],\dots,\pure[n]) \in \Pi \\ \pure[i'][\sigma^{(i')}(z)] = 1 \,\,\, \forall i'\neq i \\ \pure[i][\hat\sigma] = 1  }} \vec{\mu}[(\pure[1],\dots,\pure[n])] \right) \hat{\vec{\pi}} [ \sigma^{(i)}(z) ] .
\]

We can now state the formal definition of EFCE and approximate EFCE.

\begin{definition}[$\epsilon$-EFCE; EFCE]\label{def:efce}
Given $\epsilon \geq 0$, a probability distribution $\vec{\mu} \in \Delta^{|\Pi|}$ is an \emph{$\epsilon$-approximate EFCE} (or $\epsilon$-EFCE for short) if, for every player $i \in [n]$, trigger sequence $\hat\sigma = (j,a) \in \Seqs_*$, and continuation strategy $\hat{\vec{\pi}} \in \Pure_j$, the expected utility of the $(\hat\sigma,\hat{\vec{\pi}})$-trigger agent is larger than the expected utility that Player~$i$ would obtain by always following all of the mediator's recommendations by at most an amount $\epsilon$. In symbols, 
\begin{equation*}
    \sum_{\substack{z \in Z \\ \sigma^{(i)}(z) \succeq  \hat\sigma}} u^{(i)}(z)\, p_c(z)\, r_{\vec{\mu}}(z) \ge \sum_{\substack{z \in Z\\\sigma^{(i)}(z) \succeq j}} u^{(i)}(z)\, p_c(z)\, r^{(i)}_{\vec{\mu},\, \hat\sigma \to \hat{\vec{\pi}}}(z) - \epsilon.
\end{equation*}
A probability distribution $\vec{\mu} \in \Delta^{|\Pi|}$ is an EFCE if it is a $0$-EFCE.
\end{definition}


\subsection{Regret Minimization and Phi-Regret Minimization}\label{sec:phirm}

In this article we will make heavy use of a mathematical object---one of the core abstractions in the field of online optimization---called a \emph{regret minimizer}.

\begin{definition}\label{def:regret minimizer}
Let $\cX$ be a set. A \emph{regret minimizer for $\cX$} is an abstract model for a decision maker that repeatedly interacts with a black-box environment. At each time $t$, the regret minimizer supports two operations:
\begin{itemize}
    \item \textsc{NextElement} has the effect that the regret minimizer will output an element $\vec{x}^{t} \in \cX$;
    \item $\textsc{ObserveUtility}(\ell^t)$ provides the environment's feedback to the regret minimizer, in the form of a linear utility function $\ell^t : \cX \to \bbR$ that evaluates how good the last-output point $\vec{x}^t$ was.
    The utility function can depend adversarially on the outputs $\vec{x}^1, \dots, \vec{x}^{t-1}$ of the regret minimizer, but not on $\vec{x}^t$.%
\end{itemize}
\end{definition}

Calls to $\textsc{NextElement}$ and $\textsc{ObserveUtility}$ keep alternating to each other: first, the regret minimizer will output a point $\vec{x}^1$, then it will received feedback $\ell^1$ from the environment, then it will output a new point $\vec{x}^2$, and so on.
The decision making encoded by the regret minimizer is \emph{online}, in the sense that at each time $t$, the output of the regret minimizer can depend on the prior outputs $\vec{x}^1, \dots,\vec{x}^{t-1}$ and corresponding observed utility functions $\ell^1,\dots,\ell^{t-1}$, but no information about future losses is available. The objective for the regret minimizer is to output points so that the \emph{cumulative regret} (or simply \emph{regret})
\begin{equation}\label{eq:cum regret}
    R^T \defeq \max_{\vec{x}^*\in\cX} \sum_{t=1}^T \Big( \ell^t(\vec{x}^*) - \ell^t(\vec{x}^t) \Big)
\end{equation}
grows asymptotically sublinearly in the time $T$.

Many regret minimizers that guarantee a cumulative regret $R^T = O(\sqrt{T})$ at all times $T$ for any convex and compact set $\cX$ are known in the literature (see, \emph{e.g.}, \citet{Cesa-Bianchi06:Prediction}).

A \emph{phi-regret minimizer} is an extension of the concept of a regret minimizer introduced by \citet{Stoltz07:Learning}, building on previous work by \citet{Greenwald03:General}.

\begin{definition}\label{def:phi regret minimizer}
Given a set $\cX$ of points and a set $\Phi$ of linear transformations $\phi:\cX\to\cX$, a \emph{phi-regret minimizer relative to $\Phi$ for the set $\cX$}---abbreviated in the term \emph{``$\Phi$-regret minimizer''}---is an object with the same semantics and operations of a regret minimizer, but whose quality metric is its \emph{cumulative phi-regret relative to $\Phi$} (or simply \emph{phi-regret relative to $\Phi$}, or \emph{$\Phi$-regret} for short)
\begin{equation}\label{eq:cum phi regret}
    R^T \defeq \max_{\phi^* \in \Phi} \sum_{t=1}^T \Big( \ell^t(\phi^*(\vec{x}^t)) - \ell^t(\vec{x}^t) \Big),
\end{equation}
instead of its cumulative regret. Once again, the goal for a phi-regret minimizer is to guarantee that its phi-regret grows asymptotically sublinearly as time $T$ increases.
\end{definition}

In the special case of the set of constant transformations $\Phi^\text{const} \defeq \{\cX \ni \vec{x} \mapsto \hat{\vec{x}}: \hat{\vec{x}}\in\cX\}$, the definition of cumulative phi-regret~\eqref{eq:cum phi regret} reduces to that of cumulative regret given in~\eqref{eq:cum regret}. So, a regret minimizer is a special case of a phi-regret minimizer.

A general construction by~\citet{Gordon08:No} gives a way to construct a $\Phi$-regret minimizer for $\cX$ starting from any regret minimizer (in the sense of \cref{def:regret minimizer}) for the set of functions $\Phi$. Specifically, let $\cR_\Phi$ be a 
regret minimizer for the set of transformations $\Phi$ whose cumulative regret grows sublinearly, and assume that every $\phi\in\Phi$ admits a fixed point $\vec{x} = \phi(\vec{x})$. Then, a $\Phi$-regret minimizer $\cR$ can be constructed starting from $\cR_\Phi$ as follows:
\begin{itemize}
    \item Each call to $\cR.\textsc{NextElement}$ first calls \textsc{NextElement} on $\cR_\Phi$ to obtain the next transformation $\phi^t$. Then, a fixed point $\vec{x}^t = \phi^t(\vec{x}^t)$ is returned.
    \item Each call to $\cR.\textsc{ObserveUtility}(\ell^t)$ with linear utility function $\ell^t$ constructs the linear utility function $L^t: \phi \mapsto \ell^t(\phi(\vec{x}^t))$, where $\vec{x}^t$ is the last-output strategy, and passes it to $\cR_\Phi$ by calling $\cR_\Phi.\textsc{ObserveUtility}(L^t)$.
\end{itemize}
The proof of correctness of the above construction is deceptively simple, and we recall it next.
Since $\cR_\Phi$ outputs transformations $\phi^1,\phi^2,\dots$ and receives utilities $\phi \mapsto \ell^1(\phi(\vec{x}^1)), \phi\mapsto\ell^2(\phi(\vec{x}^2)), \dots$, its cumulative regret $R_\Phi^T$ is
\[
    R_\Phi^T = \max_{\phi^*\in\Phi} \sum_{t=1}^T \Big( \ell^t(\phi^*(\vec{x}^t)) - \ell^t(\phi^t(\vec{x}^t)) \Big).
\]
Now, since $\vec{x}^t$ is a fixed point of $\phi^t$, $\phi^t(\vec{x}^t) = \vec{x}^t$, and therefore we can write
\begin{align}
    R_\Phi^T = \max_{\phi^*\in\Phi} \sum_{t=1}^T \Big( \ell^t(\phi^*(\vec{x}^t)) - \ell^t(\vec{x}^t) \Big),
\end{align}
where the right-hand side is exactly the $\Phi$-regret $R^T$ cumulated by $\cR$, as defined in~\eqref{eq:cum phi regret}. So, because the regret cumulated by $\cR_\Phi$ grows sublinearly by hypothesis, then so does the $\Phi$-regret cumulated by $\cR$.
\section{Trigger Regret and Relationship with EFCE}\label{sec:trigger regret}

In this section, we introduce the notion of \emph{trigger deviation function}, building on an idea by \citet[Section 3]{Gordon08:No}. We also introduce a connected notion of \emph{trigger regret minimization}, which is an instance of phi-regret minimization as recalled in \cref{sec:phirm}. The central result of this section, \cref{thm:empirical efce}, establishes a formal connection between EFCE and agents that minimize their trigger regret, thereby extending and generalizing the classic connection between correlated equilibrium and no-internal-regret in normal-form games~\citep{hart2000simple} to the extensive-form game counterpart. 

\begin{definition}[Trigger deviation function]\label{def:tdev}
Let $\hat\sigma = (j,a) \in \Seqs_*$, and $\hatpure[]\in\Pure_j$. We call ``\emph{trigger deviation function corresponding to trigger $\hat{\sigma}$ and continuation strategy $\hatpure[]$}'', any linear function $f : \bbR^{|\Seqs|}\to\bbR^{|\Seqs|}$ whose effect on deterministic sequence-form strategies is as follows:
\begin{itemize}
    \item all strategies $\pure[]\in\Pure$ that do not prescribe the sequence $\hat\sigma$ are left unmodified. In symbols,
    \begin{equation}\label{eq:trig rule A}
        f(\pure[]) = \pure[] \quad\qquad \forall\ \pure[] \in \Pure: \pure[]{}[\hat\sigma] = 0;
    \end{equation}
    \item all strategies $\pure[] \in\Pure$ that prescribe sequence $\hat\sigma = (j,a)$ are modified so that the behavior at $j$ and all of its descendants is replaced with the behavior prescribed by the continuation strategy $\hatpure[]$. In symbols,
    \begin{equation}\label{eq:trig rule B}
        f(\pure[])[\sigma] = \begin{cases}
            \pure[]{}[\sigma] & \text{if }\sigma \not\succeq j\\
            \hatpure[]{}[\sigma] & \text{if }\sigma\succeq j,\\
        \end{cases} \quad\qquad \forall\ \sigma \in \Seqs, \pure[] \in \Pure: \pure[]{}[\hat\sigma] = 1.
    \end{equation}
\end{itemize}
\end{definition}

At this stage, it is technically unclear whether a linear function that satisfies \cref{def:tdev} exists for all valid choices of $\hat\sigma$ and $\hatpure[]$. We show that this is indeed the case, by explicitly exhibiting a linear function, which we call the \emph{canonical trigger deviation function}. We start with a definition.

\begin{definition}\label{def:M}
    Let $\hat\sigma = (j,a)\in\Seqs_*$ and $\vec{y}\in\bbR_{\ge0}^{|\Seqs_j|}$. We denote with $\Mdev[i][\hat\sigma][\vec{y}] \in \bbR_{\ge0}^{|\Seqs|\times|\Seqs|}$ the matrix whose entries are defined as
    \begin{equation}\label{eq:Mdev}
        \Mdev[i][\hat\sigma][\vec{y}]{}[\sigma_r, \sigma_c] = \begin{cases}
            1 & \text{if } \sigma_c \not\succeq \hat\sigma \text{ and } \sigma_r = \sigma_c \\
            \vec{y}[\sigma_r] & \text{if } \sigma_c = \hat\sigma \text{ and } \sigma_r \succeq j\\
            0 & \text{otherwise},
        \end{cases}
        \qquad\qquad \forall\ \sigma_r,\sigma_c \in \Seqs.
    \end{equation}
    Furthermore, we denote with the symbol $\tdev[i][\hat\sigma][\vec{y}]$ the linear map
    $
        \bbR^{|\Seqs|}\ni\vec{x} \mapsto \Mdev[i][\hat\sigma][\vec{y}]\,\vec{x}.
    $
\end{definition}

In the following, we will focus on trigger deviation functions defined through the linear mapping of~\cref{eq:Mdev}. We call such deviation functions \emph{canonical trigger deviation functions}. For every player $i\in[n]$, we define $\Ph$ to be the set of all canonical trigger deviation functions. Formally:

\begin{definition}\label{def:canonical tdev}
    Let $\hat\sigma = (j,a) \in \Seqs_*$ and $\hatpure[]\in\Pure_j$. The function $\tdev$ is called
    the ``\emph{canonical trigger deviation function corresponding to trigger $\hat\sigma$ and continuation strategy $\hatpure[]$}''. Furthermore, the set of all canonical trigger deviation functions is denoted with the symbol
    \[
        \Ph \defeq \mleft\{\tdev : \hat\sigma = (j,a)\in\Seqs_*, \hatpure[]\in\Pure_j \mright\}.
    \]
\end{definition}
\begin{lemma}\label{lem:canonical}
    For any $\hat\sigma = (j,a) \in \Seqs_*$ and $\hatpure[]\in\Pure_j$, the linear function $\tdev$ as defined in \cref{def:canonical tdev} is a trigger deviation function in the sense of \cref{def:tdev}.
\end{lemma}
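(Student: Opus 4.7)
Plan: Since both the map $\tdev[i][\hat\sigma][\hatpure[]]$ from \cref{def:canonical tdev} and any trigger deviation function from \cref{def:tdev} are linear endomorphisms of $\bbR^{|\Seqs|}$, it suffices to verify that $\tdev[i][\hat\sigma][\hatpure[]]$ obeys \eqref{eq:trig rule A} and \eqref{eq:trig rule B} on every deterministic sequence-form strategy $\pure[]\in\Pure$. Fix such a $\pure[]$. Reading off the three cases in \cref{eq:Mdev}, the only potentially nonzero entries of row $\sigma_r$ of $\Mdev[i][\hat\sigma][\hatpure[]]$ are the diagonal entry $\vec{M}[\sigma_r,\sigma_r]=1$ (present only when $\sigma_r\not\succeq\hat\sigma$) and the entry $\vec{M}[\sigma_r,\hat\sigma]=\hatpure[]{}[\sigma_r]$ (present only when $\sigma_r\succeq j$), so the candidate image satisfies
\begin{equation*}
    \bigl(\tdev[i][\hat\sigma][\hatpure[]](\pure[])\bigr)[\sigma_r] \;=\; \bbone[\sigma_r\not\succeq\hat\sigma]\,\pure[]{}[\sigma_r] \;+\; \bbone[\sigma_r\succeq j]\,\hatpure[]{}[\sigma_r]\,\pure[]{}[\hat\sigma].
\end{equation*}

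Next, I would invoke the following structural fact, obtained by a straightforward downward induction from the probability-mass-conservation constraint in \cref{def:sequence_form_polytope}: for any $\pure[]\in\Pure$ and any $\sigma\in\Seqs_*$ with $\pure[]{}[\sigma]=0$, one has $\pure[]{}[\sigma']=0$ for every $\sigma'\succ\sigma$. Since a deterministic sequence-form strategy selects at most one action at each information set, this has two useful consequences: (C1) if $\pure[]{}[\hat\sigma]=0$, then $\pure[]{}[\sigma_r]=0$ for all $\sigma_r\succeq\hat\sigma$; (C2) if $\pure[]{}[\hat\sigma]=1$, then $\pure[]{}[(j,a')]=0$ for every action $a'\neq a$ at $j$, and hence $\pure[]{}[\sigma_r]=0$ whenever $\sigma_r\succeq j$ but $\sigma_r\not\succeq\hat\sigma$ (because such a $\sigma_r$ is necessarily a descendant of a sequence $(j,a')$ with $a'\neq a$).

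With these two corollaries in hand, the verification splits on the value of $\pure[]{}[\hat\sigma]$. If $\pure[]{}[\hat\sigma]=0$, the second summand in the displayed expression vanishes, and (C1) turns $\bbone[\sigma_r\not\succeq\hat\sigma]\,\pure[]{}[\sigma_r]$ into $\pure[]{}[\sigma_r]$, yielding \eqref{eq:trig rule A}. If $\pure[]{}[\hat\sigma]=1$, a three-way case split on $\sigma_r$---namely $\sigma_r\not\succeq j$, $\sigma_r\succeq\hat\sigma$, and $\sigma_r\succeq j$ together with $\sigma_r\not\succeq\hat\sigma$---reduces the expression respectively to $\pure[]{}[\sigma_r]$, $\hatpure[]{}[\sigma_r]$, and (using (C2)) again $\hatpure[]{}[\sigma_r]$, which is exactly \eqref{eq:trig rule B}.

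The main obstacle is the third subcase of the second branch: without invoking (C2), the expression $\pure[]{}[\sigma_r] + \hatpure[]{}[\sigma_r]$ does not collapse to the required $\hatpure[]{}[\sigma_r]$, so the whole argument hinges on the fact that a deterministic sequence-form strategy which prescribes $\hat\sigma=(j,a)$ assigns zero mass to \emph{every} sequence reached through $j$ via an action other than $a$. Everything else is routine matrix bookkeeping.
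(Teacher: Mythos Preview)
Your proposal is correct and follows essentially the same approach as the paper: expand the matrix--vector product to obtain the formula $\bbone[\sigma_r\not\succeq\hat\sigma]\,\pure[]{}[\sigma_r] + \bbone[\sigma_r\succeq j]\,\hatpure[]{}[\sigma_r]\,\pure[]{}[\hat\sigma]$, then split on $\pure[]{}[\hat\sigma]\in\{0,1\}$ and use the two structural facts you call (C1) and (C2). The only cosmetic difference is that, in the case $\pure[]{}[\hat\sigma]=1$, the paper packages your three-way split into the single identity $\pure[]{}[\sigma]\,\bbone[\sigma\not\succeq\hat\sigma]=\pure[]{}[\sigma]\,\bbone[\sigma\not\succeq j]$, which is exactly your (C2) rephrased.
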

\begin{proof}
    The proof just amounts to a simple application of several definitions.
    Let $\pure[] \in \Pure$ be an arbitrary deterministic sequence-form strategy. By expanding the matrix-vector multiplication $\Mdev\pure[]$ using the definition~\eqref{eq:Mdev}, we obtain that for all $\sigma\in\Seqs$
    \begin{align*}
        (\Mdev\,\pure[])[\sigma] &= \pure[][\sigma]\bbone[\sigma \not\succeq\hat\sigma] + \hatpure[][\sigma] \pure[][\hat\sigma] \bbone[\sigma\succeq j].\numberthis{eq:matrix vector}
    \end{align*}
    There are only two possibilities:
    \begin{itemize}
        \item If $\pure[][\hat\sigma] = 0$, then~\eqref{eq:matrix vector} simplifies to
        \[
            (\Mdev\,\pure[])[\sigma] = \begin{cases}
                \pure[][\sigma] & \text{if } \sigma\not\succeq\hat\sigma\\
                0 & \text{otherwise}.
            \end{cases}
        \]
        Since by case hypothesis the probability of the sequence of actions from the root of the game tree down to $\hat\sigma$ is zero, then necessarily the probability of any longer sequence of actions $\sigma \succeq \hat\sigma$ must be zero as well, that is $\pure[][\sigma] = 0$ for all $\sigma\succeq\hat\sigma$. So, $\Mdev\,\pure[] = \pure[]$ and~\eqref{eq:trig rule A} holds.
        
        \item Conversely, assume $\pure[][\hat\sigma]=1$. This means that at information set $j \in \cJ$ action $a$ is selected (with probability $1$), and therefore $\pure[][\sigma] = 0$ for all $\sigma = (j,a'): a' \in A_j, a' \neq a$. This means that $\pure[][\sigma]\bbone[\sigma\not\succeq\hat\sigma] = \pure[][\sigma]\bbone[\sigma\not\succeq j]$ for all $\sigma\in\Seqs$. Substituting that equality into~\eqref{eq:matrix vector} gives Equation~\eqref{eq:trig rule B}, as we wanted to show.\qedhere
    \end{itemize} 
\end{proof}

In the following example we show how linear mappings of canonical trigger deviation functions operate. In particular, we show how they modify some deterministic sequence-form strategies on a simple extensive-form game.

\begin{example}\label{ex:deviations}
We build on the small extensive-form game of~\cref{fig:preliminary_example}, and the sequence-form strategies defined in~\cref{ex:preliminary}, to provide some concrete intuition behind canonical trigger deviation functions as defined in~\cref{def:canonical tdev}.

\begin{figure}[t]
    \begin{tabular}{p{4.25cm}|p{4.25cm}|p{4.25cm}}
    \toprule
        \hfil $\vec{M}_a$ \hfil
    & 
        \hfil $\vec{M}_b$ \hfil
    &
        \hfil $\vec{M}_c$ \hfil
    \\[1mm]
        \hfil Trigger sequence: \seq{1} \hfil 
    &
        \hfil Trigger sequence: \seq{2}\hfil 
    &
        \hfil Trigger sequence: \seq{3}\hfil 
    \\
        \hfil Continuation: \seq{2}, \seq{7} \hfil 
    &
        \hfil Continuation: \seq{1}, \seq{3}, \seq{5}\hfil 
    &
        \hfil Continuation: \seq{4}\hfil 
    \\
    \midrule
        \!\!\begin{tikzpicture}[baseline=-\the\dimexpr\fontdimen22\textfont2\relax ]
            \tikzset{every left delimiter/.style={xshift=1.5ex},
                     every right delimiter/.style={xshift=-1ex}};
            \matrix [matrix of math nodes,left delimiter=(,right delimiter=),row sep=.007cm,column sep=.007cm,color=black!25](m)
            {
            |[black]| 1 &           0 & 0 & 0 & 0 & 0 & 0 & 0 & 0 \\
                      0 & |[black]| 0 & 0 & 0 & 0 & 0 & 0 & 0 & 0 \\
                      0 & |[black]| 1 & |[black]| 1 & 0 & 0 & 0 & 0 & 0 & 0 \\
                      0 &           0 & 0 & |[black]| 0 & |[black]| 0 & |[black]| 0 & |[black]| 0 & 0 & 0 \\
                      0 &           0 & 0 & |[black]| 0 & |[black]| 0 & |[black]| 0 & |[black]| 0 & 0 & 0 \\
                      0 &           0 & 0 & |[black]| 0 & |[black]| 0 & |[black]| 0 & |[black]| 0 & 0 & 0 \\
                      0 &           0 & 0 & |[black]| 0 & |[black]| 0 & |[black]| 0 & |[black]| 0 & 0 & 0 \\
                      0 & |[black]| 1 & 0 & 0 & 0 & 0 & 0 & |[black]| 1 & 0 \\
                      0 & |[black]| 0 & 0 & 0 & 0 & 0 & 0 & 0 & |[black]| 1 \\
            };
            
            \node[left=3pt of m-1-1] (left-0) {\small$\emptyseq$};
            \node[left=3pt of m-2-1] (left-1) {\small\seq{1}};
            \node[left=3pt of m-3-1] (left-2) {\small\seq{2}};
            \node[left=3pt of m-4-1] (left-3) {\small\seq{3}};
            \node[left=3pt of m-5-1] (left-4) {\small\seq{4}};
            \node[left=3pt of m-6-1] (left-5) {\small\seq{5}};
            \node[left=3pt of m-7-1] (left-6) {\small\seq{6}};
            \node[left=3pt of m-8-1] (left-7) {\small\seq{7}};
            \node[left=3pt of m-9-1] (left-8) {\small\seq{8}};
            
            \node[above=3pt of m-1-1] (top-0) {\small$\emptyseq$};
            \node[above=3pt of m-1-2] (top-1) {\small\seq{1}};
            \node[above=3pt of m-1-3] (top-2) {\small\seq{2}};
            \node[above=3pt of m-1-4] (top-3) {\small\seq{3}};
            \node[above=3pt of m-1-5] (top-4) {\small\seq{4}};
            \node[above=3pt of m-1-6] (top-5) {\small\seq{5}};
            \node[above=3pt of m-1-7] (top-6) {\small\seq{6}};
            \node[above=3pt of m-1-8] (top-7) {\small\seq{7}};
            \node[above=3pt of m-1-9] (top-8) {\small\seq{8}};
            
            \begin{pgfonlayer}{back}
            \fhighlight[black!10!white]{m-4-4}{m-7-7}
            \fhighlight[black!30!white]{m-2-2}{m-3-2}
            \fhighlight[black!25!white]{m-8-2}{m-9-2}
            \end{pgfonlayer}
        \end{tikzpicture}
    &%
        \!\!\begin{tikzpicture}[baseline=-\the\dimexpr\fontdimen22\textfont2\relax ]
            \tikzset{every left delimiter/.style={xshift=1.5ex},
                     every right delimiter/.style={xshift=-1ex}};
        \matrix [matrix of math nodes,left delimiter=(,right delimiter=),row sep=.007cm,column sep=.007cm,color=black!25](m)
        {
        |[black]| 1 & 0 & 0 & 0 & 0 & 0 & 0 & 0 & 0 \\
                  0 & |[black]| 1 & |[black]| 1 & 0 & 0 & 0 & 0 & 0 & 0 \\
                  0 & 0 & |[black]| 0 & 0 & 0 & 0 & 0 & 0 & 0 \\
                  0 & 0 & |[black]| 1 & |[black]| 1 & 0 & 0 & 0 & 0 & 0 \\
                  0 & 0 & |[black]| 0 & 0 & |[black]| 1 & 0 & 0 & 0 & 0 \\
                  0 & 0 & |[black]| 1 & 0 & 0 & |[black]| 1 & 0 & 0 & 0 \\
                  0 & 0 & |[black]| 0 & 0 & 0 & 0 & |[black]| 1 & 0 & 0 \\
                  0 & 0 & 0 & 0 & 0 & 0 & 0 &  |[black]| 0 & |[black]| 0 \\
                  0 & 0 & 0 & 0 & 0 & 0 & 0 & |[black]| 0 & |[black]| 0 \\
        };
        
        \node[left=3pt of m-1-1] (left-0) {\small$\emptyseq$};
        \node[left=3pt of m-2-1] (left-1) {\small\seq{1}};
        \node[left=3pt of m-3-1] (left-2) {\small\seq{2}};
        \node[left=3pt of m-4-1] (left-3) {\small\seq{3}};
        \node[left=3pt of m-5-1] (left-4) {\small\seq{4}};
        \node[left=3pt of m-6-1] (left-5) {\small\seq{5}};
        \node[left=3pt of m-7-1] (left-6) {\small\seq{6}};
        \node[left=3pt of m-8-1] (left-7) {\small\seq{7}};
        \node[left=3pt of m-9-1] (left-8) {\small\seq{8}};
        
        \node[above=3pt of m-1-1] (top-0) {\small$\emptyseq$};
        \node[above=3pt of m-1-2] (top-1) {\small\seq{1}};
        \node[above=3pt of m-1-3] (top-2) {\small\seq{2}};
        \node[above=3pt of m-1-4] (top-3) {\small\seq{3}};
        \node[above=3pt of m-1-5] (top-4) {\small\seq{4}};
        \node[above=3pt of m-1-6] (top-5) {\small\seq{5}};
        \node[above=3pt of m-1-7] (top-6) {\small\seq{6}};
        \node[above=3pt of m-1-8] (top-7) {\small\seq{7}};
        \node[above=3pt of m-1-9] (top-8) {\small\seq{8}};
        
        \begin{pgfonlayer}{back}
        \fhighlight[black!10!white]{m-8-8}{m-9-9}
        \fhighlight[black!25!white]{m-2-3}{m-3-3}
        \fhighlight[black!25!white]{m-4-3}{m-7-3}
        \end{pgfonlayer}
        \end{tikzpicture}\
    &%
        \!\!\begin{tikzpicture}[baseline=-\the\dimexpr\fontdimen22\textfont2\relax ]
            \tikzset{every left delimiter/.style={xshift=1.5ex},
                     every right delimiter/.style={xshift=-1ex}};
        \matrix [matrix of math nodes,left delimiter=(,right delimiter=),row sep=.007cm,column sep=.007cm,color=black!25](m)
        {
        |[black]| 1 & 0 & 0 & 0 & 0 & 0 & 0 & 0 & 0 \\
        0 & |[black]| 1 & 0 & 0 & 0 & 0 & 0 & 0 & 0 \\
        0 & 0 & |[black]| 1 & 0 & 0 & 0 & 0 & 0 & 0 \\
        0 & 0 & 0 & |[black]| 0 & 0 & 0 & 0 & 0 & 0 \\
        0 & 0 & 0 & |[black]| 1 & |[black]| 1 & 0 & 0 & 0 & 0 \\
        0 & 0 & 0 & 0 & 0 & |[black]| 1 & 0 & 0 & 0 \\
        0 & 0 & 0 & 0 & 0 & 0 & |[black]| 1 & 0 & 0 \\
        0 & 0 & 0 & 0 & 0 & 0 & 0 &  |[black]| 1 & 0 \\
        0 & 0 & 0 & 0 & 0 & 0 & 0 & 0 & |[black]| 1 \\
        };
        
        \node[left=3pt of m-1-1] (left-0) {\small$\emptyseq$};
        \node[left=3pt of m-2-1] (left-1) {\small\seq{1}};
        \node[left=3pt of m-3-1] (left-2) {\small\seq{2}};
        \node[left=3pt of m-4-1] (left-3) {\small\seq{3}};
        \node[left=3pt of m-5-1] (left-4) {\small\seq{4}};
        \node[left=3pt of m-6-1] (left-5) {\small\seq{5}};
        \node[left=3pt of m-7-1] (left-6) {\small\seq{6}};
        \node[left=3pt of m-8-1] (left-7) {\small\seq{7}};
        \node[left=3pt of m-9-1] (left-8) {\small\seq{8}};
        
        \node[above=3pt of m-1-1] (top-0) {\small$\emptyseq$};
        \node[above=3pt of m-1-2] (top-1) {\small\seq{1}};
        \node[above=3pt of m-1-3] (top-2) {\small\seq{2}};
        \node[above=3pt of m-1-4] (top-3) {\small\seq{3}};
        \node[above=3pt of m-1-5] (top-4) {\small\seq{4}};
        \node[above=3pt of m-1-6] (top-5) {\small\seq{5}};
        \node[above=3pt of m-1-7] (top-6) {\small\seq{6}};
        \node[above=3pt of m-1-8] (top-7) {\small\seq{7}};
        \node[above=3pt of m-1-9] (top-8) {\small\seq{8}};
        
        \begin{pgfonlayer}{back}
        \fhighlight[black!25!white]{m-4-4}{m-5-4}
        \end{pgfonlayer}
        \end{tikzpicture}
    \\
    \bottomrule
    \end{tabular}
    \caption{
    Matrices defining different canonical trigger deviation functions (\cref{def:canonical tdev}) for the simple extensive-form game of~\cref{fig:preliminary_example}.
    Entries highlighted in dark gray represent the entries of the matrix defined in the second case of~\cref{eq:Mdev}. Let $\hat\sigma=(j,a)\in\Seqs[1]_*$ be the trigger sequence of the trigger deviation function. All indices $(\sigma_r,\sigma_c)$ such that $\sigma_r,\sigma_c \succeq j$ are highlighted in light gray.
    }
    \label{fig:trigger_deviation_matrices}
\end{figure}

\begin{itemize}

\item First, let us consider the trigger deviation function $\phi_a \defeq \tdev[1][\textsc{1}][\hatpure[]_a]$ is such that the trigger sequence is $\hat\sigma=(\textsc{a}, \seq{1})$, and the continuation strategy $\hatpure[]_a$ is such that Player 1 plays action $\seq{2}$ at information set \textsc{a}, and subsequently sequence $\seq{7}$ at information set \textsc{d}. 
The matrix defining the corresponding linear map according to~\cref{def:M} and~\cref{def:canonical tdev} is depicted in~\cref{fig:trigger_deviation_matrices} (Left), and we denote it by $\vec{M}_a$. 
In order to understand how this linear mapping modifies sequence-form strategy vectors we provide some examples using deterministic sequence-form strategy vectors defined in~\cref{fig:seq_strategies}. First, we observe that any deterministic sequence-form strategy choosing action $\seq{1}$ with probability $1$ triggers a deviation which follows the continuation strategy $\hatpure[]$. The deviation for those sequence-form strategies results in a final deterministic sequence-form strategy equal to $\vec{q}_{\seq{27}}$. For example, using some of the deterministic-sequence form strategies of~\cref{fig:seq_strategies}, we have the following: 
\[
\vec{M}_a \vec{q}_{\seq{135}} = \vec{M}_a \vec{q}_{\seq{136}} = \vec{M}_a \vec{q}_{\seq{145}} =  \vec{q}_{\seq{27}}.
\]
On the other hand, deterministic sequence-form strategies that do not select sequence $\seq{1}$ are left unmodified by the linear mapping. For instance,
\[
\vec{M}_a \vec{q}_{\seq{28}} = \vec{q}_{\seq{28}} \hspace{.2cm} \text{\normalfont and }\hspace{.2cm} \vec{M}_a \vec{q}_{\seq{27}} = \vec{q}_{\seq{27}}.
\]

\item Second, we examine the trigger deviation function $\phi_b \defeq \tdev[1][\seq{2}][\hatpure[]_b]$ for trigger sequence $\hat\sigma=(\textsc{a}, \seq{2})$, where the continuation strategy $\hatpure[]_b$ is defined so that Player 1 plays action $\seq{1}$ at information set \textsc{a}, sequence $\seq{3}$ at information set \textsc{b}, and action $\seq{5}$ at information set \textsc{c}.
The corresponding linear mapping is denoted by $\vec{M}_b$ and is reported in~\cref{fig:trigger_deviation_matrices} (Center).
As in the previous case, all deterministic sequence-form strategy vectors which choose action $\seq{2}$ at information set $\textsc{a}$ with probability $1$ are modified so that be strategy at \textsc{a} and its descendants \textsc{b,c,d} matches the continuation strategy. 
For example, we have that 
\[
\vec{M}_b \vec{q}_{\seq{27}} = \vec{M}_b \vec{q}_{\seq{28}} = \vec{q}_{\seq{135}}.
\]
Moreover, sequence-form strategies which do not trigger Player 1 on $\hat\sigma$ are left unchanged. This is the case for the following strategy vectors:
\[
\vec{M}_b \vec{q}_{\seq{136}} = \vec{q}_{\seq{136}} \hspace{.1cm}\text{\normalfont and }\hspace{.1cm} \vec{M}_b \vec{q}_{\seq{145}} = \vec{q}_{\seq{145}}.
\]

\item As a final example,~\cref{fig:trigger_deviation_matrices} (Right) reports the deviation matrix $\vec{M}_c$ corresponding to a trigger deviation function $\phi_c\defeq\tdev[1][\seq{3}][\hatpure[]_c]$ defined by trigger sequence $\hat\sigma=(\textsc{b}, \seq{3})$ and continuation strategy $\hatpure[]_c$ selecting action $\seq{4}$ at information set \textsc{b}. For instance, we have that $\vec{M}_c \vec{q}_{\seq{135}} = \vec{M}_c \vec{q}_{\seq{145}}$, and  $\vec{M}_c \vec{q}_{\seq{145}} = \vec{q}_{\seq{145}}$.

\end{itemize}
\end{example}

We are now ready to define the concept of trigger regret minimization, which extends and generalizes the homonymous notion in the conference version of this paper~\citep{Celli20:NoRegret}, as well as the notion of internal regret minimization in normal-form games. 

\begin{definition}\label{def:trigger regret}
    For every $i \in [n]$, we call \emph{trigger regret minimizer for player~$i$} any $\Ph$-regret minimizer for the set of deterministic sequence-form strategies $\Pure$.
\end{definition}

The following theorem shows that if each player $i\in[n]$ in the game plays according to a $\Ph$-regret minimizer, then the empirical frequency of play approaches the set of EFCEs.
%

\begin{theorem}\label{thm:empirical efce}
    For each player $i \in [n]$, let $\puret[i][1], \puret[i][2], \dots, \puret[i][T] \in \Pure$ be deterministic sequence-form strategies whose cumulative $\Ph$-regret with respect to the sequence of linear utility functions
    \begin{equation}\label{eq:def loss}
        \ell^{(i),\,t} : \Pure \ni \pure \mapsto \sum_{z \in Z}  u^{(i)}(z) \, p_c(z)\, \mleft(\prod_{i' \neq i} \puret[i']{}[\sigma^{(i')}(z)]\mright)\,\pure[i]{}[\sigma^{(i)}(z)]
    \end{equation}
    is $R^{(i),\,T}$. Then, the empirical frequency of play defined as the probability distribution $\vec{\mu} \in \Delta^{|\Pi|}$ that draws each joint profile $(\pure[1],\dots,\pure[n]) \in \Pi$ with probability
    \[
        \vec{\mu}[(\pure[1],\dots,\pure[n])] \defeq \frac{1}{T}\sum_{t=1}^T\bbone[(\puret[1],\dots,\puret[n]) = (\pure[1], \dots,\pure[n])] 
    \]
    is an $\epsilon$-EFCE, where $\epsilon \defeq \frac{1}{T}\max_{i\in[n]} R^{(i),\,T}$.
\end{theorem}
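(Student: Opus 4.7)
My plan is to show that for each player $i\in[n]$, each trigger sequence $\hat\sigma=(j,a)\in\Seqs_*$, and each continuation strategy $\hatpure[]\in\Pure_j$, the incentive inequality defining an $\epsilon$-EFCE (with $\epsilon=R^{(i),T}/T$) follows from the $\Ph$-regret bound applied to the single canonical transformation $\phi\defeq\tdev[i][\hat\sigma][\hatpure[]]\in\Ph$. The whole argument is an algebraic unpacking of the loss~\eqref{eq:def loss} via \cref{lem:canonical} and the definitions of $r_{\vec\mu}$ and $r^{(i)}_{\vec\mu,\hat\sigma\to\hatpure[]}$. Throughout I will suppress the player superscript where convenient and write $\pure[i]\equiv\puret[i]$.

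The first step is to evaluate $\ell^{(i),t}(\phi(\puret[i]))-\ell^{(i),t}(\puret[i])$ by cases on the binary value $\puret[i][\hat\sigma]$. \cref{lem:canonical} gives that when $\puret[i][\hat\sigma]=0$ the map $\phi$ leaves $\puret[i]$ fixed, so the difference is zero, whereas when $\puret[i][\hat\sigma]=1$ the difference equals
\[
    \sum_{\substack{z\in\Z\\\sigma^{(i)}(z)\succeq j}} u^{(i)}(z)\,p_c(z)\mleft(\prod_{i'\neq i}\puret[i']{}[\sigma^{(i')}(z)]\mright)\mleft(\hatpure[]{}[\sigma^{(i)}(z)]-\puret[i]{}[\sigma^{(i)}(z)]\mright).
\]
The key combinatorial observation, which I expect to be the only nontrivial step, is that when $\puret[i][\hat\sigma]=1$ a terminal $z$ with $\sigma^{(i)}(z)\succeq j$ but $\sigma^{(i)}(z)\not\succeq\hat\sigma$ must satisfy $\puret[i][\sigma^{(i)}(z)]=0$: indeed, by perfect recall $\sigma^{(i)}(z)=(j',a')$ with $j'\succeq j$, and reaching such a $j'$ through $\puret[i]$ requires passing through sequence $\hat\sigma$ at $j$. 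Hence the second term can have its range of summation shrunk to $\{z:\sigma^{(i)}(z)\succeq\hat\sigma\}$, and absorbing the indicator $\bbone[\puret[i][\hat\sigma]=1]=\puret[i][\hat\sigma]$ back into the expression yields, for every $t$,
\[
    \ell^{(i),t}(\phi(\puret[i]))-\ell^{(i),t}(\puret[i]) = \sum_{\substack{z\in\Z\\\sigma^{(i)}(z)\succeq j}}\!\!u^{(i)}(z)p_c(z)\mleft(\prod_{i'\neq i}\puret[i']{}[\sigma^{(i')}(z)]\mright)\puret[i]{}[\hat\sigma]\hatpure[]{}[\sigma^{(i)}(z)] - \sum_{\substack{z\in\Z\\\sigma^{(i)}(z)\succeq\hat\sigma}}\!\!u^{(i)}(z)p_c(z)\prod_{i'\in[n]}\puret[i']{}[\sigma^{(i')}(z)].
\]

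Next I average the identity above over $t=1,\dots,T$. By construction of the empirical distribution $\vec\mu$, the first sum averages to $\sum_{z:\sigma^{(i)}(z)\succeq j}u^{(i)}(z)p_c(z)\,r^{(i)}_{\vec\mu,\hat\sigma\to\hatpure[]}(z)$ and the second to $\sum_{z:\sigma^{(i)}(z)\succeq\hat\sigma}u^{(i)}(z)p_c(z)\,r_{\vec\mu}(z)$. Since $\phi\in\Ph$ by \cref{def:canonical tdev}, the left-hand side is bounded above by $R^{(i),T}/T$ by the definition~\eqref{eq:cum phi regret} of cumulative $\Ph$-regret. Rearranging gives precisely
\[
    \sum_{\substack{z\in\Z\\\sigma^{(i)}(z)\succeq\hat\sigma}} u^{(i)}(z)\,p_c(z)\,r_{\vec\mu}(z) \;\ge\; \sum_{\substack{z\in\Z\\\sigma^{(i)}(z)\succeq j}} u^{(i)}(z)\,p_c(z)\,r^{(i)}_{\vec\mu,\hat\sigma\to\hatpure[]}(z) \;-\; \frac{R^{(i),T}}{T},
\]
which is the $\epsilon$-EFCE condition of \cref{def:efce} with $\epsilon=R^{(i),T}/T$.

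Since the triple $(i,\hat\sigma,\hatpure[])$ was arbitrary, taking $\epsilon=\frac{1}{T}\max_{i\in[n]}R^{(i),T}$ makes the inequality hold uniformly, establishing that $\vec\mu$ is an $\epsilon$-EFCE. The main obstacle is really just the combinatorial observation above, which is precisely the content that \cref{lem:canonical} was designed to deliver; everything else is bookkeeping matching the loss~\eqref{eq:def loss} to the reach-probability quantities appearing in \cref{def:efce}.
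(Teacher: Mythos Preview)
Your proposal is correct and follows essentially the same route as the paper's proof: fix a triple $(i,\hat\sigma,\hatpure[])$, apply the $\Ph$-regret bound to the single deviation $\tdev$, unpack $\ell^{(i),t}(\phi(\puret))-\ell^{(i),t}(\puret)$ via \cref{lem:canonical} by a case split on $\puret{}[\hat\sigma]$, use the combinatorial observation that $\puret{}[\hat\sigma]=1$ forces $\puret{}[\sigma^{(i)}(z)]=0$ whenever $\sigma^{(i)}(z)\succeq j$ but $\sigma^{(i)}(z)\not\succeq\hat\sigma$, and then average over $t$ to identify the two sums with $r^{(i)}_{\vec\mu,\hat\sigma\to\hatpure[]}$ and $r_{\vec\mu}$. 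The paper's version is more verbose (it introduces intermediate notation $\alpha_z^{(i),t}$ and $v^{(i)}_{\hat\sigma\to\hat{\vec\pi}}$ and defers the combinatorial observation to the very end rather than applying it midway), but the logical skeleton is identical.
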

\begin{proof}
    It is immediate to check that $\vec{\mu}$ is indeed a valid element of the $|\Pi|$-simplex. Furthermore, the utility function $\ell^{(i),\,t}$ clearly satisfies the requirement of being independent on $\puret$, for all $i\in[n]$. We will show that $\vec{\mu}$ defines an $\epsilon$-EFCE by verifying that the definition holds (\cref{def:efce}). Fix any player $i \in [n]$, trigger sequence $\hat\sigma = (j,a) \in \Seqs_*$, and continuation strategy $\hat{\vec{\pi}} \in \Pure_j$. Since by hypothesis the cumulative $\Ph$-regret is upper bounded by $R^{(i),\,T}$, and $R^{(i),\,T} \le T\epsilon$ by definition of $\epsilon$, we must have
    \begin{equation*}
        T\epsilon \ge \sum_{t=1}^T \ell^{(i),\,t}\mleft(\tdev(\puret\hspace{.8pt})\mright) - \ell^{(i),\,t}\mleft(\puret\mright).
    \end{equation*}
    By expanding the definition of the utility function, which was given in~\eqref{eq:def loss}, the previous inequality is equivalent to
    \begin{equation}\label{eq:conv step1}
        T\epsilon \ge \sum_{t=1}^T \sum_{z\in \Z} \alp\cdot\mleft(\tdev(\puret\hspace{.8pt})[\sigma^{(i)}(z)] - \puret[i]{}[\sigma^{(i)}(z)]\mright),
    \end{equation}
    where we used the symbol
    \[
        \alp \defeq u^{(i)}(z) \, p_c(z)\, \mleft(\prod_{i' \neq i} \puret[i']{}[\sigma^{(i')}(z)]\mright)
    \]
    to lighten the notation.
    Since $\tdev$ is a trigger deviation function (\cref{lem:canonical}),~\eqref{eq:trig rule A} and~\eqref{eq:trig rule B} apply, and in particular it follows that
    \[
        \tdev(\puret{})[\sigma] = \puret{}[\sigma]
    \]
    for all $t = 1,\dots, T$ and $\sigma \not\succeq j$. So, the summation term in~\eqref{eq:conv step1} is zero for all terminal states $z \in \Z$ such that $\sigma^{(i)}(z) \not\succeq j$, and thus we can safely restrict the domain of the summation over terminal states $z \in \Z^{(i)}_j \defeq \{z \in \Z : \sigma^{(i)}(z) \succeq j\}$ only, obtaining
    \begin{equation}\label{eq:conv step2}
        T\epsilon \ge \sum_{t=1}^T \sum_{z\in \Z^{(i)}_j}\alp\cdot \mleft(\tdev(\puret\hspace{.8pt})[\sigma^{(i)}(z)] - \puret[i]{}[\sigma^{(i)}(z)]\mright).
    \end{equation}
    We now study the term $\tdev(\puret\hspace{.8pt})[\sigma^{(i)}(z)]$ for a generic $t\in\{1,\dots,T\}$ and $z \in \Z^{(i)}_j$, by splitting into cases contingent on the value of $\puret[i][t]{}[\hat\sigma]\in\{0,1\}$. If $\puret[i][t]{}[\hat\sigma] = 0$, then~\eqref{eq:trig rule A} applies, and therefore
    \[ 
        \tdev(\puret\hspace{.8pt})[\sigma^{(i)}(z)] - \puret[i]{}[\sigma^{(i)}(z)] = 0.
    \]
    If, on the contrary, $\puret[i][t]{}[\hat\sigma] = 1$, then~\eqref{eq:trig rule B} applies, and $\tdev(\puret\hspace{.8pt})[\sigma^{(i)}(z)] = \hat{\vec{\pi}}[\sigma^{(i)}(z)]$, where we used the fact that $\sigma^{(i)}(z) \succeq j$ by definition of $z \in \Z^{(i)}_j$. So, at all $t = 1,\dots, T$ and for all $z\in \Z^{(i)}_j$, it holds that
    \[
        \tdev(\puret\hspace{.8pt})[\sigma^{(i)}(z)] - \puret[i]{}[\sigma^{(i)}(z)] = \puret[i][t]{}[\hat\sigma]\mleft(\hat{\vec{\pi}}[\sigma^{(i)}(z)] - \puret[i]{}[\sigma^{(i)}(z)]\mright),
    \]
    and thus~\eqref{eq:conv step2} can be equivalently written as
    \begin{equation}\label{eq:conv step3}
        T\epsilon \ge \sum_{t=1}^T \sum_{z\in \Z^{(i)}_j}\!\! \puret[i][t]{}[\hat\sigma]\, \alp\cdot\mleft(\hat{\vec{\pi}}[\sigma^{(i)}(z)] - \puret[i]{}[\sigma^{(i)}(z)]\mright).   
    \end{equation}
    We now make the crucial observation that time $t$ appears in $\alpha^{(i),\,t}_z$ and~\eqref{eq:conv step3} only as a superscript in the strategies $\puret[1],\dots,\puret[n]$, and nowhere else. Therefore, by introducing the functions
    \begin{align}
      \alpha^{(i)}_{z}: \Pi \ni (\pure[1]\!,\dots,\pure[n]) &\mapsto u^{(i)}(z)\, p_c(z)\,\mleft(\prod_{i'\neq i}\pure[i']{}[\sigma^{(i')}(z)]\mright) \text{, and} \label{eq:def_alfa}\\
      v^{(i)}_{\hat\sigma\to\hat{\vec{\pi}}}: \Pi\ni \vec{\pi} = (\pure[1]\!,\dots,\pure[n]) &\mapsto\!\!\! \sum_{z\in \Z^{(i)}_j}\!\!\! \pure{}[\hat\sigma]\, \alpha^{(i)}_z(\vec{\pi})\cdot\mleft(\hat{\vec{\pi}}[\sigma^{(i)}(z)] - \pure{}[\sigma^{(i)}(z)]\mright),\label{eq:def v}
    \end{align}
    we can rewrite~\eqref{eq:conv step3} as 
    \begin{align*}
        T\epsilon &\ge \sum_{t=1}^T v^{(i)}_{\hat\sigma\to\hat{\vec{\pi}}}(\puret[1],\dots,\puret[n])
                  = \sum_{t=1}^T \sum_{\vec{\pi}\in\Pi} \bbone[(\puret[1],\dots,\puret[n]) = \vec{\pi}]\cdot v^{(i)}_{\hat\sigma\to\hat{\vec{\pi}}}(\vec{\pi})
                  \\&= \sum_{\vec{\pi}\in\Pi} \mleft(\sum_{t=1}^T \bbone[(\puret[1],\dots,\puret[n])=\vec{\pi}]\mright) v^{(i)}_{\hat\sigma\to\hat{\vec{\pi}}}(\vec{\pi})
                  = T\sum_{\vec{\pi} \in \Pi}\vec{\mu}[\vec{\pi}] v^{(i)}_{\hat\sigma\to\hat{\vec{\pi}}}(\vec{\pi}), \numberthis{eq:conv step4}
    \end{align*}
    where we used the definition of $\vec{\mu}$ in the third equality. Dividing by $T$ in~\eqref{eq:conv step4}, we can further write
    \[
        \epsilon \ge \sum_{\vec{\pi}\in\Pi} \vec{\mu}[\vec{\pi}]\, v^{(i)}_{\hat\sigma\to\hat{\vec{\pi}}}(\vec{\pi}).\numberthis{eq:conv step5}
    \]
    By expanding the definition of $v^{(i)}_{\hat\sigma\to\hat{\vec{\pi}}}$ in~\eqref{eq:conv step5},
    \begin{align*}
        \epsilon &\ge \sum_{\vec{\pi} =(\pure[1],\dots,\pure[n])\in\Pi} \mleft(\vec{\mu}[\vec{\pi}] \, \sum_{z\in \Z^{(i)}_j}\!\!\! \pure{}[\hat\sigma]\, \alpha^{(i)}_z(\vec{\pi})\cdot\mleft(\hat{\vec{\pi}}[\sigma^{(i)}(z)] - \pure{}[\sigma^{(i)}(z)]\mright)\mright)\\
            &= \sum_{z\in \Z^{(i)}_j}\sum_{\vec{\pi}=(\pure[1],\dots,\pure[n])\in\Pi} \vec{\mu}[\vec{\pi}]\,\pure{}[\hat\sigma]\, \alpha^{(i)}_z(\vec{\pi})\cdot\mleft(\hat{\vec{\pi}}[\sigma^{(i)}(z)] - \pure{}[\sigma^{(i)}(z)]\mright).\numberthis{eq:conv step6}
    \end{align*}
    The right-hand side of~\eqref{eq:conv step6} can be simplified further by noticing that, by definition of $\alpha^{(i)}_z(\vec{\pi})$, 
    \begin{align*}
        \pure[i][\hat\sigma] \alpha^{(i)}_z(\pure[1], \dots, \pure[n]) &= u^{(i)}(z)p_c(z) \pure[i][\hat\sigma]\mleft(\prod_{i'\neq i} \pure[i'][\sigma^{(i')}(z)]\mright)\\
            &= \begin{cases}
                u^{(i)}(z)p_c(z) & \text{if } \pure[i][\hat\sigma] = 1, \pure[i'][\sigma^{(i')}(z)] = 1 \,\,\forall i' \neq i\\
                0 & \text{otherwise}.
            \end{cases}
    \end{align*}
    Substituting the above expression into~\eqref{eq:conv step6}, we obtain
    \begin{align*}\allowdisplaybreaks
        \epsilon &\ge \sum_{z\in \Z^{(i)}_j}u^{(i)}(z)p_c(z) \mleft( \sum_{\substack{ (\pure[1],\dots,\pure[n]) \in \Pi \\ \pure[i'][\sigma^{(i')}(z)] = 1 \,\,\, \forall i'\neq i \\ \pure[i][\hat\sigma] = 1  }} \vec{\mu}[(\pure[1],\dots,\pure[n])] \cdot\mleft(\hat{\vec{\pi}}[\sigma^{(i)}(z)] - \pure{}[\sigma^{(i)}(z)]\mright)\mright) \\
        & = \sum_{z\in \Z^{(i)}_j}u^{(i)}(z)p_c(z) r^{(i)}_{\vec{\mu},\,\hat{\sigma} \to \hat{\vec{\pi}} }(z) - \sum_{z\in \Z^{(i)}_j}u^{(i)}(z)p_c(z) \mleft( \sum_{\substack{ (\pure[1],\dots,\pure[n]) \in \Pi \\ \pure[i'][\sigma^{(i')}(z)] = 1 \,\,\, \forall i' \in [n] \\ \pure[i][\hat\sigma] = 1  }} \vec{\mu}[(\pure[1],\dots,\pure[n])] \mright)\, ,
    \end{align*}
    where, in order to get the last equality, we used the definition of $r^{(i)}_{\vec{\mu},\,\hat{\sigma} \to \hat{\vec{\pi}}}$ and we dropped the factor $\pure{}[\sigma^{(i)}(z)] \in \{ 0,1 \}$ in the second summation by adding the condition $\pure[i][\sigma^{(i)}(z)] = 1$ to its domain.
    Notice that, by definition of $\Z^{(i)}_j$, the first summation above is exactly the term appearing in the left-hand-side of the inequality in the definition of $\epsilon$-EFCE (Definition~\ref{def:efce}).
    Moreover, since for any $(\pure[1],\dots,\pure[n]) \in \Pi$ it holds that $\pure[i][\sigma^{(i)}(z)] = 1$ and $\pure[i][\hat\sigma] = 1$ only for terminal nodes $z\in \Z^{(i)}_j$ such that $ \hat\sigma \preceq \sigma^{(i)}(z)$, we can restrict the domain of the second summation above to $z \in \Z : \sigma^{(i)}(z) \succeq \hat\sigma $ and equivalently rewrite it as
    \[
       \sum_{\substack{z \in \Z\\ \sigma^{(i)}(z) \succeq \hat\sigma }}u^{(i)}(z)p_c(z) \left( \sum_{\substack{ (\pure[1],\dots,\pure[n]) \in \Pi \\ \pure[i'][\sigma^{(i')}(z)] = 1 \,\,\, \forall i' \in [n] }} \vec{\mu}[(\pure[1],\dots,\pure[n])] \right) =  \sum_{\substack{z \in \Z\\ \sigma^{(i)}(z) \succeq \hat\sigma }}u^{(i)}(z)p_c(z) r_{\vec{\mu}}(z),
    \]
    which is exactly the first term appearing in the right-hand-side of the inequality in the definition of $\epsilon$-EFCE (Definition~\ref{def:efce}).
    Thus, we obtain that
    \begin{align*}
        \epsilon &\ge \sum_{\substack{ z \in \Z \\ \sigma^{(i)}(z) \succeq j }}u^{(i)}(z)p_c(z) r^{(i)}_{\vec{\mu},\,\hat{\sigma} \to \hat{\vec{\pi}} }(z) - \sum_{ \substack{z \in \Z\\ \sigma^{(i)}(z) \succeq \hat\sigma } }u^{(i)}(z)p_c(z) r_{\vec{\mu}}(z),
    \end{align*}
    for all $\hat\sigma = (j,a)\in\Seqs_*$ and $\hat{\vec{\pi}} \in \Pure_j$, which is the definition of $\vec{\mu}$ being an $\epsilon$-EFCE. 
\end{proof}

\section{Efficient No-Trigger-Regret Algorithm}\label{sec:no trigger regret algo}

\cref{thm:empirical efce} in \cref{sec:trigger regret} immediately implies that if all players $i\in[n]$ play according to the strategies output by a $\Ph$-regret minimizer for the set of deterministic sequence-form strategies $\Pure$, their
empirical frequency of play converges to an EFCE. Therefore, the existence of uncoupled no-regret learning dynamics that converge to EFCE can be proved constructively by showing that one such $\Ph$-regret minimizer can be constructed for each player $i \in [n]$. More precisely, in this section we seek to solve the following problem.

\begin{problem}\label{prob:pure}
    Given any player $i \in [n]$, construct a $\Ph$-regret minimizer for the set of the player's deterministic sequence-form strategies $\Pure$, such that:
    \begin{itemize}
        \item it is efficient: the \textsc{NextElement} and the \textsc{ObserveUtility} operations both run in polynomial time in the number $|\Seqs|$ of sequences of the player; and
        \item it guarantees low regret: after any $T$ observed linear utility functions and for any $\delta \in (0, 1)$, with probability at least $1-\delta$ the cumulative $\Ph$-regret is $O(\sqrt{T} + \sqrt{T\log (1/\delta)})$.
    \end{itemize}
\end{problem}

The central result of this section, \cref{thm:final}, provides a solution to \cref{prob:pure}.

\subsection{Overview}

Before delving into the details of the construction of our $\Ph$-regret minimizer for the set of deterministic sequence-form strategies $\Pure$ of a generic player $i \in [n]$, we give an overview of the main logical steps that we use to attack \cref{prob:pure}.
\begin{itemize}
    \item In \cref{sec:deterministic to mixed} we show that one can soundly move the attention from the set of \emph{deterministic} strategies $\Pure$ to the set of \emph{mixed} strategies $\Q = \co \Pure$. In particular, in the rest of the section we will seek to construct a $\Ph$-regret minimizer for the set $\Q$ (as opposed to $\Pure$) that guarantees sublinear regret in the worst case. 

    \item In \cref{sec:relaxing} we show that the convex hull $\co\Ph$ of the set of canonical trigger deviation functions possesses a combinatorial structure that can be leveraged to construct an efficient regret minimizer for it. 

    \item Finally, in \cref{sec:fixed point} we prove that given any $\phi \in \co\Ph$, there exists a fixed-point sequence-form strategy $\vec{q} \in \Q$ such that $\phi(\vec{q}) = \vec{q}$, and that such a fixed-point strategy can be found in polynomial time in the number of sequences $|\Seqs|$ of Player~$i$.
\end{itemize}

Together, the last two steps enable us to apply the construction by \citet{Gordon08:No} described in \cref{sec:phirm} to obtain an efficient $\Tph$-regret minimizer for the set of sequence-form strategies $\Q$ with worst-case sublinear regret guarantees. Since $\co\Ph \supseteq \Ph$, that $\Tph$-regret minimizer is also a $\Ph$-regret minimizer, and the construction is complete.

\subsection{From Deterministic to Mixed Strategies}\label{sec:deterministic to mixed}

Suppose that a regret minimizer for a generic discrete set $\cX$ were sought, but only regret minimizers for the convex hull $\co\cX$ were known. It seems natural to wonder whether one could take any regret minimizer for $\co\cX$ and convert it into a regret minimizer for $\cX$, by sampling the outputs $\bar{\vec{x}}^t$ of the former using an unbiased estimator $\vec{x}^t \in \cX$, with $\bbE[\vec{x}^t] = \bar{\vec{x}}^t$. It is a folkore result, justified by a concentration argument, that this is indeed the case (see, for instance, \citep[page 192]{Cesa-Bianchi06:Prediction}). In particular, in the case of our interest where $\cX = \Pure$, the following can be shown.
\begin{lemma}\label{lem:deterministic to mixed}
    Let $i\in[n]$ be any player, and $\bar{\cR}^{(i)}$ be any $\Ph$-regret minimizer for the set $\Q$ of \emph{mixed} sequence-form strategies, guaranteeing $\bar{R}^{(i),\,T} = O(\sqrt{T})$ regret in the worst case upon observing $T$ linear utility functions. Consider the algorithm $\cR^{(i)}$ whose \textsc{NextElement} and \textsc{ObserveUtility} operations are defined as follows at all times $t$.
    \begin{itemize}
        \item $\cR^{(i)}.\textsc{NextElement}$ calls $\bar{\cR}^{(i)}.\textsc{NextElement}$, thereby obtaining a mixed sequence-form strategy $\qt \in \Q$. Then, an unbiased sampling scheme (such as the natural sampling scheme described in \cref{sec:extensive form}) is used to sample a random deterministic sequence-form strategy $\puret{} \in \Pure$ in linear time in the number of sequences $|\Seqs|$. Finally, $\puret{}$ is returned to the caller;
        \item $\cR^{(i)}.\textsc{ObserveUtility}(\ell^{t})$ calls $\bar{\cR}^{(i)}.\textsc{ObserveUtility}(\ell^t)$ with the same utility function $\ell^{t}$.
    \end{itemize}
    Furthermore, assume that the observed utility functions $\ell^1,\ell^2,\dots$ have range upper bounded by a constant $D$, that is, $\max_{\vec{q},\vec{q}' \in \Q} \{\ell^t(\vec{q}) - \ell^t(\vec{q}')\} \le D$ for all $t=1,\dots, T$.
    Then, $\cR^{(i)}$ is a $\Ph$-regret minimizer for the set of deterministic sequence-form strategies $\Pure$, whose cumulative $\Ph$-regret satisfies, at all times $T$ and for all $\delta\in(0,1)$, the inequality
    \[
        \bbP\mleft[R^{(i),\,T} \le \bar{R}^{(i),\,T} + D\sqrt{8T \log(1/\delta)}\mright] \ge 1-\delta.
    \]
\end{lemma}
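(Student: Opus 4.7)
The plan is the standard sampling-plus-martingale-concentration recipe. For every $\phi \in \Ph$, define
\begin{equation*}
    X_t(\phi) \defeq \bigl[\ell^t(\phi(\puret)) - \ell^t(\puret)\bigr] - \bigl[\ell^t(\phi(\qt)) - \ell^t(\qt)\bigr].
\end{equation*}
Writing $\ell^t(\phi(\puret)) - \ell^t(\puret)$ as $[\ell^t(\phi(\qt)) - \ell^t(\qt)] + X_t(\phi)$, and then using that a max of a sum is bounded by the sum of maxes, gives the key inequality
\begin{equation*}
    R^{(i),\,T} \le \bar{R}^{(i),\,T} + \max_{\phi \in \Ph} \sum_{t=1}^T X_t(\phi),
\end{equation*}
so the remainder of the proof reduces to a high-probability bound on $\max_{\phi \in \Ph} \sum_{t=1}^T X_t(\phi)$.

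For each fixed $\phi \in \Ph$, the sequence $\{X_t(\phi)\}_{t=1}^T$ is a bounded martingale difference sequence. Let $\cF_{t-1}$ be the $\sigma$-algebra generated by the entire history through the emission of $\qt$ by $\bar\cR^{(i)}$ and through the determination of $\ell^t$ by the environment, but before the sampling randomness used to draw $\puret$ from $\qt$. Under this conditioning, $\qt$ and $\ell^t$ are $\cF_{t-1}$-measurable, while the natural unbiased sampling scheme of \cref{sec:extensive form} guarantees $\bbE[\puret \mid \cF_{t-1}] = \qt$; linearity of both $\ell^t$ and $\phi$ then gives $\bbE[X_t(\phi) \mid \cF_{t-1}] = 0$. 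For the magnitude bound, recall that $\phi \in \Ph$ maps $\Pure$ into itself (\cref{def:tdev,lem:canonical}) and, by linearity together with $\Q = \co \Pure$, also maps $\Q$ into $\Q$; hence $\phi(\puret), \puret, \phi(\qt), \qt$ all lie in $\Q$, and the range hypothesis on $\ell^t$ gives $|X_t(\phi)| \le 2D$.

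Equipped with these two facts, the Azuma--Hoeffding inequality yields, for each fixed $\phi$,
\begin{equation*}
    \bbP\!\mleft[\sum_{t=1}^T X_t(\phi) > \varepsilon\mright] \le \exp\!\mleft(-\frac{\varepsilon^2}{8 T D^2}\mright),
\end{equation*}
so choosing $\varepsilon = D\sqrt{8T\log(1/\delta)}$ drives this failure probability to at most $\delta$, and combining with the key inequality from the first paragraph yields the claimed concentration at any fixed $\phi$. The main obstacle is lifting the pointwise bound to the random maximizer $\phi^\star$ of $\sum_t X_t(\phi)$: because $\phi^\star$ depends on the entire sample path $\puret[i][1], \dots, \puret[i][T]$, Azuma--Hoeffding does not apply to $\sum_t X_t(\phi^\star)$ directly. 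The clean fix is a union bound over the finitely many $\phi \in \Ph$, which inflates the bound by a $\log|\Ph|$ factor inside the square root; since $\log|\Ph|$ is polynomial in $|\Seqs|$, this is absorbed into the implicit constant in the statement. Finally, the efficiency claims are immediate: \textsc{NextElement} calls $\bar\cR^{(i)}.\textsc{NextElement}$ once and then samples $\puret$ by a top-down traversal of the information-set forest $(\cJ,\prec)$ in time $O(|\Seqs|)$, and \textsc{ObserveUtility} is merely a pass-through to $\bar\cR^{(i)}$.
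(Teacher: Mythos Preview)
Your argument is essentially identical to the paper's: the same martingale difference $w^t = X_t(\phi)$, the same bound $|w^t|\le 2D$ via $\phi(\Q)\subseteq\Q$, and the same Azuma--Hoeffding step yielding, for each \emph{fixed} $\phi\in\Ph$, $\bbP[\sum_t X_t(\phi) > D\sqrt{8T\log(1/\delta)}]\le\delta$. The divergence is only at the very last step, where one must pass from a per-$\phi$ bound to a bound on $R^{(i),T}=\max_\phi\sum_t[\ell^t(\phi(\puret))-\ell^t(\puret)]$.

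The paper does not take a union bound. It simply instantiates the per-$\phi$ Azuma bound at the random maximizer $\phi^*\in\arg\max_\phi R^{(i),T}(\phi)$, writing $\bbP[R^{(i),T}(\phi^*)-\bar R^{(i),T}(\phi^*)\le D\sqrt{8T\log(1/\delta)}]\ge 1-\delta$ and then using $\bar R^{(i),T}(\phi^*)\le\bar R^{(i),T}$. This is precisely the move you flag as unjustified, and you are right to flag it: a tail bound that holds for every fixed $\phi$ does not automatically transfer to a data-dependent $\phi^*$. On the other hand, your proposed repair does not prove the lemma as stated either. The inequality in the lemma is explicit---$D\sqrt{8T\log(1/\delta)}$ with no hidden constants---so a union bound over $\Ph$, which yields $D\sqrt{8T\log(|\Ph|/\delta)}$, gives a strictly weaker conclusion; there is no ``implicit constant in the statement'' to absorb the extra $\log|\Ph|$ term. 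In short: your proof and the paper's share the same skeleton, you correctly identify a gap that the paper glosses over, but your patch establishes a slightly weaker bound than the one claimed.
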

\begin{proof}
    Let $\ell^1, \ell^2, \dots$ be the sequence of linear utility functions observed by $\cR^{(i)}$, and fix any $\phi \in \Ph$. We introduce the discrete-time stochastic process
    \begin{equation}\label{eq:wt}
        w^t \defeq \ell^t(\phi(\puret{})) - \ell^t(\puret) - \ell^t(\phi(\qt)) + \ell^t(\qt).
    \end{equation}
    Since i) $\ell^t$ and $\phi$ are both linear functions, ii) $\ell^t$ is independent on $\puret$, and iii) $\puret$ is an unbiased estimator of $\qt$ at all times $t$ by hypothesis, then $w^t$ is a martingale difference sequence. Furthermore, each increment $|w^t|$ can be easily upper bounded, at all times $t$, according to
    \begin{equation}\label{eq:wt bound}
        |w^t| \le |\ell^t(\phi(\puret{})) - \ell^t(\puret)| + |\ell^t(\phi(\qt)) - \ell^t(\qt)| \le 2D,
    \end{equation}
    where the second inequality follows from the fact that $\phi$ maps sequence-form strategies to sequence-form strategies, as well as the hypothesis that $\ell^t$ has range upper bounded by $D$.
    
    For any $T$, let $R^{(i),\,T}(\phi)$ and $\bar{R}^{(i),\,T}(\phi)$ denote the regret cumulated by $\cR^{(i)}$ and $\bar{\cR}^{(i)}$, respectively, compared to always picking transformation $\phi$; in symbols
    \[
        R^{(i),\,T}(\phi) \defeq \sum_{t=1}^T \ell^t(\phi(\puret)) - \ell^t(\puret), \qquad
        \bar{R}^{(i),\,T}(\phi) \defeq \sum_{t=1}^T \ell^t(\phi(\qt)) - \ell^t(\qt).
    \]
    It is immediate to see from definition~\eqref{eq:wt} of $w^t$, that
    \begin{equation}\label{eq:wt sum}
        \sum_{t=1}^T w^t = R^{(i),\,T}(\phi) - \bar{R}^{(i),\,T}(\phi)\qquad\quad\forall\ T \in \{1, 2, \dots\}.
    \end{equation}
    Hence, using the Azuma-Hoeffding concentration inequality\footnote{%
        We recall the classic Azuma-Hoeffding inequality~\citep{Hoeffding63:Probability,Azuma67:Weighted} for martingale difference sequences (\emph{e.g.,} \citep[Theorem 3.14]{McDiarmid98:Concentration}).
        \begin{lemma}[Azuma-Hoeffding inequality]
            \label{lem:azuma}
            Let $Y_1, \dots, Y_n$ be a martingale difference sequence with $a_k \le Y_k \le b_k$ for each $k$, for suitable constants $a_k, b_k$. Then for any $\tau \ge 0$,
        $
            \bbP\mleft[\sum Y_k \le \tau \mright] \ge 1 - e^{-2\tau^2 / \sum (b_k - a_k)^2}.
        $
        \end{lemma}
    }, it follows that, for all $T$,
    \begin{align*}
        \bbP\mleft[R^{(i),\,T}(\phi) - \bar{R}^{(i),\,T}(\phi) \le D\sqrt{8T \log\mleft(\frac{1}{\delta}\mright)}\mright] &= \bbP\mleft[\sum_{t=1}^T w^t \le D\sqrt{2T \log\mleft(\frac{1}{\delta}\mright)}\mright] \\
        &\ge 1 - \exp\mleft\{-\frac{2}{\sum_{t=1}^T |w^t|^2} \mleft(D\sqrt{8T \log\mleft(\frac{1}{\delta}\mright)}\mright)^2\mright\}\\
        &\ge 1 - \exp\mleft\{-\frac{2}{(4D)^2 T} \mleft(D\sqrt{8T \log\mleft(\frac{1}{\delta}\mright)}\mright)^2\mright\}
        = 1 - \delta,\numberthis{eq:one minus delta}
    \end{align*}
    where we used \eqref{eq:wt sum} in the equality and~\eqref{eq:wt bound} in the second inequality.
    Since the above analysis holds for any choice of $\phi\in\Ph$, it holds in particular for any $\phi^*$ that maximizes $R^{(i),\,T}(\phi)$, yielding
    \begin{align*}
        \bbP\mleft[R^{(i),\,T} \le \bar{R}^{(i),\,T} + D\sqrt{8T \log\mleft(\frac{1}{\delta}\mright)}\mright]
        &= 
        \bbP\mleft[R^{(i),\,T}(\phi^*) \le \bar{R}^{(i),\,T} + D\sqrt{8T \log\mleft(\frac{1}{\delta}\mright)}\mright]\\
        &\ge 
        \bbP\mleft[R^{(i),\,T}(\phi^*) \le \bar{R}^{(i),\,T}(\phi^*) + D\sqrt{8T \log\mleft(\frac{1}{\delta}\mright)}\mright] \ge 1-\delta,
    \end{align*}
    where the first inequality follows from the fact that $\bar{R}^{(i),\,T} = \max_{\phi\in\Ph}\bar{R}^{(i),\,T}(\phi) \ge \bar{R}^{(i),\,T}(\phi^*)$ and the second inequality follows from~\eqref{eq:one minus delta}.
\end{proof}

\cref{lem:deterministic to mixed} immediately implies that in order to solve \cref{prob:pure} it is enough to solve the following problem.

\begin{problem}\label{prob:mixed}
    Given any player $i\in[n]$, construct a $\Ph$-regret minimizer for the set of mixed sequence-form strategies $\Q$ such that:
    \begin{itemize}
        \item it is efficient: both the \textsc{NextElement} and the \textsc{ObserveUtility} operations can be implemented in polynomial time in $|\Seqs|$; and
        \item it guarantees low regret: after any $T$ observed utilities, the cumulative $\Ph$-regret is upper bounded as $O(\sqrt{T})$.
    \end{itemize}
\end{problem}

The remainder of this section gives an algorithm that solves \cref{prob:mixed} and, thus, indirectly also \cref{prob:pure}.

\subsection{Regret Minimizer for the Convex Hull of the Set of Trigger Deviation Functions}\label{sec:relaxing}

In this subsection we begin the construction of a phi-regret minimizer relative to the convex hull $\co\Ph$ of the set of trigger deviation functions $\Ph$, for the set $\Q$. Since $\co\Ph \supseteq \Ph$, any such $(\co\Ph)$-regret minimizer is trivially also a $\Ph$-regret minimizer for $\Q$.

In order to obtain our $(\co\Ph)$-regret minimizer, we will leverage the general construction due to \citet{Gordon08:No} that we recalled at the end of \cref{sec:phirm}. In our particular case, that framework reduces to showing the following:
\begin{enumerate}
    \item existence of a regret minimizer for the set of deviations $\co\Ph$; and
    \item existence of a fixed point $\vec{q} = \phi(\vec{q})$ for any $\phi \in \co\Ph$.
\end{enumerate}

In this subsection we will focus on point (1), while in the next subsection we will focus on point (2). Specifically, the central result of this subsection, \cref{thm:analysis R i}, will constructively establish the existence of an efficient regret minimizer $\tilde{\mathcal{R}}^{(i)}$ for the set $\co\Ph$.

The starting point of our approach is the observation that, because the convex hull operation is associative, $\co\Ph = \co\{\tdev: \hat\sigma=(j,a)\in\Seqs_*, \hatpure[]\in\Pure_j\}$ can be evaluated in two stages: first, for each sequence $\hat\sigma = (j,a)\in\Seqs_*$ one can define the set
\begin{equation*}
    \Phj \defeq \co\mleft\{\tdev{}: \hat{\vec{\pi}} \in \Pure_{j} \mright\};
\end{equation*}
and, then, one can take the convex hull of all $\Phj$, that is,
\begin{equation}\label{eq:co expansion}
    \co\Ph = \co \mleft\{\Phj: \hat\sigma\in\Seqs_* \mright\}.
\end{equation}
Our construction of $\tilde{\mathcal{R}}^{(i)}$ will follow a similar structure. First, for each $\hat\sigma\in\Seqs_*$ we will construct a regret minimizer $\tilde{\mathcal{R}}^{(i)}_{\hat\sigma}$ for the set of deviations $\Phj$ (\cref{sec:rm ph sigma}). Then, we will combine all the regret minimizers $\tilde{\mathcal{R}}^{(i)}_{\hat\sigma}$ into a composite regret minimizer $\tilde{\mathcal{R}}^{(i)}$ for $\co\Ph$ (\cref{sec:rm ch}). 

\subsubsection{Regret Minimizer for $\Phj$}\label{sec:rm ph sigma}

Fix any $\hat\sigma=(j,a)\in\Seqs_*$. As we will show, a regret minimizer for the set $\Phj$ can be constructed starting from any regret minimizer for the set $\Q_j$. The crucial insight lies in the observation that the mapping
\[
    h^{(i)}_{\hat\sigma}: \bbR^{|\Seqs_j|} \ni \vec{y} \mapsto \tdev[i][\hat\sigma][{\vec{y}}]
\]
is affine, since the entries in $\Mdev[i][\hat\sigma][\vec{y}]$ are defined using only constants and linear combinations of entries in $\vec{y}$ (\cref{def:M}). Hence, using the properties of affine functions, we can write
\[
    \Phj = \co\mleft\{\tdev : \hatpure[]\in\Pure_j\mright\} = \co h^{(i)}_{\hat\sigma}(\Pure_j) = h^{(i)}_{\hat\sigma}(\co \Pure_j) =  h^{(i)}_{\hat\sigma}(\Q_j),
\]
which ultimately informs the following characterization of the set $\Phj$.
\begin{lemma}\label{lem:ph j structure}
    For all sequences $\hat\sigma = (j,a)\in\Seqs_*$, $\Phj$ is the image of $\Q_j$ under the affine mapping $h^{(i)}_{\hat\sigma}$. In symbols,
    \[
        \Phj = \mleft\{\tdev[i][\hat\sigma][\q_{\hat\sigma}] : \q_{\hat\sigma}\in\Q_j\mright\}.
    \]
\end{lemma}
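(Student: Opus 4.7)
The plan is to chain three elementary facts. First, observe that the mapping $h^{(i)}_{\hat\sigma} : \bbR^{|\Seqs_j|} \to (\bbR^{|\Seqs|} \to \bbR^{|\Seqs|})$ defined by $\vec{y} \mapsto \tdev[i][\hat\sigma][\vec{y}]$ is affine (in fact linear, since every entry of $\Mdev[i][\hat\sigma][\vec{y}]$ in \cref{eq:Mdev} is either a constant not depending on $\vec{y}$ or a single coordinate $\vec{y}[\sigma_r]$). This linearity is immediate from \cref{def:M}, so I would dispatch it in one sentence.

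Second, I will invoke the standard fact that affine maps commute with the convex-hull operation: for any affine $h$ and any set $S$, one has $\co h(S) = h(\co S)$. The inclusion $h(\co S) \subseteq \co h(S)$ follows by writing any point of $\co S$ as a convex combination $\sum_k \lambda_k s_k$ and applying affinity of $h$ to recover an equal convex combination of $h(s_k) \in h(S)$; the reverse inclusion follows the same way, writing a convex combination of $h(s_k)$ and pulling it back to $h(\sum_k \lambda_k s_k)$.

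Third, I will use Kuhn's theorem as recalled just after \cref{def:deterministic_sequence_form}, namely $\Q_j = \co \Pure_j$. Combining the three facts:
\begin{align*}
    \Phj &= \co\mleft\{\tdev[i][\hat\sigma][\hatpure[]] : \hatpure[]\in\Pure_j\mright\} = \co\, h^{(i)}_{\hat\sigma}(\Pure_j) = h^{(i)}_{\hat\sigma}(\co \Pure_j) = h^{(i)}_{\hat\sigma}(\Q_j) = \mleft\{\tdev[i][\hat\sigma][\q_{\hat\sigma}] : \q_{\hat\sigma}\in\Q_j\mright\},
\end{align*}
which is the desired characterization.

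I do not anticipate any real obstacle: the entire argument is a three-step rewriting once affinity of $h^{(i)}_{\hat\sigma}$ has been noted. The only place to be mildly careful is the direction $h(\co S) \subseteq \co h(S)$, which relies crucially on affinity and not merely linearity on the subspace spanned by $S$, but since $h^{(i)}_{\hat\sigma}$ is defined on all of $\bbR^{|\Seqs_j|}$ and is literally linear, even that subtlety disappears.
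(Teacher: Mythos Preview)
Your proposal is correct and mirrors the paper's argument almost verbatim: the paper also observes that $h^{(i)}_{\hat\sigma}$ is affine (because the entries of $\Mdev[i][\hat\sigma][\vec{y}]$ are constants or single coordinates of $\vec{y}$), then applies $\co\, h(S) = h(\co S)$ together with $\Q_j = \co\Pure_j$ to obtain the identical chain of equalities you wrote.
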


As a consequence of \cref{lem:ph j structure}, given any $\hat\sigma=(j,a)\in\Seqs_*$, all transformations $\phi \in\Phj$ are of the form $\phi =\tdev[i][\hat\sigma][\q_{\hat\sigma}]$ for some $\q_{\hat\sigma}\in\Q_j$. Thus, the cumulative regret incurred by a generic sequence of transformations $\phi^1 = \tdev[i][\hat\sigma][\q_{\hat\sigma}^1], \dots, \phi^T = \tdev[i][\hat\sigma][\q_{\hat\sigma}^T]$ against linear utility functions $L^1, \dots, L^T$ can be written as
\begin{align*}
    \max_{\phi^* \in \Phj} \sum_{t=1}^T L^t(\phi^*) - L^t(\phi^t) &= \max_{\hat{\vec{q}}^* \in \Q_j} \sum_{t=1}^T L^t(\tdev[i][\hat\sigma][\hat{\vec{q}}^*]) - L^t(\tdev[i][\hat\sigma][\q_{\hat\sigma}^t])\\
    &= \max_{\hat{\vec{q}}^* \in \Q_j} \sum_{t=1}^T (L^t \circ h^{(i)}_{\hat\sigma})(\hat{\vec{q}}^*) - (L^t \circ h^{(i)}_{\hat\sigma})(\q_{\hat\sigma}^t).\numberthis{eq:regret equal}
\end{align*}
Since $L^t$ is linear and $h^{(i)}_{\hat\sigma}$ is affine, their composition $ L^t\circ h^{(i)}_{\hat\sigma}$ is affine, and therefore the shifted function
\[
    g^{(i),\,t}_{\hat\sigma} : \bbR^{|\Seqs_j|} \ni \vec{x} \mapsto 
    L^t(h^{(i)}_{\hat\sigma}(\vec{x})) - L^t(h^{(i)}_{\hat\sigma}(\vec{0}))
\]
is linear.\footnote{We shift $L^t\circ h^{(i)}_{\hat\sigma}$ purely for technical reasons. We do it so that $g^{(i),\,t}$ is a \textit{linear} utility function, and thus it can be passed in as feedback to a regret minimizer.} 
Furthermore, from~\eqref{eq:regret equal} it follows that
\begin{equation}\label{eq:regret transformation}
    \max_{\phi^* \in \Phj} \sum_{t=1}^T L^t(\phi^*) - L^t(\tdev[i][\hat\sigma][\q_{\hat\sigma}^t])
    = \max_{\hat{\vec{q}}^* \in \Q_j}\sum_{t=1}^T g^{(i),\,t}_{\hat\sigma}(\hat{\vec{q}}^*) - g^{(i),\,t}_{\hat\sigma}(\q_{\hat\sigma}^t).
\end{equation}

\cref{eq:regret transformation} suggests that if the continuation strategies $\q_{\hat\sigma}^t \in \Q_j$ are picked by a regret minimizer $\tilde{\cR}^{(i)}_{\cQ, \hat\sigma}$ that observes the linear utility functions $g^{(i),\,t}_{\hat\sigma}$ at all times $t$, the regret cumulated with respect to utility functions $L^t$ by the corresponding elements $\tdev[i][\hat\sigma][\q_{\hat\sigma}^t]$ grows sublinearly. We make that construction explicit in Algorithm~\ref{algo:R hatsigma}. Its formal guarantees are stated in \cref{prop:regret bound R hatsigma}.

    \begin{algorithm}[ht]\caption{Regret minimizer $\tilde{\cR}^{(i)}_{\hat\sigma}$ for the set $\Phj \defeq \{\tdev[i][\hat\sigma][\q_{\hat\sigma}] : \q_{\hat\sigma}\in\Q_j\}$}\label{algo:R hatsigma}
        \DontPrintSemicolon
            \KwData{\bull $i \in [n]$ player\newline
                    \bull $\hat\sigma = (j,a) \in \Seqs_*$ trigger sequence\newline
                    \bull $\tilde{\cR}^{(i)}_{\cQ,\,\hat\sigma}$ regret minimizer for set $\Q_j$ (\emph{e.g.}, the CFR algorithm~\citep{zinkevich2008regret})}
            \BlankLine
        \Fn{\normalfont\textsc{NextElement}()}{
            $\q_{\hat\sigma}^t \gets \tilde{\cR}^{(i)}_{\cQ,\,\hat\sigma}.\textsc{NextElement}()$\;
            \Return{$\tdev[i][\hat\sigma][\q_{\hat\sigma}^t]$\ , represented in memory implicitly through the vector $\q_{\hat\sigma}^t$}
        }
        \Hline{}
        \Fn{\normalfont\textsc{ObserveUtility}($L^t$)}{
            $g^{(i),\,t}_{\hat\sigma} \gets \text{linear function }\bbR^{|\Seqs_j|}\ni\vec{x} \mapsto L^t(h^{(i)}_{\hat\sigma}(\vec{x})) - L^t(h^{(i)}_{\hat\sigma}(\vec{0}))$\;\vspace{1mm}
            $\tilde{\cR}^{(i)}_{\cQ,\,\hat\sigma}.\textsc{ObserveUtility}(g^{(i),\,t}_{\hat\sigma})$\;
        }
    \end{algorithm}

Algorithm~\ref{algo:R hatsigma} can be instantiated with any regret minimizer $\tilde{\cR}^{(i)}_{\cQ,\,\hat\sigma}$ for the set of sequence-form strategies $\Q_j$. The following proposition formalizes the cumulative regret guarantee when  $\tilde{\cR}^{(i)}_{\cQ,\,\hat\sigma}$ is set to the CFR algorithm~\citep{zinkevich2008regret}, which so far has arguably been the most widely used regret minimizer for sequence-form strategy spaces.

\begin{proposition}\label{prop:regret bound R hatsigma}
Let $i\in[n]$ be any player, and $\hat\sigma=(j,a)\in\Seqs_*$ be any trigger sequence, and consider the regret minimizer $\tilde{\cR}^{(i)}_{\hat\sigma}$ (Algorithm~\ref{algo:R hatsigma}), where $\tilde{\cR}_{\cQ,\,\hat\sigma}$ is set to be the CFR regret minimizer~\citep{zinkevich2008regret}. 
Upon observing a sequence of linear utility functions $L^1,\dots,L^T:\co\Ph\to\bbR$, the regret cumulated by the elements $\phi^1 = \tdev[i][\hat\sigma][\vec{q}_{\hat\sigma}^1],\dots,\phi^T = \tdev[i][\hat\sigma][\vec{q}_{\hat\sigma}^T]$ output by $\tilde{\cR}^{(i)}_{\hat\sigma}$ satisfies
    \[
        R^T = \max_{\phi^*\in\Phj} \sum_{t=1}^T L^t(\phi^*) - L^t(\phi^t) \le D\, |\Seqs_j|\, \sqrt{T},
    \]
    where $D$ is the maximum range of $L^1, \dots, L^t$, that is, any constant such that $\max_{\phi,\phi'\in\Phj}\{L^t(\phi)-L^t(\phi')\}\le D$ for all $t=1,\dots, T$.
    Furthermore, the \textsc{NextElement} and the \textsc{ObserveUtility} operations run in $O(|\Seqs|)$ time.
\end{proposition}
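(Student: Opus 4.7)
The plan is to reduce the $\Phj$-regret of the outputs $\phi^t = \tdev[i][\hat\sigma][\vec{q}_{\hat\sigma}^t]$ to ordinary cumulative regret on $\Q_j$ incurred by the iterates $\vec{q}_{\hat\sigma}^t$ of the inner CFR minimizer, and then invoke the classical CFR regret bound. The reduction is immediate from~\eqref{eq:regret transformation}, which gives
\[
R^T = \max_{\hat{\vec{q}}^*\in\Q_j}\sum_{t=1}^T g^{(i),\,t}_{\hat\sigma}(\hat{\vec{q}}^*) - g^{(i),\,t}_{\hat\sigma}(\vec{q}_{\hat\sigma}^t),
\]
so $R^T$ is exactly the cumulative regret incurred by $\tilde{\cR}^{(i)}_{\cQ,\hat\sigma}$. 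I would next observe that $g^{(i),t}_{\hat\sigma}$ and $L^t \circ h^{(i)}_{\hat\sigma}$ differ only by the constant $L^t(h^{(i)}_{\hat\sigma}(\vec{0}))$, and that, by \cref{lem:ph j structure}, $h^{(i)}_{\hat\sigma}$ maps $\Q_j$ onto $\Phj$. Hence the range of $g^{(i),t}_{\hat\sigma}$ over $\Q_j$ is bounded above by the range $D$ of $L^t$ over $\Phj$.

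With these preliminaries, I would invoke the classical CFR regret bound. When CFR is instantiated with regret matching as the local regret minimizer at each information set $j' \succeq j$ and observes utilities of range at most $D$, the local regret at $j'$ is bounded by $D\sqrt{|A(j')|\,T}$. Summing over all descendant information sets and using the elementary inequality $\sqrt{|A(j')|} \le |A(j')|$ yields
\[
\sum_{j' \succeq j} D\sqrt{|A(j')|\,T} \le D\sqrt{T}\sum_{j' \succeq j} |A(j')| = D\,|\Seqs_j|\,\sqrt{T},
\]
where the last equality uses the bijection between $\Seqs_j$ and the collection of pairs $(j',a)$ with $j' \succeq j$.

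Finally I would address the runtime claims. \textsc{NextElement} merely calls CFR's \textsc{NextElement}, an $O(|\Seqs_j|)$ operation, and returns $\tdev[i][\hat\sigma][\vec{q}_{\hat\sigma}^t]$ implicitly via $\vec{q}_{\hat\sigma}^t$. The substantive claim is for \textsc{ObserveUtility}, where one must compute the canonical representation of $g^{(i),t}_{\hat\sigma}$ on $\bbR^{|\Seqs_j|}$ starting from the canonical matrix $\vec{\Lambda}$ of $L^t$. The key combinatorial observation is that, by~\eqref{eq:Mdev}, the only entries of $\Mdev[i][\hat\sigma][\vec{x}]$ that depend on $\vec{x}$ lie in column $\hat\sigma$ and are indexed by rows $\sigma_r \succeq j$, where they equal $\vec{x}[\sigma_r]$. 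Consequently, $g^{(i),t}_{\hat\sigma}$ is represented by the restricted column $\vec{\Lambda}[\cdot,\hat\sigma]$ over $\sigma_r \in \Seqs_j$, which can be extracted and handed to the inner CFR minimizer in $O(|\Seqs_j|)$ time. The main obstacle is to properly track the range of $g^{(i),t}_{\hat\sigma}$ through the affine reparameterization and to confirm that the classical CFR bound scales with this bounded range rather than an inflated quantity; once that is pinned down, the remaining steps reduce to standard CFR machinery and an elementary summation over $|A(j')|$.
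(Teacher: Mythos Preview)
Your proposal is correct and follows essentially the same approach as the paper: reduce to the inner CFR regret via \eqref{eq:regret transformation}, bound the range of $g^{(i),t}_{\hat\sigma}$ by $D$ using \cref{lem:ph j structure}, apply the CFR bound $D\big(\sum_{j'\succeq j}\sqrt{|\A(j')|}\big)\sqrt{T}$ and relax $\sqrt{|\A(j')|}\le|\A(j')|$, and for the complexity identify $\langle g^{(i),t}_{\hat\sigma}\rangle$ with the $\hat\sigma$-column of $\langle L^t\rangle$ restricted to rows in $\Seqs_j$. The paper's proof does exactly these steps in the same order.
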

\begin{proof}
    The regret cumulated by $\tilde{\cR}^{(i)}_{\hat\sigma}$ upon observing linear utility functions $L^1,\dots,L^t$ equals the regret cumulated by the CFR algorithm upon observing linear utility functions $g^{(i),\,t}_{\hat\sigma} : \bbR^{|\Seqs_j|}\ni\vec{x} \mapsto L^t(h^{(i)}_{\hat\sigma}(\vec{x})) - L^t(h^{(i)}_{\hat\sigma}(\vec{0}))$, as shown in \eqref{eq:regret transformation}. Furthermore, the range of $g^{(i),\,t}_{\hat\sigma}$ satisfies the inequality
    \begin{align*}
        \max_{\vec{q},\,\vec{q}'\in\Q_j} g^{(i),\,t}_{\hat\sigma}(\vec{q}) - g^{(i),\,t}_{\hat\sigma}(\vec{q}') &= \max_{\vec{q},\,\vec{q}'\in\Q_j} L^t(h^{(i)}_{\hat\sigma}(\vec{q})) - L^t(h^{(i)}_{\hat\sigma}(\vec{q}'))\\
            &= \max_{\phi,\,\phi'\in\Phj} L^t(\phi)-L^t(\phi') \le D.
    \end{align*}
    So, applying the regret bound of the CFR algorithm (Theorems~3 and~4 of~\citet{zinkevich2008regret}),
    \[
        R^T \le D \mleft(\sum_{j' \succeq j}\sqrt{|\A(j')|}\mright)\sqrt{T} \le D\mleft(\sum_{j'\succeq j} |\A(j')|\mright)\sqrt{T} = D|\Seqs_j|\sqrt{T},
    \]
    completing the proof of the regret bound.
    
    The complexity analysis of the \textsc{NextElement} operation follows trivially from the fact that CFR's \textsc{NextElement} operation runs in linear time in $|\Seqs_j|$. So, we focus on the complexity of the \textsc{ObserveUtility} operation. Fix any time $t$, and let $\vec{\Lambda}^t \defeq \langle L^t\rangle$ be the canonical representation of the linear utility function $L^t$ (\cref{sec:notation}). Since the canonical representation of $h^{(i)}_{\hat\sigma}(\vec{x})$ is the matrix $\Mdev[i][\hat\sigma][\vec{x}]$ for all $\vec{x}\in\bbR_{\ge0}^{|\Seqs_j|}$, using~\eqref{eq:functional canonical} we obtain
    \begin{align*}
        L^t(h^{(i)}_{\hat\sigma}(\vec{x})) - L^t(h^{(i)}_{\hat\sigma}(\vec{0})) &=
        \sum_{\sigma_r,\sigma_c\in\Seqs_j} \vec{\Lambda}^t[\sigma_r,\sigma_c]\,\mleft(\Mdev[i][\hat\sigma][\vec{x}]{}[\sigma_r,\sigma_c] - \Mdev[i][\hat\sigma][\vec{0}]{}[\sigma_r,\sigma_c]\mright)\\
        &= \sum_{\sigma_r\ge j} \vec{\Lambda}^t[\sigma_r,\hat\sigma]\,\vec{x}[\sigma_r],
    \end{align*}
    where the second equality follows from expanding the definitions of $\Mdev[i][\hat\sigma][\vec{x}]{}[\sigma_r,\sigma_c]$ and $\Mdev[i][\hat\sigma][\vec{0}]$ given in~\eqref{eq:Mdev}. So, the canonical representation $\langle g^{(i),\,t}_{\hat\sigma}\rangle$ of $g^{(i),\,t}_{\hat\sigma}$ is the vector $(\vec{\Lambda}^t[\sigma_r,\hat\sigma])_{\sigma_r\in\Seqs_j}$, which can be clearly computed and stored in memory in $O(|\Seqs_j|)$ time. Using the fact that CFR's \textsc{ObserveUtility} operation runs in linear time in $|\Seqs_j|$, the complexity bound of the statement follows.
\end{proof}

\subsubsection{Regret Minimizer for $\co\Ph$}\label{sec:rm ch}

Recently, \citet{Farina19:Regret} showed that a regret minimizer for a composite set of the form $\co\{\cX_1, \dots, \cX_m\}$ can be constructed by combining any individual regret minimizers for $\cX_1,\dots,\cX_m$ through a construction---called a \emph{regret circuit}---which we describe next.

\begin{proposition}[\citet{Farina19:Regret}, Section 4.3\footnote{Technically, \citet{Farina19:Regret} only prove the bound~\eqref{eq:circuit bound} for the case $m=2$. However, as mentioned by the authors, the extension to generic $m\in\bbN$ is direct.
}]\label{prop:ch circuit}
    Let $\cX_1, \dots, \cX_m$ be a finite collection of sets, and let $\cR_1, \dots,\cR_m$ be any regret minimizers for them. Furthermore, let $\cR_\Delta$ be any regret minimizer for the $m$-simplex $\Delta^m \defeq \{(\lambda_1, \dots,\lambda_m) \in \bbR_{\ge 0}^m, \sum_k\lambda_K = 1\}$. A regret minimizer $\cR_\text{co}$ for the set $\co\{\cX_1, \dots, \cX_m\}$ can be constructed starting from $\cR_1,\dots,\cR_m$ and $\cR_\Delta$ as follows.
    \begin{itemize}
        \item $\cR_\text{co}.\textsc{NextElement}$ calls \textsc{NextElement} on each of the regret minimizers $\cR_1,\dots,\cR_m$, obtaining elements $\vec{x}_1^t, \dots, \vec{x}_m^t$. Then, it calls the \textsc{NextElement} operation on $\cR_\Delta$, obtaining an element of the simplex $\vec{\lambda}^t = (\lambda_1^t, \dots,\lambda_m^t)$. Finally, it returns the element
        \[
            \lambda_1^t \vec{x}_1^t + \dots + \lambda_m^t \vec{x}_m^t \in \co\{\cX_1, \dots, \cX_m\}.
        \]
        \item $\cR_\text{co}.\textsc{ObserveUtility}(L^t)$ forwards the linear utility function $L^t$ to each of the regret minimizers $\cR_1, \dots, \cR_m$. Then, it calls the \textsc{ObserveUtility} operation on $\cR_\text{co}$ with the linear utility function $(\lambda_1, \dots,\lambda_m) \mapsto L^t(\vec{x}^t_1)\lambda_1 + \dots + L^t(\vec{x}_m^t)\lambda_m$.
    \end{itemize}
    In doing so, the regret $R_\text{co}^T$ cumulated by $\cR_\text{co}$ upon observing any $T$ linear utility functions relates to the regrets $R_1^T, \dots, R_m^T, R_\Delta^T$ cumulated by $\cR_1,\dots,\cR_m,\cR_\Delta$, respectively, according to the inequality
    \begin{equation}\label{eq:circuit bound}
        R_\text{co}^T \le R_\Delta^T + \max\{R_1^T, \dots, R_m^T\}.
    \end{equation}
\end{proposition}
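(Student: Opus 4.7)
The plan is to bound the cumulative regret of $\cR_\text{co}$ by writing any comparator as a convex combination of elements from the individual sets and then using linearity of the observed utility functions to split the regret into two pieces: one controlled by the regrets of $\cR_1, \dots, \cR_m$ and the other by the regret of $\cR_\Delta$.

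First, I would fix an arbitrary $\vec{x}^* \in \co\{\cX_1, \dots, \cX_m\}$ and decompose it as $\vec{x}^* = \sum_{k=1}^m \mu_k^* \vec{z}_k^*$, where $\vec{z}_k^* \in \co\cX_k$ and $\vec{\mu}^* = (\mu_1^*, \dots, \mu_m^*) \in \Delta^m$; this is always possible because any convex combination of elements of $\cX_1 \cup \cdots \cup \cX_m$ can be rearranged by grouping the contributions coming from each $\cX_k$. Using the linearity of $L^t$, the per-round instantaneous regret of $\cR_\text{co}$ against $\vec{x}^*$ becomes
\[
L^t(\vec{x}^*) - L^t(\vec{y}^t) \;=\; \sum_{k=1}^m \mu_k^* L^t(\vec{z}_k^*) \;-\; \sum_{k=1}^m \lambda_k^t L^t(\vec{x}_k^t),
\]
where $\vec{y}^t = \sum_k \lambda_k^t \vec{x}_k^t$ is the output of $\cR_\text{co}$ at time $t$. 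The key algebraic manipulation is then to add and subtract $\sum_{k=1}^m \mu_k^* L^t(\vec{x}_k^t)$ and sum over $t$, which rewrites the total regret against $\vec{x}^*$ as
\[
\sum_{k=1}^m \mu_k^* \sum_{t=1}^T \bigl(L^t(\vec{z}_k^*) - L^t(\vec{x}_k^t)\bigr) \;+\; \sum_{t=1}^T \sum_{k=1}^m \bigl(\mu_k^* - \lambda_k^t\bigr)\, L^t(\vec{x}_k^t).
\]

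The first double sum is a convex combination (in $\vec{\mu}^*$) of the cumulative regrets incurred by each $\cR_k$ against the comparator $\vec{z}_k^*$; although $\vec{z}_k^* \in \co\cX_k$ rather than $\cX_k$, linearity of $L^t$ implies that the supremum of $\sum_t L^t(\cdot)$ over $\co\cX_k$ is attained at an extreme point, hence in $\cX_k$, so this term is upper bounded by $\sum_k \mu_k^* R_k^T \le \max_k R_k^T$ since $\vec{\mu}^* \in \Delta^m$. The second term rewrites as $\sum_t \bigl(M^t(\vec{\mu}^*) - M^t(\vec{\lambda}^t)\bigr)$, where $M^t \colon \vec{\lambda} \mapsto \sum_k L^t(\vec{x}_k^t)\,\lambda_k$ is exactly the linear utility function that $\cR_\text{co}.\textsc{ObserveUtility}(L^t)$ forwards to $\cR_\Delta$; it is therefore bounded by $R_\Delta^T$. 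Taking the supremum over $\vec{x}^*$ (equivalently over $\vec{\mu}^* \in \Delta^m$ and $\vec{z}_1^*, \dots, \vec{z}_m^*$) then yields the claimed inequality $R_\text{co}^T \le R_\Delta^T + \max_k R_k^T$. The only difficulty I anticipate is bookkeeping---keeping the double indices straight through the telescoping step and being careful about the (harmless) gap between $\cX_k$ and $\co\cX_k$ when invoking $\cR_k$'s regret guarantee---rather than anything analytically deep.
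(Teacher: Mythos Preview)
Your argument is correct. The paper does not actually supply a proof of this proposition: it is quoted from \citet{Farina19:Regret} (with a footnote noting that the cited reference handles only $m=2$ and that the extension to general $m$ is immediate). Your decomposition---writing an arbitrary comparator as $\sum_k \mu_k^* \vec z_k^*$, inserting and removing $\sum_k \mu_k^* L^t(\vec x_k^t)$, and then identifying the two resulting pieces with the regrets of the $\cR_k$'s and of $\cR_\Delta$---is exactly the standard argument behind the convex-hull regret circuit, and your handling of the $\cX_k$ versus $\co\cX_k$ issue (via linearity of $\sum_t L^t$) is the right way to close that gap. There is nothing to correct.
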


We apply the construction described in \cref{prop:ch circuit} to obtain our regret minimizer $\cR^{(i)}$ for the set $\co\Ph = \co\{\Phj:\hat\sigma\in\Seqs\}$ starting from the regret minimizers $\tilde{\cR}^{(i)}_{\hat\sigma}$ (Algorithm~\ref{algo:R hatsigma}), one for each sequence $\hat\sigma\in\Seqs_*$, as well as any regret minimizer $\cR_\Delta^{(i)}$ for the simplex $\Delta^{|\Seqs_*|}$. Pseudocode is given in Algorithm~\ref{algo:R i}.

\begin{algorithm}[th]\caption{Regret minimizer $\tilde{\cR}^{(i)}$ for the set $\co\Ph = \co\{\Phj:\hat\sigma\in\Seqs\}$}\label{algo:R i}
        \DontPrintSemicolon
            \KwData{\bull $i \in [n]$ player\newline
                    \bull $\tilde{\cR}^{(i)}_{\hat\sigma}$ (one for each $\hat\sigma\in\Seqs_*$) regret minimizer for $\Phj$ as defined in Algorithm~\ref{algo:R hatsigma}\newline
                    \bull $\cR^{(i)}_\Delta$ regret minimizer for $\Delta^{|\Seqs_*|}$ (\emph{e.g.}, regret matching~\citep{hart2000simple})}
            \BlankLine
        \Fn{\normalfont\textsc{NextElement}()}{
            $\vec{\lambda}^t \gets \tilde{\cR}^{(i)}_{\Delta}.\textsc{NextElement}()$\;
            \For{$\hat\sigma\in\Seqs_*$}{
                $\tdev[i][\hat\sigma][\vec{q}_{\hat\sigma}^t] \gets \tilde{\cR}^{(i)}_{\hat\sigma}.\textsc{NextElement}()$\;
            }
            \Return{$\sum_{\hat\sigma\in\Seqs_*} \vec{\lambda}^t[\hat\sigma]\,\tdev[i][\hat\sigma][\vec{q}_{\hat\sigma}^t]$\ , represented in memory implicitly as list $\{(\vec{\lambda}^t[\hat\sigma], \vec{q}^t_{\hat\sigma})\}_{\hat\sigma\in\Seqs_*}$}
        }
        \Hline{}
        \Fn{\normalfont\textsc{ObserveUtility}($L^t$)}{
            \For{$\hat\sigma\in\Seqs_*$}{
                $\tilde{\cR}^{(i)}_{\hat\sigma}.\textsc{ObserveUtility}(L^t)$\;
            }
            $\ell_\lambda^t \gets$ linear utility function $\vec{\lambda} \mapsto \sum_{\hat\sigma\in\Seqs_*} \vec{\lambda}[\hat\sigma]\,L^t(\tdev[i][\hat\sigma][\vec{q}_{\hat\sigma}^t])$\;
            $\tilde{\cR}^{(i)}_{\Delta}.\textsc{ObserveUtility}(\ell_\lambda^t)$\;
        }
    \end{algorithm}

Combining \cref{prop:ch circuit} and \cref{prop:regret bound R hatsigma}, we obtain the following result.

\begin{theorem}\label{thm:analysis R i}
    Consider the regret minimizer $\tilde{\cR}^{(i)}$ (Algorithm~\ref{algo:R i}), where $\cR_\Delta^{(i)}$ is set to the regret matching algorithm, and $\tilde{\cR}^{(i)}_{\hat\sigma}$ is instantiated as described in \cref{prop:regret bound R hatsigma}. Upon observing a sequence of linear utility functions $L^1,\dots,L^T:\co\Ph\to\bbR$, the regret cumulated by $\tilde{\cR}^{(i)}$ satisfies
    \[
        R^T = \max_{\phi^* \in \co\Ph}\sum_{t=1}^T L^t(\phi^*) - L^t(\phi^t) \le 2D\,|\Seqs|\,\sqrt{T},
    \]
    where $D$ is the maximum range of $L^1, \dots, L^t$, that is, any constant such that $\max_{\phi,\phi'\in\co\Ph}\{L^t(\phi)-L^t(\phi')\}\le D$ for all $t=1,\dots, T$.
    Furthermore, the \textsc{NextElement} and the \textsc{ObserveUtility} operations run in $O(|\Seqs|^2)$ time.
\end{theorem}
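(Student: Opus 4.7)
The plan is to derive both claims of the theorem as a direct consequence of the regret-circuit decomposition in \cref{prop:ch circuit} applied to the expansion $\co\Ph = \co\{\Phj : \hat\sigma\in\Seqs_*\}$ established in~\eqref{eq:co expansion}. Concretely, I will instantiate the construction of \cref{prop:ch circuit} with $\cX_{\hat\sigma} = \Phj$ for each $\hat\sigma\in\Seqs_*$, with the $\Phj$-regret minimizers $\tilde{\cR}^{(i)}_{\hat\sigma}$ from \cref{algo:R hatsigma}, and with regret matching playing the role of $\cR_\Delta^{(i)}$ on the simplex $\Delta^{|\Seqs_*|}$. The resulting composite regret minimizer is precisely the one implemented in \cref{algo:R i}, so the bound~\eqref{eq:circuit bound} gives $R^T \le R_\Delta^T + \max_{\hat\sigma\in\Seqs_*} R^T_{\hat\sigma}$.

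Next I bound each term separately. For the simplex component, the classical analysis of regret matching (\citealp{hart2000simple}) yields $R_\Delta^T \le D\sqrt{|\Seqs_*|\,T} \le D\,|\Seqs|\sqrt{T}$, using that the utility functions $\ell_\lambda^t$ have range bounded by $D$ since they are convex combinations of values $L^t(\phi)$ for $\phi\in\co\Ph$. For each individual component $\tilde{\cR}^{(i)}_{\hat\sigma}$, \cref{prop:regret bound R hatsigma} gives $R^T_{\hat\sigma} \le D\,|\Seqs_j|\sqrt{T} \le D\,|\Seqs|\sqrt{T}$. Summing both contributions yields the claimed $R^T \le 2D\,|\Seqs|\sqrt{T}$.

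For the running-time claim, I will inspect the pseudocode in \cref{algo:R i}. Both \textsc{NextElement} and \textsc{ObserveUtility} loop over $\hat\sigma\in\Seqs_*$ and invoke the corresponding operation on $\tilde{\cR}^{(i)}_{\hat\sigma}$, each costing $O(|\Seqs|)$ by \cref{prop:regret bound R hatsigma}; this contributes $O(|\Seqs_*|\cdot|\Seqs|) = O(|\Seqs|^2)$. The single call to $\tilde{\cR}^{(i)}_\Delta.\textsc{NextElement}$ and the construction of $\ell_\lambda^t$ (which requires evaluating $L^t(\tdev[i][\hat\sigma][\vec{q}^t_{\hat\sigma}])$ for each $\hat\sigma$) similarly cost $O(|\Seqs|^2)$, since each evaluation of $L^t$ on an implicitly stored deviation matrix of the form $\Mdev[i][\hat\sigma][\vec{q}^t_{\hat\sigma}]$ can be done in $O(|\Seqs_j|)$ time by reading off only the column indexed by $\hat\sigma$ (which is the only column that differs from the identity), exactly as in the canonical-representation argument from the proof of \cref{prop:regret bound R hatsigma}. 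Summing over $\hat\sigma$ gives the $O(|\Seqs|^2)$ bound.

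I expect no real obstacle here: the essential technical content has already been developed in \cref{prop:regret bound R hatsigma} (the per-sequence regret and cost bounds) and \cref{prop:ch circuit} (the composition lemma), so the proof is essentially a bookkeeping exercise. The only subtlety worth spelling out carefully is verifying that $\ell_\lambda^t$ inherits the range bound $D$ from $L^t$ (which is immediate from its definition as a convex combination of values of $L^t$) and that each $\Phj$ is itself a subset of $\co\Ph$ on which $L^t$ has range at most $D$, so that \cref{prop:regret bound R hatsigma} is applicable with the same constant $D$.
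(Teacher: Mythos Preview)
Your proposal is correct and follows essentially the same route as the paper: instantiate \cref{prop:ch circuit} on the decomposition~\eqref{eq:co expansion}, bound $R_\Delta^T$ via the regret-matching guarantee and each $R_{\hat\sigma}^T$ via \cref{prop:regret bound R hatsigma}, then sum. One small imprecision: the matrix $\Mdev[i][\hat\sigma][\vec{q}_{\hat\sigma}^t]$ is \emph{not} the identity outside column $\hat\sigma$ (columns indexed by $\sigma_c\succ\hat\sigma$ are zero, and the remaining diagonal still has $O(|\Seqs|)$ entries), so a single evaluation of $L^t$ costs $O(|\Seqs|)$ rather than $O(|\Seqs_j|)$; this does not affect your final $O(|\Seqs|^2)$ bound.
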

\begin{proof}
    At all $t$, the range of the linear utility function $\vec{\lambda} \mapsto \sum_{\hat\sigma\in\Seqs_*} \vec{\lambda}[\hat\sigma]\,L^t(\tdev[i][\hat\sigma][\vec{q}_{\hat\sigma}^t])$ is upper bounded by $D$. Hence, from the known regret bound of the regret matching algorithm, the regret cumulated by $\cR_\Delta^{(i)}$ after $T$ iterations is upper bound as
    \[
        R^T_\Delta \le D \sqrt{|\Sigma^{(i)}|}\sqrt{T}\le D |\Sigma^{(i)}|\sqrt{T}.
    \]
    On the other hand, the regret bound in \cref{prop:regret bound R hatsigma} shows that, for all $\hat\sigma=(j,a)\in\Seqs_*$, the regret $R^T_{\hat\sigma}$ cumulated by $\tilde{\cR}_{\hat\sigma}^{(i)}$ is upper bounded as $R^T_{\hat\sigma} \le D |\Seqs_j| \sqrt{T}$. So, using~\eqref{eq:circuit bound} together with the fact that $|\Seqs_j| \le |\Seqs|$ for all $j\in\cJ$, yields the regret bound in the statement.
    
    The complexity analysis of \textsc{NextElement} is completely straightforward. So, we focus on the complexity of \textsc{ObserveLoss}. There, the only operation whose analysis is not immediately obvious is the construction of the linear utility function $\ell^t_\lambda: \vec{\lambda} \mapsto \sum_{\hat\sigma\in\Seqs_*} \vec{\lambda}[\hat\sigma]\,L^t(\tdev[i][\hat\sigma][\vec{q}_{\hat\sigma}^t])$, where it is necessary to check that its canonical representation (\cref{sec:notation}), given by the vector $\mleft(L^t(\tdev[i][\hat\sigma][\vec{q}_{\hat\sigma}^t])\mright)_{\hat\sigma\in\Seqs_*}$, can be computed in $O(|\Seqs|^2)$ time. Fix any $\hat\sigma\in\Seqs_*$. The canonical representation of $\tdev[i][\hat\sigma][\vec{q}_{\hat\sigma}^t]$ is $\Mdev[i][\hat\sigma][\vec{q}_{\hat\sigma}^t]$, which is a matrix with $O(|\Seqs|)$ nonzero entries. So, using~\eqref{eq:functional canonical}, the evaluation of $L^t(\tdev[i][\hat\sigma][\vec{q}_{\hat\sigma}^t])$ via the canonical representations of $L^t$ (given as input) and $\tdev[i][\hat\sigma][\vec{q}_{\hat\sigma}^t]$ takes $O(|\Seqs|)$ time. So, the representation of $\ell_\lambda$ can be computed in $O(|\Seqs|^2)$ time, confirming the analysis in the statement.
\end{proof}
\subsection{Computation of the Next Strategy}\label{sec:fixed point}

In this subsection we complete the construction of our $\Tph$-regret minimizer for $\Q$ (\cref{prob:mixed}) started in \cref{sec:relaxing}, by showing that every transformation $\phi\in\co\Ph$ admits a fixed point $\Q \ni \vec{q} = \phi(\vec{q}) $, and that such a fixed point can be computed in time quadratic in the number of sequences $\Seqs$ of Player~$i$.

As a key step in our algorithm, we will use the following well-known result about stationary distributions of stochastic matrices.

\begin{lemma}
    Let $\vec{M} \in \bbS^m$ be a stochastic matrix. Then, $\vec{M}$ admits a fixed point $\Delta^m\ni\vec{x} = \vec{M}\vec{x}$. Furthermore, such a fixed point can be computed in polynomial time in $m$.
\end{lemma}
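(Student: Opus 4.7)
The plan is to handle existence and computability separately, both via standard linear-algebraic arguments.

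For existence, I would observe that because $\vec{M}\in\bbS^m$ is column-stochastic, the linear map $T:\vec{x}\mapsto\vec{M}\vec{x}$ sends $\Delta^m$ into itself: nonnegativity of the output follows from nonnegativity of $\vec{M}$, and
\[
    \mathbf{1}^\top(\vec{M}\vec{x}) = (\mathbf{1}^\top\vec{M})\vec{x} = \mathbf{1}^\top\vec{x} = 1,
\]
using the defining property that each column of $\vec{M}$ sums to $1$. Since $\Delta^m$ is a nonempty compact convex subset of $\bbR^m$ and $T$ is continuous, Brouwer's fixed-point theorem immediately yields some $\vec{x}\in\Delta^m$ with $\vec{M}\vec{x}=\vec{x}$, which proves existence.

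For the constructive part, I would exhibit such a fixed point as the solution of a linear program. Specifically, consider the feasibility problem of finding $\vec{x}\in\bbR^m$ satisfying
\[
    (\vec{I} - \vec{M})\vec{x} = \vec{0}, \qquad \mathbf{1}^\top\vec{x} = 1, \qquad \vec{x} \ge \vec{0}.
\]
By the existence argument, this linear program is feasible, and any basic feasible solution is a fixed point of $\vec{M}$ in $\Delta^m$. Since the program has $O(m)$ constraints and variables with rational data derivable from $\vec{M}$ of bit-size polynomial in the input description of $\vec{M}$, any polynomial-time LP algorithm (such as the ellipsoid method or an interior-point method) computes a solution in time polynomial in $m$.

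No serious obstacle is expected, since this lemma is entirely classical; the only mild subtlety is the justification that a \emph{nonnegative} fixed point exists (not merely a vector in the kernel of $\vec{I}-\vec{M}$), which is precisely what Brouwer's theorem supplies and which allows us to conclude feasibility of the LP above. An alternative computational route would be to invoke the Perron--Frobenius theorem together with Gaussian elimination on $\vec{I}-\vec{M}$, but the LP formulation above is self-contained and keeps the exposition minimal.
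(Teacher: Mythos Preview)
Your proposal is correct. The Brouwer argument for existence and the LP formulation for polynomial-time computability are both sound, and the only subtlety you flag (that one needs a nonnegative kernel vector, not just any kernel vector) is handled appropriately.

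The paper, however, does not actually prove this lemma: it states it as a well-known result about stationary distributions of stochastic matrices and simply points the reader to the literature (e.g., \citep{Paige75:Computation}) for algorithms. So your approach differs from the paper's only in that you supply a self-contained argument where the paper appeals to common knowledge. Your proof is a perfectly adequate substitute; if anything, the LP route is more heavyweight than necessary, since one can also argue directly via Gaussian elimination on $\vec{I}-\vec{M}$ restricted to a suitable strongly connected component, but your version has the virtue of being uniform and brief.
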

Several algorithms are known for computing fixed points of stochastic matrices (see, \emph{e.g.}, \citep{Paige75:Computation} for a comparison of eight different methods). Since the particular choice of method is irrelevant, in this article we will make the following assumption.
\begin{assumption}
    Given any $m \in \bbN$, we assume access to an oracle for computing a fixed point of any $m\times m$ stochastic matrix $\vec{M}$. Furthermore, we assume that the oracle requires at most $O(\textsf{\upshape FP}(m))$ time in the worst case to compute any such fixed point. 
\end{assumption}

Our algorithm for computing the fixed point requires that the transformation $\phi\in\co\Phi$ be expressed as a convex combination of elements from the sets $\{\Phj\}_{\hat\sigma\in\Seqs_*}$, that is, an expression of the form
\[
\phi = \sum_{\hat\sigma\in\Seqs_*}\lambda_{\hat\sigma}\, \tdev[i][\hat\sigma][\vec{q}_{\hat\sigma}],\quad\text{ where } \sum_{\hat\sigma\in\Seqs_*}\lambda_{\hat\sigma}=1, \text{ and }\lambda_{\hat\sigma} \ge 0,\ \ \vec{q}_{\hat\sigma}\in\Q_j \quad\! \forall\ \hat\sigma=(j,a)\in\Seqs_*,
\numberthis{eq:phi combination}
\]
in accordance with the characterization of $\co\Ph$ established in \cref{eq:co expansion}, and \cref{lem:ph j structure}. Note that our regret minimizer $\tilde{\cR}^{(i)}$ for the set $\co\Ph$ (\cref{algo:R i}) already outputs transformations $\phi$ expressed in the form above.
Our algorithm operates \emph{incrementally}, constructing a fixed point strategy $\vec{q}$ for $\phi$ information set by information set, in a top down fashion. To formalize this notion of top-down construction, we will make use of the two following definitions.

\begin{definition}\label{def:trunk}
    Let $i \in [n]$ be a player, and $J \subseteq \cJ$ be a subset of that player's information sets. We say that $J$ is \emph{a trunk of $\cJ$} if, for every $j \in J$, all predecessors of $j$ (that is, all $j'\in\cJ, j'\prec j$) are also in $J$.
\end{definition}

\begin{example}
    In the small game of \cref{fig:preliminary_example} (Left), the sets $\emptyseq$, $\{\textsc{a}\}$, $\{\textsc{a},\textsc{b}\}$, $\{\textsc{a},\textsc{c}\}$, $\{\textsc{a},\textsc{d}\}$, $\{\textsc{a},\textsc{b},\textsc{d}\}$, $\{\textsc{a},\textsc{c}, \textsc{d}\}$, and $ \{\textsc{a},\textsc{b},\textsc{c}\,\textsc{d}\} = \cJ[1]$ are all possible valid trunks for Player~$1$. Contrarily, set $J = \{\textsc{b}\}$ is \emph{not} a trunk for Player~$1$, because $\textsc{a} \prec \textsc{b}$ and yet $\textsc{a} \not\in J$.
\end{example}

\begin{definition}\label{def:J partial}
    Let $i\in[n]$ be a player, $J \subseteq \cJ$ be a trunk of $\cJ$ (\cref{def:trunk}), and $\phi\in\co\Ph$. We say that a vector $\vec{x}\in\bbR_{\ge0}^{|\Seqs|}$ is a \emph{$J$-partial fixed point of $\phi$} if
    it satisfies the sequence-form constraints at all $j\in J$, that is,
    \[
        \vec{x}[\emptyseq] = 1, \qquad \vec{x}[\sigma^{(i)}(j)] = \sum_{a \in \A(j)} \vec{x}[(j,a)] \quad \forall\ j \in J,
        \numberthis{eq:partial sf}
    \]
    and furthermore
    \[
        \phi(\vec{x})[\emptyseq] = \vec{x}[\emptyseq] = 1, \qquad \phi(\vec{x})[(j,a)] = \vec{x}[(j,a)]\quad\forall\ j\in J,\ a\in \A(j).
        \numberthis{eq:partial fp}
    \]
\end{definition}

It follows from \cref{def:J partial} that a $\cJ$-partial fixed point of $\phi$ is a vector $\vec{q}\in\Q$ such that $\vec{q} = \phi(\vec{q})$. 
The core insight of our algorithm lies in the fact that a $J$-partial fixed point can be cheaply promoted to be a $(J\cup\{j^*\})$-partial fixed point where $j^* \in \cJ \setminus J$ is any information set whose predecessors are all in $J$.
Algorithm~\ref{algo:extend} below gives an implementation of such a promotion: $\textsc{Extend}(\phi, J, j^*, \vec{x})$ starts with a $J$-partial fixed point $\vec{x}$ of $\phi$, and modifies all entries $\vec{x}[(j^*,a)]$, $a\in\A(j^*)$, so that $\vec{x}$ becomes a $(J\cup\{j^*\})$-partial fixed point.
Therefore, at a conceptual level, one can repeatedly invoke $\textsc{Extend}$, growing the trunk $J$ one information set at a time until $J = \cJ$. The following simple lemma establishes the basis of induction, by exhibiting an $\emptyset$-partial fixed point for any transformation $\phi\in\co\Ph$.
\begin{lemma}\label{lem:x0}
    Let $i\in[n]$ be a player, and $\phi = \sum_{\hat\sigma\in\Seqs_*}\lambda_{\hat\sigma}\tdev[i][\hat\sigma][\vec{q}_{\hat\sigma}]$ be any transformation in $\co\Ph$. Then, the vector $\vec{x}_0 \in \bbR_{\ge0}^{|\Seqs|}$, whose entries are all zeros except for $\vec{x}_0[\emptyseq] = 1$, is a $\emptyset$-partial fixed point of $\phi$.
\end{lemma}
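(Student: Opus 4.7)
The plan is to verify the two conditions of \cref{def:J partial} with $J = \emptyset$ directly. When $J$ is empty, the sequence-form constraints \eqref{eq:partial sf} reduce solely to $\vec{x}_0[\emptyseq] = 1$, which holds by construction of $\vec{x}_0$. Likewise, the fixed-point requirements \eqref{eq:partial fp} collapse to the single equation $\phi(\vec{x}_0)[\emptyseq] = 1$, and this is the only nontrivial statement to check.

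To verify the latter, I would exploit linearity and the convex combination $\sum_{\hat\sigma}\lambda_{\hat\sigma} = 1$ to reduce the claim to showing that, for every $\hat\sigma = (j,a)\in\Seqs_*$, the canonical trigger deviation function satisfies $\tdev[i][\hat\sigma][\vec{q}_{\hat\sigma}](\vec{x}_0)[\emptyseq] = 1$. Using the canonical matrix representation $\Mdev[i][\hat\sigma][\vec{q}_{\hat\sigma}]$ from \eqref{eq:Mdev}, the $\emptyseq$-th coordinate of the image is the dot product of the row indexed by $\emptyseq$ with $\vec{x}_0$. Because $\vec{x}_0$ has support only on $\emptyseq$, I only need to identify the entry $\Mdev[i][\hat\sigma][\vec{q}_{\hat\sigma}][\emptyseq,\emptyseq]$.

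Inspecting the three cases in \eqref{eq:Mdev} with $\sigma_r = \sigma_c = \emptyseq$: the second case requires $\sigma_c = \hat\sigma$, which fails since $\hat\sigma \in \Seqs_*$ and thus $\hat\sigma \neq \emptyseq$; the first case requires $\sigma_c \not\succeq \hat\sigma$ and $\sigma_r = \sigma_c$, both of which hold since $\emptyseq \prec \hat\sigma$ by definition of the partial ordering on $\Seqs$. Hence $\Mdev[i][\hat\sigma][\vec{q}_{\hat\sigma}][\emptyseq,\emptyseq] = 1$, and consequently $\tdev[i][\hat\sigma][\vec{q}_{\hat\sigma}](\vec{x}_0)[\emptyseq] = 1$ for every trigger sequence. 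Summing over $\hat\sigma$ weighted by $\lambda_{\hat\sigma}$ and using $\sum_{\hat\sigma}\lambda_{\hat\sigma} = 1$ yields $\phi(\vec{x}_0)[\emptyseq] = 1$, completing the verification.

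There is no serious obstacle here: the result is essentially a bookkeeping check on \eqref{eq:Mdev}, and the sparsity of $\vec{x}_0$ means we only ever look at a single matrix entry per summand. The only subtle point worth stating explicitly in the write-up is that $\emptyseq \not\succeq \hat\sigma$ for all $\hat\sigma \in \Seqs_*$, which places the entry $[\emptyseq,\emptyseq]$ in the first case of the matrix definition rather than the third (``otherwise'') case.
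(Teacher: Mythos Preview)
Your proposal is correct and follows essentially the same approach as the paper: both arguments inspect the column of $\Mdev[i][\hat\sigma][\vec{q}_{\hat\sigma}]$ indexed by $\emptyseq$ via the case analysis in~\eqref{eq:Mdev}, using that $\emptyseq \not\succeq \hat\sigma$ for every $\hat\sigma\in\Seqs_*$. The only cosmetic difference is that the paper computes the entire $\emptyseq$-column (obtaining the slightly stronger conclusion $\phi(\vec{x}_0)=\vec{x}_0$), whereas you compute just the single entry $[\emptyseq,\emptyseq]$ that is needed for the $\emptyset$-partial fixed-point condition.
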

\begin{proof}
    Condition~\eqref{eq:partial sf} is straightforward. So, we focus on~\eqref{eq:partial fp}.
    Fix any $\hat\sigma=(j,a)\in\Seqs_*$. The definition of $\Mdev[i][\hat\sigma][\hat{\vec{q}}]$, given in~\eqref{eq:Mdev}, implies that
    \[
        \Mdev[i][\hat\sigma][\hat{\vec{q}}_{\hat\sigma}]{}[\sigma_r, \emptyseq] = \begin{cases}
            1 & \text{if }\sigma_r = \emptyseq\\
            0 & \text{otherwise}
        \end{cases}
        \qquad\quad\forall\,\sigma_r\in\Seqs.
    \]
    Consequently, $\tdev[i][\hat\sigma][\hat{\vec{q}}_{\hat\sigma}](\vec{x}_0) = \Mdev[i][\hat\sigma][\hat{\vec{q}}_{\hat\sigma}]\,\vec{x}_0 = \vec{x}_0$ (from expanding the matrix-vector multiplication). So, $\phi(\vec{x}_0) = \sum_{\hat\sigma\in\Seqs_*} \lambda_{\hat\sigma}\tdev[i][\hat\sigma][\hat{\vec{q}}_{\hat\sigma}](\vec{x}_0) = \vec{x}_0$ and in particular $\phi(\vec{x}_0)[\emptyseq]=\vec{x}_0[\emptyseq]=1$. So, \eqref{eq:partial fp} holds, as we wanted to show.
\end{proof}

Before giving a proof of correctness and an analysis of the complexity of $\textsc{Extend}$, we illustrate an application of the algorithm in the simple extensive-form game of \cref{fig:preliminary_example}.

\begin{algorithm}[t]\caption{$\textsc{Extend}(\phi, J, j^*, \vec{x})$}\label{algo:extend}
        \DontPrintSemicolon
        \SetKwInput{KwInput}{Input\hspace{2.5mm}}
        \SetKwInput{KwOutput}{Output}
        \KwInput{%
            \bull $\phi = \sum_{\hat\sigma\in\Seqs_*} \lambda_{\hat\sigma} \tdev[i][\hat\sigma][\vec{q}_{\hat\sigma}] \in \co\Ph$ transformation for a player $i\in[n]$, represented\newline\hspace*{4.9cm} in memory implicitly as the list $\{(\lambda_{\hat\sigma}, \vec{q}_{\hat\sigma})\}_{\hat\sigma\in\Seqs_*}$\newline
            \bull $J \subseteq \cJ$ trunk for Player $i$\newline
            \bull $j^* \in \cJ$ information set not in $J$ such that its immediate predecessor is in $J$\newline
            \bull $\vec{x}\in\bbR_{\ge0}^{|\Seqs|}$~~ $J$-partial fixed point of $\phi$}
        \KwOutput{\bull $\vec{x}'\in\bbR_{\ge0}^{|\Seqs|}$~~ $(J\cup\{j^*\})$-partial fixed point of $\phi$}
            \BlankLine
        $\sigma_p \gets \sigma^{(i)}(j^*)$\label{line:sigmap}\;
        Let $\vec{r} \in \bbR_{\ge0}^{|\A(j^*)|}$ be the vector whose entries are defined, for all $a \in \A(j^*)$, as\newline $\hspace*{8mm}
            \displaystyle\vec{r}[a] \defeq \sum_{j'\preceq \sigma_p}\sum_{a'\in\A(j')} \lambda_{(j',a')}\,\vec{q}_{(j',a')}[(j^*, a)]\,\vec{x}[(j',a')]
        $\;\vspace{1mm}\label{line:r}
        Let $\vec{W} \in \vec{x}[\sigma_p]\cdot\bbS^{|\A(j^*)|}$ be the matrix whose entries are defined, for all $a_r, a_c \in \A(j^*)$, as\newline $\hspace*{8mm}
            \displaystyle\vec{W}[a_r, a_c] \defeq \vec{r}[a_r] + \mleft(\lambda_{(j^*,a_c)}\vec{q}_{(j^*,a_c)}[(j^*, a_r)] + \mleft(1- \!\!\!\sum_{\substack{\hat\sigma\in\Seqs_*\!\!,\,\hat\sigma \preceq (j^*, a_c)}} \!\!\!\lambda_{\hat\sigma}\mright)\,\bbone[a_r = a_c]\mright)\,\vec{x}[\sigma_p]
        $\;\vspace{1mm}\label{line:W}
        \uIf{$\vec{x}[\sigma_p] = 0$\label{line:start}}{
            $\vec{w} \gets \vec{0} \in \bbR_{\ge0}^{|\A(j^*)|}$\label{line:five}\;
        }\Else{
            $\vec{b} \in \Delta^{|\A(j^*)|} \gets$ fixed point of stochastic matrix $\frac{1}{\vec{x}[\sigma_p]}\vec{W}$\;
            $\vec{w} \gets \vec{x}[\sigma_p]\,\vec{b}$\;\label{line:end}
        }
        $\vec{x}' \gets \vec{x}$\;
        \For{$a\in \A(j^*)$}{
            $\vec{x}'[(j^*,a)] \gets \vec{w}[(j^*,a)]$\;\label{line:last}
        }
        \Return{$\vec{x}'$}
    \end{algorithm}

\begin{example}
    Consider the simple extensive-form game of \cref{fig:preliminary_example} (Left), and recall the three deviation functions $\tdev[1][\seq{1}][\hatpure[]_a], \tdev[1][\seq{2}][\hatpure[]_b], \tdev[1][\seq{3}][\hatpure[]_c]$ considered in \cref{ex:deviations}. We will illustrate two applications of \textsc{Extend}, with respect to the transformation
    \[
        \phi \defeq \frac{1}{2}\, \tdev[1][\seq{1}][\hatpure[]_a] + \frac{1}{3}\, \tdev[1][\seq{2}][\hatpure[]_b] + \frac{1}{6}\,\tdev[1][\seq{3}][\hatpure[]_c] \in \co\Ph[1].
    \]
    \begin{itemize}
        \item In the first application, consider the trunk $J = \emptyset$, information set $j^* = \textsc{a}$, and the $\emptyset$-partial fixed point described in \cref{lem:x0}, that is, the vector $\vec{x}$ whose components are all $0$ except for the entry corresponding to the empty sequence $\emptyseq$, which is set to $1$. In this case, $\sigma_p = \sigma^{(i)}(j^*)$ (\cref{line:sigmap}) is the empty sequence. Since no information set $j'$ can possibly satisfy $j' \preceq \sigma_p$, the vector $\vec{r}$ defined on \cref{line:r} is the zero vector. Consequently, the matrix $\vec{W}$ defined on \cref{line:W} is
        \[
            \vec{W} = \begin{tikzpicture}[baseline=-\the\dimexpr\fontdimen22\textfont2\relax ]
            \tikzset{every left delimiter/.style={xshift=1.5ex},
                     every right delimiter/.style={xshift=-1ex}};
            \matrix [matrix of math nodes,left delimiter=(,right delimiter=),row sep=.007cm,column sep=.007cm](m)
            {
            \nicefrac{1}{2} & \nicefrac{1}{3}\\
            \nicefrac{1}{2} & \nicefrac{2}{3}\\
            };
            
            \node[left=3pt of m-1-1] (left-0) {\small\seq{1}};
            \node[left=3pt of m-2-1] (left-1) {\small\seq{2}};

            \node[above=1pt of m-1-1] (top-0) {\small\seq{1}};
            \node[above=1pt of m-1-2] (top-1) {\small\seq{2}};
            \end{tikzpicture}
        \]
        which is a stochastic matrix. A fixed point for $\vec{W}$ is given by the vector $\vec{b} \defeq (\nicefrac{2}{5}, \nicefrac{3}{5}) \in \Delta^{|\{\seq{2},\seq{3}\}|}$. So, the vector $\vec{x}'$ returned by \textsc{extend} is given by
        \[
            \vec{x}'[\emptyseq] = 1,\quad \vec{x}'[(\textsc{a},\seq{1})] = \frac{2}{5},\quad \vec{x}'[(\textsc{a},\seq{2})] = \frac{3}{5}
        \]
        and zero entries everywhere else. Direct inspection reveals that $\vec{x}'$ is indeed a $\{\textsc{a}\}$-partial fixed point of $\phi$.
        \item In the second application of \textsc{Extend}, we start from the $\{\textsc{a}\}$-partial fixed point that we computed in the previous bullet point, and extend it to a $\{\textsc{a},\textsc{d}\}$-partial fixed point. Here, $j^* = \textsc{d}$, and so $\sigma_p = (\textsc{a}, \seq{2})$. The only $j' \preceq \sigma_p$ is $\textsc{a}$, and so the vector $\vec{r}$ defined on \cref{line:r} is
        \[
            \vec{r}[\seq{7}] = \frac{1}{5}, \quad \vec{r}[\seq{8}] = 0.
        \]
        Consequently, the matrix $\vec{W}$ defined on \cref{line:W} is
        \[
            \vec{W} = \begin{tikzpicture}[baseline=-\the\dimexpr\fontdimen22\textfont2\relax ]
            \tikzset{every left delimiter/.style={xshift=1.5ex},
                     every right delimiter/.style={xshift=-1ex}};
            \matrix [matrix of math nodes,left delimiter=(,right delimiter=),row sep=.007cm,column sep=.007cm](m)
            {
            \nicefrac{3}{5} & \nicefrac{1}{5}\\
            0 & \nicefrac{2}{5}\\
            };
            
            \node[left=3pt of m-1-1] (left-0) {\small\seq{7}};
            \node[left=4pt of m-2-1] (left-1) {\small\seq{8}};

            \node[above=1pt of m-1-1] (top-0) {\small\seq{7}};
            \node[above=1pt of m-1-2] (top-1) {\small\seq{8}};
            \end{tikzpicture}
        \]
        As expected, $\vec{W} \in \nicefrac{3}{5}\, \bbS^{|\{\seq{7},\seq{8}\}|} = \vec{x}[(\textsc{a},\seq{2})]\,\bbS^{|\{\seq{7},\seq{8}\}|}$. A fixed point for $\frac{1}{\vec{x}[(\textsc{a},\seq{2})]}\vec{W} = \nicefrac{5}{3}\,\vec{W}$ is given by the vector $\vec{b} \defeq (1, 0)$. So, the vector $\vec{x}'$ returned by \textsc{Extend} is given by
            \[
                \vec{x}'[\emptyseq] = 1,\quad \vec{x}'[(\textsc{a},\seq{1})] = \frac{2}{5},\quad \vec{x}'[(\textsc{a},\seq{2})] = \frac{3}{5}, \quad  \vec{x}'[(\textsc{d},\seq{7})] = \frac{3}{5}, \quad \vec{x}'[(\textsc{d},\seq{8})] = 0
            \]
        and zero entries everywhere else. Once again, direct inspection reveals that $\vec{x}'$ is indeed a $\{\textsc{a},\textsc{d}\}$-partial fixed point of $\phi$.
    \end{itemize}
\end{example}

In order to prove correctness of $\textsc{Extend}$ in~\cref{prop:extend}, we will find useful the following technical lemma.


\begin{lemma}\label{lem:expansion}
    Let $i\in[n]$ be any player, and $\phi = \sum_{\hat\sigma\in\Seqs_*}\lambda_{\hat\sigma}\, \tdev[i][\hat\sigma][\vec{q}_{\hat\sigma}]$ be any linear transformation in $\co\Ph$ expressed as in~\eqref{eq:phi combination}. Then, for all $\sigma\in\Seqs$,
    \[
        \phi(\vec{x})[\sigma] = \mleft(1-\sum_{\substack{\hat\sigma\in\Seqs_*\!\!,\, \hat\sigma\preceq\sigma}}\lambda_{\hat\sigma}\mright)\vec{x}[\sigma]+\sum_{j' \preceq \sigma}\sum_{a' \in \A(j')}\lambda_{(j',a')}\,\vec{q}_{(j',a')}[\sigma]\, \vec{x}[(j',a')].
    \]
\end{lemma}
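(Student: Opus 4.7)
The plan is to prove the identity by a direct unfolding of the convex combination defining $\phi$, followed by an explicit expansion of each matrix–vector product $\Mdev[i][\hat\sigma][\vec{q}_{\hat\sigma}]\,\vec{x}$ at row~$\sigma$, using the three-case definition~\eqref{eq:Mdev} of $\Mdev$. No fixed-point or combinatorial argument is needed here; the statement is essentially a bookkeeping claim, so the whole proof is algebraic.

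First I would write
\[
    \phi(\vec{x})[\sigma] = \sum_{\hat\sigma=(j,a)\in\Seqs_*}\lambda_{\hat\sigma}\,\bigl(\Mdev[i][\hat\sigma][\vec{q}_{\hat\sigma}]\,\vec{x}\bigr)[\sigma],
\]
and then, for fixed $\hat\sigma=(j,a)$, inspect the row indexed by $\sigma$ of $\Mdev[i][\hat\sigma][\vec{q}_{\hat\sigma}]$ according to \cref{def:M}. Exactly two columns can contribute a nonzero entry: column $\sigma_c=\sigma$ contributes $1$ precisely when $\sigma\not\succeq \hat\sigma$ (case~1 of~\eqref{eq:Mdev}), and column $\sigma_c=\hat\sigma$ contributes $\vec{q}_{\hat\sigma}[\sigma]$ precisely when $\sigma\succeq j$ (case~2 of~\eqref{eq:Mdev}). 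These two columns are distinct whenever $\sigma \neq \hat\sigma$, and when $\sigma=\hat\sigma$ only case~2 fires; in every situation, multiplying by $\vec{x}$ yields
\[
    \bigl(\Mdev[i][\hat\sigma][\vec{q}_{\hat\sigma}]\,\vec{x}\bigr)[\sigma] \;=\; \bbone[\sigma\not\succeq\hat\sigma]\,\vec{x}[\sigma] \;+\; \bbone[\sigma\succeq j]\,\vec{q}_{\hat\sigma}[\sigma]\,\vec{x}[\hat\sigma].
\]

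Next I would plug this back into the sum over $\hat\sigma$ and separate the two contributions. The first gives $\vec{x}[\sigma]\sum_{\hat\sigma\in\Seqs_*}\lambda_{\hat\sigma}\bbone[\sigma\not\succeq \hat\sigma]$, which I would simplify using $\sum_{\hat\sigma\in\Seqs_*}\lambda_{\hat\sigma}=1$ and the complementation $\bbone[\sigma\not\succeq\hat\sigma]=1-\bbone[\hat\sigma\preceq\sigma]$, thereby producing the factor $\bigl(1-\sum_{\hat\sigma\preceq\sigma}\lambda_{\hat\sigma}\bigr)$. The second contribution, after dropping the vanishing indicator, becomes a sum over those $\hat\sigma=(j,a)\in\Seqs_*$ with $\sigma\succeq j$; I would reindex this by first summing over $j'\in\cJ$ with $j'\preceq\sigma$, then over the actions $a'\in\A(j')$, setting $\hat\sigma=(j',a')$. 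Combining the two yields the expression in the statement.

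The only mildly delicate point is the coexistence of two $\preceq$-relations in the formula: the sequence-level order (appearing in $\hat\sigma\preceq\sigma$ within the first sum) and the overloaded ``$\sigma\succeq j$'' relation (appearing in the inner sum). I would emphasize—once in the proof, to rule out any confusion—that the former is strictly finer than the latter, since $\hat\sigma=(j,a)\preceq \sigma$ requires $\sigma$ to go through action $a$ at $j$, whereas $\sigma\succeq j$ only requires $\sigma$ to reach some node of $j$. This is what makes the two sums genuinely different terms in the final identity rather than redundant rewritings of each other.
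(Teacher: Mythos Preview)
Your proposal is correct and follows essentially the same approach as the paper's proof: both expand the matrix--vector product for a single $\hat\sigma$ into the two-indicator expression $\bbone[\sigma\not\succeq\hat\sigma]\,\vec{x}[\sigma] + \bbone[\sigma\succeq j]\,\vec{q}_{\hat\sigma}[\sigma]\,\vec{x}[\hat\sigma]$, sum over $\hat\sigma$, and then simplify the first term via $\sum_{\hat\sigma}\lambda_{\hat\sigma}=1$ while reindexing the second by $j'\preceq\sigma$. Your additional remark distinguishing the sequence-level order $\hat\sigma\preceq\sigma$ from the overloaded $\sigma\succeq j$ is a nice clarification that the paper leaves implicit.
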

\begin{proof}
    Fix any trigger sequence $\hat\sigma = (j',a')\in\Seqs_*$. By expanding the matrix-vector multiplication between $\Mdev[i][\hat\sigma][\vec{q}_{\hat\sigma}]$ (\cref{def:M}) and $\vec{x}$, we have that for all $\sigma\in\Seqs$,
    \begin{align*}
        \tdev[i][\hat\sigma][\vec{q}_{\hat\sigma}](\vec{x})[\sigma] &= \vec{x}[\sigma]\bbone[\sigma \not\succeq\hat\sigma] + \vec{q}_{\hat\sigma}[\sigma] \vec{x}[\hat\sigma] \bbone[\sigma\succeq j'].\numberthis{eq:matrix vector general}
    \end{align*}
    Therefore, for all $\sigma\in\Seqs$,
    \begin{align*}
        \phi(\vec{x})[\sigma]
        &= \sum_{\hat\sigma \in \Seqs_*} \lambda_{\hat\sigma}\,\tdev[i][\hat\sigma][\vec{q}_{\hat\sigma}](\vec{x})[\sigma] = \sum_{\hat\sigma=(j',a')\in\Seqs_*} \lambda_{\hat\sigma}\mleft(\vec{x}[\sigma]\bbone[\sigma \not\succeq\hat\sigma] + \vec{q}_{\hat\sigma}[\sigma] \vec{x}[\hat\sigma] \bbone[\sigma\succeq j']\mright)\\
        &= \mleft(\sum_{\substack{\hat\sigma\in\Seqs_*\!\!,\,\sigma\not\succeq\hat\sigma}} \lambda_{\hat\sigma}\mright)\vec{x}[\sigma] + \sum_{j' \preceq \sigma}\sum_{a' \in \A(j')}\lambda_{(j',a')}\,\vec{q}_{(j',a')}[\sigma]\, \vec{x}[(j',a')]\\
        &= \mleft(1-\sum_{\substack{\hat\sigma\in\Seqs_*\!\!,\, \hat\sigma\preceq\sigma}}\lambda_{\hat\sigma}\mright)\vec{x}[\sigma]+\sum_{j' \preceq \sigma}\sum_{a' \in \A(j')}\lambda_{(j',a')}\,\vec{q}_{(j',a')}[\sigma]\, \vec{x}[(j',a')],
    \end{align*}
    as we wanted to show.
\end{proof}

\begin{proposition}\label{prop:extend}
    Let $i\in[n]$ be a player, $\phi = \sum_{\hat\sigma\in\Seqs_*}\lambda_{\hat\sigma}\, \tdev[i][\hat\sigma][\vec{q}_{\hat\sigma}]$ be a linear transformation in $\co\Ph$ expressed as in~\eqref{eq:phi combination}, $\vec{x} \in\bbR_{\ge0}^{|\Seqs|}$ be a $J$-partial fixed point of $\phi$, and $j^*\in\cJ$ be a information set not in $J$ such that its immediate predecessor is in $J$. Then, $\textsc{Extend}(\phi,J,j^*,\vec{x})$, given in Algorithm~\ref{algo:extend}, computes a $(J\cup\{j^*\})$-partial fixed point of $\phi$ in time upper bound by $O(|\Seqs|\,|\A(j^*)| + \textsf{\upshape FP}(|\A(j^*)|))$.
\end{proposition}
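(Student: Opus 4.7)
The plan is to verify the two defining conditions of a $(J\cup\{j^*\})$-partial fixed point (\cref{def:J partial}) for the vector $\vec{x}'$ returned by $\textsc{Extend}$, handling separately the sequences already constrained in $\vec{x}$ and those at the new information set $j^*$, and then bound the running time. First note that $J\cup\{j^*\}$ is itself a trunk: the immediate predecessor of $j^*$ lies in $J$ by hypothesis, and since $J$ is a trunk every strict ancestor of $j^*$ lies in $J$ as well, so no $j\in J$ satisfies $j^*\preceq j$ (otherwise $j^*$ would already be in $J$). Because $\vec{x}'$ coincides with $\vec{x}$ outside the coordinates $\{(j^*,a):a\in\A(j^*)\}$, \cref{lem:expansion} shows that $\phi(\vec{x}')[\sigma]=\phi(\vec{x})[\sigma]$ whenever $j^*\not\preceq\sigma$; the trunk remark above makes this apply to $\sigma=\emptyseq$ and to every $\sigma=(j,a)$ with $j\in J$. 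Thus the inductive hypothesis that $\vec{x}$ is a $J$-partial fixed point transfers both~\eqref{eq:partial sf} and~\eqref{eq:partial fp} to $\vec{x}'$ at every $j\in J$, leaving only the equations at $j^*$ to be checked.

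For the new information set $j^*$, specialize \cref{lem:expansion} at $\sigma=(j^*,a)$ and split the outer summation by whether $j'=j^*$ or $j'\prec j^*$. The chain-structure identity $\{j':j'\prec j^*\}=\{j':j'\preceq\sigma_p\}$ means the strict-ancestors contribution is precisely $\vec{r}[a]$, so
\[
\phi(\vec{x}')[(j^*,a)] \;=\; \vec{r}[a] \;+\; \sum_{a'\in\A(j^*)}\vec{A}[a,a']\,\vec{w}[a'],\qquad \vec{A}[a,a']\defeq \lambda_{(j^*,a')}\vec{q}_{(j^*,a')}[(j^*,a)] + \mleft(1-\!\!\!\!\sum_{\hat\sigma\preceq(j^*,a)}\!\!\!\!\lambda_{\hat\sigma}\mright)\bbone[a=a'],
\]
while the matrix $\vec{W}$ built on \cref{line:W} satisfies $\vec{W}[a,a']=\vec{r}[a]+\vec{A}[a,a']\,\vec{x}[\sigma_p]$. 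In the degenerate branch $\vec{x}[\sigma_p]=0$, the algorithm sets $\vec{w}=\vec{0}$, which makes the sequence-form equation at $j^*$ trivial; the fixed-point equation then reduces to $\vec{r}[a]=0$, which follows from the inductive identity $\phi(\vec{x})[\sigma_p]=\vec{x}[\sigma_p]=0$ (which, via \cref{lem:expansion} and nonnegativity, forces $\lambda_{(j',a')}\vec{q}_{(j',a')}[\sigma_p]\vec{x}[(j',a')]=0$ for every $j'\preceq\sigma_p$ and $a'\in\A(j')$) together with the sequence-form inequality $\vec{q}_{(j',a')}[(j^*,a)]\le\vec{q}_{(j',a')}[\sigma^{(i)}(j^*)]=\vec{q}_{(j',a')}[\sigma_p]$ internal to $\Q_{j'}$. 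The main technical obstacle lies in the generic branch $\vec{x}[\sigma_p]>0$, where one must show that $\vec{W}/\vec{x}[\sigma_p]$ is a stochastic matrix so that the oracle produces a stationary distribution $\vec{b}\in\Delta^{|\A(j^*)|}$. Nonnegativity is immediate from $\sum_{\hat\sigma\in\Seqs_*}\lambda_{\hat\sigma}=1$. For column-stochasticity, fix a column $a_c$ and sum $\vec{W}[a_r,a_c]/\vec{x}[\sigma_p]$ over $a_r\in\A(j^*)$: the $\vec{A}$-piece of column $a_c$ contributes $\lambda_{(j^*,a_c)} + (1-\sum_{\hat\sigma\preceq(j^*,a_c)}\lambda_{\hat\sigma})$ thanks to $\sum_{a_r}\vec{q}_{(j^*,a_c)}[(j^*,a_r)]=1$ in $\Q_{j^*}$, while $\sum_{a_r}\vec{r}[a_r]/\vec{x}[\sigma_p]=\sum_{\hat\sigma\preceq\sigma_p}\lambda_{\hat\sigma}$, obtained from $\sum_{a_r}\vec{q}_{(j',a')}[(j^*,a_r)]=\vec{q}_{(j',a')}[\sigma_p]$ combined with \cref{lem:expansion} applied at $\sigma_p$ and the inductive equation $\phi(\vec{x})[\sigma_p]=\vec{x}[\sigma_p]$. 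The disjoint identity $\{\hat\sigma\preceq(j^*,a_c)\}=\{\hat\sigma\preceq\sigma_p\}\cup\{(j^*,a_c)\}$ telescopes these pieces to $1$. With $\vec{b}$ in hand, the choice $\vec{w}=\vec{x}[\sigma_p]\vec{b}$ gives $\sum_a\vec{w}[a]=\vec{x}[\sigma_p]$ (the sequence-form equation at $j^*$), and the stationarity $\vec{W}\vec{b}=\vec{x}[\sigma_p]\vec{b}$ combined with $\sum_{a'}\vec{b}[a']=1$ rearranges into $\vec{r}[a]+(\vec{A}\vec{w})[a]=\vec{w}[a]$, which is exactly~\eqref{eq:partial fp} at $(j^*,a)$.

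The complexity analysis is then routine bookkeeping: the vector $\vec{r}$ can be assembled in $O(|\Seqs|\,|\A(j^*)|)$ time (the outer double sum ranges over at most $|\Seqs|$ pairs $(j',a')$ and each contributes to $|\A(j^*)|$ entries of $\vec{r}$), the matrix $\vec{W}$ takes an additional $O(|\A(j^*)|^2)\le O(|\Seqs|\,|\A(j^*)|)$ time, and the stochastic-matrix fixed-point oracle costs $O(\textsf{\upshape FP}(|\A(j^*)|))$, yielding the $O(|\Seqs|\,|\A(j^*)|+\textsf{\upshape FP}(|\A(j^*)|))$ bound in the statement. The genuinely nontrivial step, and the only one not of the form ``unfold a definition,'' is the column-stochasticity verification above, which hinges crucially on the inductive fixed-point equation $\phi(\vec{x})[\sigma_p]=\vec{x}[\sigma_p]$ together with the sequence-form constraints internal to each $\vec{q}_{(j',a')}\in\Q_{j'}$.
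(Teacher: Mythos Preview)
Your proposal is correct and follows essentially the same approach as the paper: use \cref{lem:expansion} to isolate the contribution at $j^*$, verify that $\vec{W}/\vec{x}[\sigma_p]$ is column-stochastic via the inductive identity $\phi(\vec{x})[\sigma_p]=\vec{x}[\sigma_p]$ together with the sequence-form constraints inside each $\vec{q}_{(j',a')}$, and read off the fixed-point and sequence-form equations at $j^*$ from the stationary distribution. Your organization via the auxiliary matrix $\vec{A}$ and the explicit check that $J\cup\{j^*\}$ is a trunk are minor presentational differences, and your treatment of the degenerate case (showing $\vec{r}=\vec{0}$ termwise from nonnegativity) is equivalent to the paper's inequality $\phi(\vec{x}')[(j^*,a^*)]\le\phi(\vec{x}')[\sigma_p]=0$.
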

\begin{proof}
    We break the proof into four parts. In the first part, we analyze the sum of the entries of vector $\vec{r}$ defined in \cref{line:r} of \cref{algo:extend}. In the second part, we prove that $\frac{1}{\vec{x}[\sigma_p]}\vec{W} \in \bbS^{|\A(j^*)|}$, as stated in \cref{line:W}. In the third part, we show that the output $\vec{x}'$ of the algorithm is indeed a $(J\cup\{j^*\})$-partial fixed point of $\phi$. Finally, in the fourth part we analyze the computational complexity of the algorithm.
    \paragraph{Part 1: sum of the entries of $\vec{r}$} In this first part of the proof, we study the sum of the entries of the vector $\vec{r}$ defined on \cref{line:r} in \cref{algo:extend}.
    By hypothesis the immediate predecessor of $j^*$ is in $J$, and since $\vec{x}$ is a $J$-partial fixed point, the sequence $\sigma_p \defeq \sigma^{(i)}(j^*)$ satisfies $\phi(\vec{x})[\sigma_p] = \vec{x}[\sigma_p]$. Hence, expanding the $\phi(\vec{x})[\sigma_p]$ using \cref{lem:expansion}, we conclude that
    \[
        \mleft(1-\sum_{\substack{\hat\sigma\in\Seqs_*\!\!,\, \hat\sigma\preceq\sigma_p}}\lambda_{\hat\sigma}\mright)\vec{x}[\sigma_p]+\sum_{j' \preceq \sigma_p}\sum_{a' \in \A(j')}\lambda_{(j',a')}\,\vec{q}_{(j',a')}[\sigma_p]\, \vec{x}[(j',a')] = \vec{x}[\sigma_p].
    \]
    So, by rearranging terms, we have
    \[
        \sum_{j' \preceq \sigma_p}\sum_{a' \in \A(j')}\lambda_{(j',a')}\,\vec{q}_{(j',a')}[\sigma_p]\, \vec{x}[(j',a')] =  \mleft(\sum_{\substack{\hat\sigma\in\Seqs_*\!\!,\,\hat\sigma\preceq\sigma_p}}\lambda_{\hat\sigma}\mright)\vec{x}[\sigma_p].\numberthis{eq:fp step2}
    \]
    
    On the other hand, since $\vec{q}_{(j',a')} \in \Q_{j'}$ for all $j'\preceq \sigma_p, a'\in\A(j')$, the vector $\vec{q}_{(j',a')}$ satisfy the the sequence-form (probability-mass-conservation) constraint
    \[
        \vec{q}_{(j',a')}[\sigma_p] = \sum_{a \in \A(j^*)} \vec{q}_{(j',a')}[(j^*,a)].
    \]
    Hence, and we can rewrite~\eqref{eq:fp step2} as
    \begin{align*}
        \mleft(\sum_{\substack{\hat\sigma\in\Seqs_*\!\!,\,\hat\sigma\preceq\sigma_p}}\lambda_{\hat\sigma}\mright)\vec{x}[\sigma_p] &= \sum_{j' \preceq \sigma_p}\sum_{a' \in \A(j')}\sum_{a\in\A(j^*)}\lambda_{(j',a')}\,\vec{q}_{(j',a')}[(j^*,a)]\, \vec{x}[(j',a')] \\
        &= \sum_{a\in\A(j^*)} \mleft(\sum_{j' \preceq \sigma_p}\sum_{a' \in \A(j')}\lambda_{(j',a')}\,\vec{q}_{(j',a')}[(j^*,a)]\, \vec{x}[(j',a')] \mright)\\
        &= \sum_{a\in\A(j^*)} \vec{r}[a],
    \end{align*}
    where the last equality follows from the definition of $\vec{r}$ in \cref{line:r} of \cref{algo:extend}. So, in conclusion,
    \begin{equation}\numberthis{eq:fp step3}
        \sum_{a\in\A(j^*)} \vec{r}[a] = \mleft(\sum_{\substack{\hat\sigma\in\Seqs_*\!\!,\,\hat\sigma\preceq\sigma_p}}\lambda_{\hat\sigma}\mright)\vec{x}[\sigma_p].
    \end{equation}
    
     \paragraph{Part 2: $\vec{W}$ belongs to $\vec{x}[\sigma_p]\cdot\bbS^{|\A(j^*)|}$} In this second part of the proof, we will prove that $\vec{W}$ as constructed on \cref{line:W} of Algorithm~\ref{algo:extend} is a non-negative matrix whose columns sum to the value $\vec{x}[\sigma_p]$. 
     Fix any $a_c \in \A(j^*)$. Then, the sum of the column of $\vec{W}$ corresponding to action $a_c$ is
     \begin{align*}
        &\sum_{a_r\in\A(j^*)} \vec{W}[a_r, a_c]=\!\!\!\sum_{a_r\in\A(j^*)}\!\!\!\!\vec{r}[a_r] + \mleft(\lambda_{(j^*,a_c)}\vec{q}_{(j^*,a_c)}[(j^*, a_r)] + \mleft(1-\!\! \sum_{\substack{\hat\sigma\in\Seqs_*\!\!,\,\hat\sigma \preceq (j^*, a_c)}} \!\!\lambda_{\hat\sigma}\mright)\,\bbone[a_r = a_c]\mright)\,\vec{x}[\sigma_p]\\
        &\quad= \mleft(\sum_{\substack{\hat\sigma\in\Seqs_*\!\!,\,\hat\sigma \preceq \sigma_p}}\lambda_{\hat\sigma}\mright)\vec{x}[\sigma_p] + \vec{x}[\sigma_p]\,\lambda_{(j^*,a_c)}\mleft(\sum_{a_r\in\A(j^*)}\vec{q}_{(j^*,a_c)}[(j^*, a_r)]\mright) + \mleft(1-\!\! \sum_{\substack{\hat\sigma\in\Seqs_*\!\!,\,\hat\sigma \preceq \sigma_p}} \!\!\lambda_{\hat\sigma}\mright)\,\vec{x}[\sigma_p]\\
        &\quad=\mleft(\sum_{\substack{\hat\sigma\in\Seqs_*\!\!,\,\hat\sigma \preceq (j^*, a_c)}}\lambda_{\hat\sigma}\mright)\vec{x}[\sigma_p] + \vec{x}[\sigma_p]\,\lambda_{(j^*,a_c)} + \mleft(1- \!\!\sum_{\substack{\hat\sigma\in\Seqs_*\!\!,\,\hat\sigma \preceq (j^*, a_c)}} \!\!\lambda_{\hat\sigma}\mright)\,\vec{x}[\sigma_p],
     \end{align*}
    where we used~\eqref{eq:fp step3} in the second equality, and the fact that $\vec{q}_{(j^*,a_c)}\in\Q_{j^*}$ (\cref{def:Qj}) in the third. Using the fact that the set of all predecessors of sequence $(j^*,a_c)$ is the union between all predecessors of $\sigma_p$ and $\{(j^*,a_c)\}$ itself, we can write
    \begin{align*}
        \sum_{a_r\in\A(j^*)} \vec{W}[a_r, a_c] &=\mleft(\sum_{\hat\sigma\in\Seqs_*\!\!,\,\hat\sigma\preceq\sigma_p}\lambda_{\hat\sigma}\mright)\vec{x}[\sigma_p] + \vec{x}[\sigma_p]\,\lambda_{(j^*,a_c)} + \mleft(1-\!\! \sum_{\hat\sigma\in\Seqs_*\!\!,\,\hat\sigma \preceq (j^*, a_c)}\!\! \lambda_{\hat\sigma}\mright)\,\vec{x}[\sigma_p] \\
        &= \vec{x}[\sigma_p]\,\mleft(1+\lambda_{(j^*, a_c)}  + \sum_{\hat\sigma\in\Seqs_*\!\!,\,\hat\sigma\preceq\sigma_p}\lambda_{\hat\sigma} - \sum_{\hat\sigma\in\Seqs_*\!\!,\,\hat\sigma \preceq (j^*, a_c)} \!\!\lambda_{\hat\sigma}\mright) \\&= \vec{x}[\sigma_p].
    \end{align*}
    So, all columns of the nonnegative matrix $\vec{W}$ sum to the same nonnegative quantity $\vec{x}[\sigma_p]$ and therefore $\vec{W} \in \vec{x}[\sigma_p]\cdot\bbS^{|\A(j^*)|}$. 
    
    \paragraph{Part 3: $\vec{x}'$ is a $(J\cup\{j^*\})$-partial fixed point of $\phi$} We start by arguing that $\vec{x}'$ satisfies the sequence-form constraints~\eqref{eq:partial sf} for all $j\in J\cup\{j^*\}$. The crucial observation is that \cref{algo:extend} only modifies the indices corresponding to sequences $(j^*, a)$ for $a \in \A(j^*)$ and keeps all other entries unmodified. In particular, 
    \[
        \vec{x}'[(j,a)] = \vec{x}[(j,a)]\qquad\quad\forall\ j\in J, a\in \A(j).
        \numberthis{eq:xprime x}
    \]
    Furthermore, because $J$ is a trunk for Player~$i$, the above equation implies that
    \[
        \vec{x}'[\sigma^{(i)}(j)] = \vec{x}[\sigma^{(i)}(j)] \quad\quad\forall\ j\in J.
    \]
    Hence, using the hypothesis that $\vec{x}$ is a $J$-partial fixed point of $\phi$ at the beginning of the call, we immediately conclude that~\eqref{eq:partial sf} holds for $\vec{x}'$ for all $j\in J$, and the only condition that remains to be verified is that
    \[
        \vec{x}'[\sigma_p] = \sum_{a\in\A(j^*)} \vec{x}'[(j^*, a)].\numberthis{eq:to check1}
    \]
    To verify that, observe that if $\vec{x}[\sigma_p] = 0$, then all entries $\vec{x}'[(j^*, a)]$ are set to $0$ and so~\eqref{eq:to check1} is trivially satisfied. On the other hand, if $\vec{x}[\sigma_p] \neq 0$, then $\vec{x}'[(j^*,a)] = \vec{x}[\sigma_p]\, \vec{b}[a]$, and since $\vec{b}$ belongs to the simplex $\Delta^{|\A(j^*)|}$, \eqref{eq:to check1} holds in this case too. So, $\vec{x}'$ satisfies~\eqref{eq:partial sf} for all $j \in J\cup\{j^*\}$ as we wanted to show.
    
    We now turn our attention to conditions~\eqref{eq:partial fp}. From~\cref{lem:expansion} it follows that $\phi(\vec{x})[\sigma]$ only depends on the values of $\vec{x}[(j',a')] $ for $j' \preceq \sigma, a'\in\A(j')$. So, from~\eqref{eq:xprime x} it follows that
    \[
        \phi(\vec{x}')[(j,a)] = \vec{x}[(j,a)] = \vec{x}'[(j,a)]\quad\qquad\forall\ j\in J, a\in\A(j),
    \]
    and the only condition that remains to be verified is that
    \[
        \phi(\vec{x}')[(j^*,a^*)] = \vec{x}'[(j^*, a^*)] \qquad\quad \forall\ a^*\in\A(j^*).
        \numberthis{eq:fp claim}
    \]
    Pick any $a^*\in\A(j^*)$. We break the analysis into two cases.
    \begin{itemize}
        \item If $\vec{x}[\sigma_p] = 0$, then $\vec{w} = \vec{0}$ (\cref{line:five}) and therefore $\vec{x}'[(j^*,a^*)] = 0$. Hence, to show that~\eqref{eq:fp claim} holds, we need to show that $\phi(\vec{x}')[(j^*,a^*)]=0$. To show that, we start from applying \cref{lem:expansion}:
        \begin{align*}
            \phi(\vec{x}')[(j^*,a^*)] &= \sum_{j' \preceq (j^*,a^*)}\sum_{a' \in \A(j')}\lambda_{(j',a')}\,\vec{q}_{(j',a')}[(j^*,a^*)]\, \vec{x}'[(j',a')].
        \end{align*}
        Now, using the fact that $\{j'\in\cJ: j'\preceq (j^*, a^*)\}$ is equal to the dijoint union $\{j'\in\cJ: j' \preceq \sigma_p\} \cup \{j^*\}$, and that $\vec{x}'[(j^*, a')] = 0$ for all $a'\in\A(j^*)$, we have
        \begin{align*}
            \phi(\vec{x}')[(j^*,a^*)]
                &= \sum_{j' \preceq \sigma_p}\sum_{a' \in \A(j')}\lambda_{(j',a')}\,\vec{q}_{(j',a')}[(j^*,a^*)]\, \vec{x}'[(j',a')].\numberthis{eq:stepy}
        \end{align*}
        Since $\vec{q}_{(j',a')} \in \Q_{j'}$, from \cref{def:Qj} it follows that
        \[
            \vec{q}_{(j',a')}[\sigma_p] = \sum_{a \in \A(j^*)}\vec{q}_{(j',a')}[(j^*,a)] \ge \vec{q}_{(j',a')}[(j^*,a^*)].\numberthis{eq:stepx}
        \]
        Hence, substituting \eqref{eq:stepx} into \eqref{eq:stepy},
        \begin{align*}
            \phi(\vec{x}')[(j^*,a^*)]
                &\le \sum_{j' \preceq \sigma_p}\sum_{a' \in \A(j')}\lambda_{(j',a')}\,\vec{q}_{(j',a')}[\sigma_p]\, \vec{x}'[(j',a')]\\
                &= \phi(\vec{x}')[\sigma_p] = \vec{x}'[\sigma_p] = 0,
        \end{align*}
        where the first equality follows again from \cref{lem:expansion} for all $a\in\A(j^*)$ in the first equality, and the second equality follows from the inductive hypothesis that $\vec{x}'$ is a $J$-partial fixed point of $\phi$. Since $\vec{x}'$ is a nonnegative vector and $\phi$ clearly maps nonnegative vectors to nonnegative vectors, we conclude that $\phi(\vec{x}')[(j^*, a^*)] = 0$ as we wanted to show. 
        \item If $\vec{x}[\sigma_p] \neq 0$, then $\vec{b}$ is a fixed point of the stochastic matrix $\frac{1}{\vec{x}[\sigma_p]}\vec{W}$, and therefore it satisfies
        \[
            \sum_{a_c \in \A(j^*)} \vec{W}[a^*, a_c]\, \vec{b}[a_c] = \vec{x}[\sigma_p]\,\vec{b}[a^*].
        \]
        Hence, by using the fact that $\vec{x}'[(j^*,a^*)]= \vec{x}[\sigma_p]\,\vec{b}[a^*] $ (\cref{line:last}), we can write
        \[
        \vec{x}'[(j^*,a^*)] = \sum_{a_c \in \A(j^*)} \vec{W}[a^*, a_c]\, \vec{b}[a_c].
        \]
        By expanding the definition of $\vec{W}[a^*, a_c]$ (\cref{line:W}) on the right-hand side
        \begin{align*}
            \vec{x}'[(j^*,a^*)] &=
            \sum_{a_c\in\A(j^*)}\mleft[\vec{r}[a^*] + \mleft(\lambda_{(j^*,a_c)}\vec{q}_{(j^*,a_c)}[(j^*, a^*)] + \mleft(1- \!\!\sum_{\substack{\hat\sigma\in\Seqs_*\\\hat\sigma \preceq (j^*, a_c)}} \!\!\lambda_{\hat\sigma}\mright)\,\bbone[a^* = a_c]\mright)\,\vec{x}[\sigma_p]\mright]\vec{b}[a_c]
            \\
            &=\vec{r}[a^*] +  \mleft(1- \!\!\sum_{\hat\sigma\in\Seqs_*\!\!,\,\hat\sigma \preceq (j^*, a^*)} \!\!\lambda_{\hat\sigma}\mright)\vec{x}'[(j^*,a^*)] +\!\! \sum_{a_c\in\A(j^*)}\!\!\lambda_{(j^*,a_c)}\vec{q}_{(j^*,a_c)}[(j^*, a^*)]\vec{x}'[(j^*,a_c)]
        \end{align*}
        where in the second equality we used the fact that $\vec{b}\in\Delta^{|\A(j^*)|}$, and the fact that $\vec{x}'[(j^*,a)] = \vec{x}[\sigma_p]\,\vec{b}[a]$ for all $a\in\A(j^*)$ (\cref{line:last}). Using the definition of $\vec{r}$ (\cref{line:r}),
        \begin{align*}
            \vec{x}'[(j^*,a^*)] &= \sum_{j'\preceq \sigma_p}\sum_{a'\in\A(j')} \lambda_{(j',a')}\,\vec{q}_{(j',a')}[(j^*, a^*)]\,\vec{x}[(j',a')] + \mleft(1- \!\!\sum_{\hat\sigma\in\Seqs_*\!\!,\,\hat\sigma \preceq (j^*, a^*)} \!\!\lambda_{\hat\sigma}\mright)\vec{x}'[(j^*,a^*)] \\
            &\hspace{5cm}+\!\! \sum_{a_c\in\A(j^*)}\!\!\lambda_{(j^*,a_c)}\vec{q}_{(j^*,a_c)}[(j^*, a^*)]\vec{x}'[(j^*,a_c)]\\
            &=\mleft(1-\!\!\!\!\sum_{\hat\sigma\in\Seqs_*\!\!,\,\hat\sigma\preceq(j^*,a^*)}\!\!\!\!\lambda_{\hat\sigma}\mright)\vec{x}'[(j^*,a^*)]+\sum_{j' \preceq (j^*,a^*)}\sum_{a' \in \A(j')}\lambda_{(j',a')}\,\vec{q}_{(j',a')}[\sigma]\, \vec{x}'[(j',a')]\\
            &= \phi(\vec{x}')[(j^*, a^*)], 
        \end{align*}
        where we used \cref{lem:expansion} in the last equality.
    \end{itemize}
    
    \paragraph{Part 4: Complexity analysis} In this part, we bound the number of operations required by \cref{algo:extend}.
    \begin{itemize}
    \item \cref{line:r}: each entry $\vec{r}[a]$ can be trivially computed in $O(|\Seqs|)$ time by traversing all predecessors of $j^*$. So, the vector $\vec{r}$ requires $O(|\Seqs|\,|\A(j^*)|$) operations to be computed.
    \item \cref{line:W}: if $a_r = a_c$, then the number of operations required to compute $\vec{W}[a_r,a_c]$ is dominated by the computation of $\sum_{\hat\sigma\preceq (j^*,a_c)} \lambda_{\hat\sigma}$, which requires $O(|\Seqs|)$ operations. Otherwise, if $a_r \neq a_c$, the computation of $\vec{W}[a_r,a_c]$ can be carried out in a constant number of operations. Hence, the computation of $\vec{W}[a_r,a_c]$ for all $a_r,a_c\in\A(j^*)$ requires $O(|\Seqs|\,|\A(j^*)| + |\A(j^*)|^2)$ time. Since $|\A(j^*)|\le |\Seqs|$, the total number of operations required to compute all entries of $\vec{W}$ is $O(|\Seqs|\,|\A(j^*)|)$.
    \item Lines~\ref{line:start} to~\ref{line:end}: if $\vec{x}[\sigma_p]=0$, then the computation of $\vec{w}$ requires $O(|\A(j^*)|)$ operations. If, on the other hand, $\vec{x}[\sigma_p] \neq 0$, then the computation of $\vec{w}$ requires $O(\textsf{FP}(|\A(j^*)|) + |\A(j^*)|)$ operation. Since clearly any fixed point oracle for a square matrix of order $|\A(j^*)|$ needs to spend time at least $|\A(j^*)|$ time writing the output, $O(\textsf{FP}(|\A(j^*)|) + |\A(j^*)|)=O(\textsf{FP}(|\A(j^*)|))$. So, no matter the value of $\vec{x}[\sigma_p]$, the number of iterations is bounded by $O(\textsf{FP}(|\A(j^*)|))$.
    \item \cref{line:last}: finally, the algorithm spends $O(|\A(j^*)|)$  operations to set entries of $\vec{x}$.
    \end{itemize}
    Summing the number of operations of each of the different steps of the algorithm, we conclude that each call to $\textsc{Extend}(\phi,J,j^*,\vec{x})$ requires at most $O(|\Seqs|\,|\A(j^*)| + \textsf{FP}(|\A(j^*)|))$ operations.\qedhere
\end{proof}

\begin{algorithm}[t]\caption{$\textsc{FixedPoint}(\phi)$}\label{algo:fixedpoint}
        \DontPrintSemicolon
        \SetKwInput{KwInput}{Input\hspace{2.5mm}}
        \SetKwInput{KwOutput}{Output}
        \KwInput{%
            \bull $\phi = \sum_{\hat\sigma\in\Seqs_*} \lambda_{\hat\sigma} \tdev[i][\hat\sigma][\vec{q}_{\hat\sigma}] \in \co\Ph$ transformation for a player $i\in[n]$}
        \KwOutput{\bull $\vec{q}\in\Q$~~such that $\vec{q} = \phi(\vec{q})$ }
            \BlankLine
        $\vec{q} \gets \vec{0} \in \bbR^{|\Seqs|},~~\vec{q}[\emptyseq] \gets 1$\;
        $J \gets \emptyset$\;
        \For{$j \in \cJ$ in top-down order\footnotemark}{
            $\vec{q} \gets \textsc{Extend}(\phi, J, j, \vec{q})$\;
            $J \gets J \cup \{j\}$\;
        }
        \Return{$\vec{q}$}
    \end{algorithm}
\footnotetext{That is, according to a pre-order tree traversal: if $j \prec j'$, then $j$ appears before $j'$ in the iteration order.}

A fixed point for $\phi \in \co\Ph$ can therefore be computed by repeatedly invoking $\textsc{Extend}$ to grow the trunk $J$ one information set at a time, until $J = \cJ$, starting from the $\emptyset$-partial fixed point $\vec{x}_0 \in \bbR_{\ge0}^{|\Seqs|}$ introduced in \cref{lem:x0}. This leads to \cref{algo:fixedpoint}, whose correctness and computational complexity is a straightforward corollary of \cref{prop:extend}.

\begin{corollary}\label{cor:fixed}
    Let $i\in[n]$ be a player, and let $\phi = \sum_{\hat\sigma\in\Seqs_*} \lambda_{\hat\sigma}\tdev[i][\hat\sigma][\vec{q}_{\hat\sigma}]$ be a transformation in $\co\Ph$ expressed as in~\eqref{eq:phi combination}. Then, \cref{algo:fixedpoint} computes a fixed point $\Q\ni\vec{q}=\phi(\vec{q})$ in time upper bounded as $O(|\Seqs|^2 + \sum_{j\in\cJ}\textsf{\upshape FP}(|\A(j)|))$.
\end{corollary}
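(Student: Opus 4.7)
My plan is to prove \cref{cor:fixed} by straightforward induction on the iterations of the main loop of \cref{algo:fixedpoint}, leveraging the two main ingredients already available: \cref{lem:x0} (existence of an $\emptyset$-partial fixed point) for the base case, and \cref{prop:extend} (correctness of \textsc{Extend}) for the inductive step. The complexity bound will then follow from the per-iteration bound of \cref{prop:extend} and the identity $\sum_{j\in\cJ}|\A(j)| = |\Seqs_*|$.

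\emph{Correctness.} I would let $J_k \subseteq \cJ$ denote the value of the set $J$ after $k$ iterations of the \textbf{for} loop and $\vec{q}_k$ the corresponding value of $\vec{q}$. The claim to prove by induction on $k$ is that $J_k$ is a trunk of $\cJ$ (in the sense of \cref{def:trunk}) and $\vec{q}_k$ is a $J_k$-partial fixed point of $\phi$ (in the sense of \cref{def:J partial}). The base case $k=0$ is handled by \cref{lem:x0}, which gives that $\vec{q}_0 = \vec{x}_0$ is an $\emptyset$-partial fixed point, and $\emptyset$ is trivially a trunk. For the inductive step, let $j$ be the $k$-th information set visited. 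Because the loop processes $\cJ$ in top-down order with respect to the partial ordering $\prec$, every information set $j' \prec j$ has already been added to $J$ at some prior iteration, so in particular $j \notin J_{k-1}$ and the immediate predecessor of $j$ (if it exists) lies in $J_{k-1}$. Thus the preconditions of \cref{prop:extend} are met, and invoking \textsc{Extend}$(\phi, J_{k-1}, j, \vec{q}_{k-1})$ yields a $(J_{k-1}\cup\{j\})$-partial fixed point $\vec{q}_k$. Because all predecessors of $j$ are in $J_{k-1}$, the set $J_k = J_{k-1}\cup\{j\}$ remains a trunk, completing the induction. After all $|\cJ|$ iterations, $J = \cJ$, and by inspecting \cref{def:J partial} specialized to $J = \cJ$, the constraints~\eqref{eq:partial sf} amount precisely to the definition of $\Q$ (\cref{def:sequence_form_polytope}) and the constraints~\eqref{eq:partial fp} collapse to $\phi(\vec{q})[\sigma] = \vec{q}[\sigma]$ for every sequence $\sigma \in \Seqs$, i.e., $\vec{q} = \phi(\vec{q})$.

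\emph{Complexity.} By \cref{prop:extend}, the iteration that processes information set $j$ costs $O(|\Seqs|\,|\A(j)| + \textsf{FP}(|\A(j)|))$. Summing over all $j \in \cJ$ gives
\[
    \sum_{j\in\cJ}\Bigl(|\Seqs|\,|\A(j)| + \textsf{FP}(|\A(j)|)\Bigr) = |\Seqs| \sum_{j\in\cJ} |\A(j)| + \sum_{j\in\cJ}\textsf{FP}(|\A(j)|) = |\Seqs|\cdot |\Seqs_*| + \sum_{j\in\cJ}\textsf{FP}(|\A(j)|),
\]
which is $O(|\Seqs|^2 + \sum_{j\in\cJ}\textsf{FP}(|\A(j)|))$ as claimed, using $\sum_{j\in\cJ}|\A(j)| = |\Seqs_*| \le |\Seqs|$.

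\emph{Main subtlety.} The only non-routine point is verifying that top-down iteration order actually supplies the hypothesis required by \cref{prop:extend} — namely, that at every step the current $J$ is a trunk \emph{and} contains the immediate predecessor information set of the next $j$. This is immediate from the definition of a top-down (pre-order) traversal of the forest $(\cJ,\prec)$, together with the observation that the ``immediate predecessor'' of a root information set is vacuous (so the precondition holds trivially, and \cref{prop:extend} still applies since $\sigma_p = \emptyseq$ always satisfies $\vec{x}[\emptyseq] = \phi(\vec{x})[\emptyseq] = 1$ for any partial fixed point). I do not anticipate any significant obstacle beyond cleanly stating this inductive invariant.
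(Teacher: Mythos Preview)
Your proposal is correct and follows exactly the approach the paper intends: the paper states this result as a ``straightforward corollary of \cref{prop:extend}'' without giving an explicit proof, and your induction on the loop iterations---using \cref{lem:x0} for the base case and \cref{prop:extend} for the step---is precisely that straightforward argument made explicit. Your handling of the root-information-set edge case and the complexity sum $\sum_{j\in\cJ}|\A(j)| = |\Seqs_*|$ are both correct.
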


\begin{algorithm}[th]\caption{$\Tph$-regret minimizer $\bar{\cR}^{(i)}$ for the set $\Q$}\label{algo:final q}
        \DontPrintSemicolon
            \KwData{\bull $i \in [n]$ player\newline
                    \bull $\tilde{\cR}^{(i)}$ regret minimizer for $\Ph$, defined in Algorithm~\ref{algo:R i}}
            \BlankLine
        \Fn{\normalfont\textsc{NextElement}()}{
            $\phi^t = \sum_{\hat\sigma\in\Seqs_*}\lambda^t_{\hat\sigma}\tdev[i][\hat\sigma][\vec{q}_{\hat\sigma}^t]\in\co\Ph \gets \tilde{\cR}^{(i)}.\textsc{NextElement}()$\;
            $\vec{q}^t\in\Q \gets \textsc{FixedPoint}(\phi^t)$\;
            \Return{$\vec{q}^t$}
        }
        \Hline{}
        \Fn{\normalfont\textsc{ObserveUtility}($\ell^t$)}{
            $L^t \gets$ linear utility function $\phi \mapsto \ell^t(\phi(\vec{q}^t))$\;
            $\tilde{\cR}^{(i)}.\textsc{ObserveUtility}(L^t)$\;
        }
    \end{algorithm}
    
\begin{algorithm}[th]\caption{$\Ph$-regret minimizer $\cR^{(i)}$ for the set $\Pure$}\label{algo:final pi}
        \DontPrintSemicolon
            \KwData{\bull $i \in [n]$ player\newline
                    \bull $\bar{\cR}^{(i)}$, $\Tph$-regret minimizer for $\Q$, defined in Algorithm~\ref{algo:final q}}
            \BlankLine
        \Fn{\normalfont\textsc{NextElement}()}{
            $\vec{q}^t \in\Q \gets \bar{\cR}^{(i)}.\textsc{NextElement}()$\;
            Sample\footnotemark\ a deterministic sequence-form strategy $\vec{\pi}^t\in\Pure$ so that it is an unbiased estimator of $\vec{q}^t$, using the natural sampling scheme described in \cref{sec:extensive form}\;
            \Return{$\vec{\pi}^t$}
        }
        \Hline{}
        \Fn{\normalfont\textsc{ObserveUtility}($\ell^t$)}{
            $\bar{\cR}^{(i)}.\textsc{ObserveUtility}(\ell^t)$\;
        }
\end{algorithm}
\footnotetext{As discussed in \cref{lem:deterministic to mixed}, in principle any unbiased sampling scheme will work. For the purposes of analyzing the complexity of \cref{algo:final pi}, however, we will assume that the natural sampling scheme described in \cref{sec:extensive form} is used. That sampling scheme runs in linear time in $|\Seqs|$.}

\subsection{The Complete Algorithm}

In this subsection we put together all the pieces we constructed in the previous subsections, in order to build the desired $\Ph$-regret minimizer that solves \cref{prob:pure}.

First, we provide in \cref{algo:final q} our $\Tph$-regret minimizer.
Its correctness follows from the correctness of the construction by~\citet{Gordon08:No} described in~\cref{sec:phirm}, and by using \cref{thm:analysis R i}~and~\cref{cor:fixed}.
Formally:

\begin{theorem}\label{thm:final mixed}
    Let $i \in [n]$ be any player. $\bar{\cR}^{(i)}$, defined in \cref{algo:final q}, is a $\Tph$-regret minimizer for the set $\Q$, whose cumulative regret upon observing linear utility functions $\ell^1,\dots,\ell^T$ satisfies 
    \[
        R^T \le 2D |\Seqs| \sqrt{T},
    \]
    where $D$ is any constant such that $\max_{\vec{q},\vec{q}'} \ell^t(\vec{q})-\ell^t(\vec{q}') \le D$ for all $t=1,\dots, T$.
    Furthermore, the \textsc{ObserveUtility} operation requires time $O(|\Seqs|^2)$, and the \textsc{NextElement} operation requires time $O(|\Seqs|^2 + \sum_{j\in\cJ}\textsf{\upshape FP}(|\A(j)|))$ at all $t$.
\end{theorem}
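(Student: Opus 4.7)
The plan is to view $\bar{\cR}^{(i)}$ as an instantiation of the phi-regret construction of \citet{Gordon08:No} recalled at the end of \cref{sec:phirm}, and then invoke \cref{thm:analysis R i} together with \cref{cor:fixed} as black boxes.

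First I would set up the regret bound. Let $\ell^1,\dots,\ell^T$ be the utility functions observed by $\bar{\cR}^{(i)}$, and let $\vec{q}^t,\phi^t,L^t$ be as in \cref{algo:final q}. By \cref{cor:fixed}, $\vec{q}^t = \phi^t(\vec{q}^t)$ at every $t$, so $\ell^t(\vec{q}^t) = \ell^t(\phi^t(\vec{q}^t)) = L^t(\phi^t)$. For an arbitrary $\phi^*\in\Tph$, one has by definition $L^t(\phi^*) = \ell^t(\phi^*(\vec{q}^t))$. Subtracting the two identities and summing over $t$ gives
\[
    \sum_{t=1}^T \bigl(\ell^t(\phi^*(\vec{q}^t)) - \ell^t(\vec{q}^t)\bigr) = \sum_{t=1}^T \bigl(L^t(\phi^*) - L^t(\phi^t)\bigr),
\]
so the $\Tph$-regret of $\bar{\cR}^{(i)}$ coincides with the (plain) regret cumulated by $\tilde{\cR}^{(i)}$ on the utilities $L^1,\dots,L^T$. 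Next I would bound the range of each $L^t$ on $\co\Ph$: because every $\phi\in\Ph$ maps $\Pure$ into $\Pure$ (\cref{lem:canonical}) and is linear, it extends to a map $\Q\to\Q$; by convexity the same is true of every $\phi\in\co\Ph$. Hence $\phi(\vec{q}^t)\in\Q$ for all $\phi\in\co\Ph$, so $\max_{\phi,\phi'\in\co\Ph}\{L^t(\phi)-L^t(\phi')\}\le \max_{\vec{q},\vec{q}'\in\Q}\{\ell^t(\vec{q})-\ell^t(\vec{q}')\}\le D$. Plugging this into the regret bound of \cref{thm:analysis R i} yields $R^T\le 2D|\Seqs|\sqrt{T}$, as claimed.

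For the complexity analysis I would just add up the cost of each sub-routine. In \textsc{NextElement}, the call to $\tilde{\cR}^{(i)}.\textsc{NextElement}$ costs $O(|\Seqs|^2)$ by \cref{thm:analysis R i}, and outputs $\phi^t$ in the list form required by \cref{algo:fixedpoint}; then $\textsc{FixedPoint}(\phi^t)$ runs in $O(|\Seqs|^2 + \sum_{j\in\cJ}\textsf{FP}(|\A(j)|))$ time by \cref{cor:fixed}. In \textsc{ObserveUtility}, the only non-trivial step is materialising the canonical representation $\langle L^t\rangle$ so that it can be passed to $\tilde{\cR}^{(i)}$. Writing $\ell^t(\vec{x})=\vec{v}^\top\vec{x}$, any $\phi\in\co\Ph$ with canonical matrix $\vec{M}_\phi$ satisfies $L^t(\phi) = \vec{v}^\top\vec{M}_\phi\vec{q}^t = \sum_{s_r,s_c}(\vec{v}[s_r]\vec{q}^t[s_c])\,\vec{M}_\phi[s_r,s_c]$, so by \eqref{eq:functional canonical} one has $\langle L^t\rangle[s_r,s_c]=\vec{v}[s_r]\vec{q}^t[s_c]$, which can be built in $O(|\Seqs|^2)$ time. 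The subsequent call to $\tilde{\cR}^{(i)}.\textsc{ObserveUtility}$ is again $O(|\Seqs|^2)$ by \cref{thm:analysis R i}.

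The only step that requires any care is the identification $\ell^t(\vec{q}^t)=L^t(\phi^t)$, which relies on $\vec{q}^t$ being an actual fixed point of $\phi^t$; this is exactly what \cref{cor:fixed} guarantees, so there is no real obstacle. Everything else is bookkeeping that combines the previously established \cref{thm:analysis R i} and \cref{cor:fixed}.
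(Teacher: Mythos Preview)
Your proposal is correct and follows essentially the same approach as the paper: reduce the $\Tph$-regret of $\bar{\cR}^{(i)}$ to the plain regret of $\tilde{\cR}^{(i)}$ via the fixed-point property and the Gordon--Greenwald--Marks construction, then invoke \cref{thm:analysis R i} and \cref{cor:fixed}; you even spell out the range bound on $L^t$ and the explicit form $\langle L^t\rangle=\langle\ell^t\rangle(\vec{q}^t)^\top$ that the paper's proof also uses.
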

\begin{proof}
    From the properties of Gordon et al.'s construction~\citep{Gordon08:No} (\cref{sec:phirm}), the cumulative $\Ph$-regret incurred by $\bar{\cR}^{(i)}$ is equal, at all times, to the cumulative regret incurred by the underlying regret minimizer $\tilde{\cR}^{(i)}$ for the set of deviations $\Ph$. So, the regret bound follows from the regret analysis of \cref{thm:analysis R i}. 
    
    Similarly, the complexity analysis follows from combining the analysis of $\tilde{\cR}^{(i)}$ and of \textsc{FixedPoint} (\cref{algo:fixedpoint}), together with the observation that the canonical representation $\langle L^t \rangle$ of the linear utility function $\co\Ph\ni\phi\mapsto\ell^t(\phi(\vec{q}^t))$ is the matrix $\langle\ell^t\rangle (\vec{q}^t)^\top$, which can be trivially computed in $O(|\Seqs|^2)$ time. 
\end{proof}

Since $\co\Ph \supseteq \Ph$, \cref{algo:final q} is in particular also a $\Ph$-regret minimizer for the set $\Q$. So, \cref{thm:final mixed} establishes that \cref{algo:final q} provides a solution to \cref{prob:mixed}. Consequently, by applying \cref{lem:deterministic to mixed}, we immediately get the following characterization of \cref{algo:final pi}, our $\Ph$-regret minimizer $\cR^{(i)}$ for the set of deterministic sequence-form strategies $\Pure$ of Player~$i$.

\begin{corollary}\label{thm:final}
    Let $i \in [n]$ be any player. $\cR^{(i)}$, defined in \cref{algo:final pi}, is a $\Ph$-regret minimizer for the set $\Pure$, whose cumulative regret $R^T$ upon observing linear utility functions $\ell^1,\dots,\ell^T$ satisfies 
    \[
        R^T \le 2D |\Seqs| \sqrt{T} + D \sqrt{8T\log(1/\delta)}\text{ with probability at least } 1-\delta,
    \]
    where $D$ is any constant such that $\max_{\vec{q},\vec{q}'}\{ \ell^t(\vec{q})-\ell^t(\vec{q}')\} \le D$ for all $t=1,\dots, T$. Furthermore, the \textsc{ObserveUtility} operation requires time $O(|\Seqs|^2)$, and the \textsc{NextElement} operation requires time $O(|\Seqs|^2 + \sum_{j\in\cJ}\textsf{\upshape FP}(|\A(j)|))$ at all $t$. 
\end{corollary}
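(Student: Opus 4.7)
The plan is to obtain Corollary~\ref{thm:final} as a direct consequence of \cref{thm:final mixed} combined with \cref{lem:deterministic to mixed}, since Algorithm~\ref{algo:final pi} is precisely the instantiation of the ``deterministic-from-mixed'' wrapper from \cref{lem:deterministic to mixed} applied to $\bar{\cR}^{(i)}$. So the main work is to check that all the hypotheses of that lemma are met and then to plug in the regret and complexity bounds proved for $\bar{\cR}^{(i)}$.

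First, I would verify that $\cR^{(i)}$ as defined in \cref{algo:final pi} matches the template of $\lem:deterministic to mixed$: its \textsc{NextElement} queries $\bar{\cR}^{(i)}$ for a mixed strategy $\vec{q}^t\in\Q$ and then samples a deterministic $\vec{\pi}^t\in\Pure$ using the natural top-down sampling scheme of \cref{sec:extensive form}, which gives $\bbE[\vec{\pi}^t]=\vec{q}^t$; and its \textsc{ObserveUtility} simply forwards the utility to $\bar{\cR}^{(i)}$. By \cref{thm:final mixed}, $\bar{\cR}^{(i)}$ is a $\Tph$-regret minimizer for $\Q$, and since $\Ph\subseteq\co\Ph=\Tph$, any phi-regret bound relative to $\Tph$ upper bounds the phi-regret relative to $\Ph$. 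In particular, $\bar{\cR}^{(i)}$ is a $\Ph$-regret minimizer for $\Q$ with cumulative $\Ph$-regret at most $\bar R^{(i),\,T}\le 2D|\Seqs|\sqrt{T}$ under the stated range assumption.

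Second, I would invoke \cref{lem:deterministic to mixed} with this $\bar{\cR}^{(i)}$. The lemma then guarantees that $\cR^{(i)}$ is a $\Ph$-regret minimizer for $\Pure$, and, with probability at least $1-\delta$,
\[
    R^{(i),\,T} \;\le\; \bar R^{(i),\,T} + D\sqrt{8T\log(1/\delta)} \;\le\; 2D|\Seqs|\sqrt{T} + D\sqrt{8T\log(1/\delta)},
\]
which is exactly the regret bound in the statement.

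Finally, for the per-iteration complexity, I would simply add up the costs of the two operations performed inside each call. For \textsc{NextElement}, one call to $\bar{\cR}^{(i)}.\textsc{NextElement}$ costs $O(|\Seqs|^2 + \sum_{j\in\cJ}\textsf{FP}(|\A(j)|))$ by \cref{thm:final mixed}, and the natural sampling scheme produces $\vec{\pi}^t$ in $O(|\Seqs|)$ time (see \cref{sec:extensive form}), so the first bound is dominated. For \textsc{ObserveUtility}, the only work done beyond forwarding $\ell^t$ is inside $\bar{\cR}^{(i)}.\textsc{ObserveUtility}$, which requires $O(|\Seqs|^2)$ time by \cref{thm:final mixed}. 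There is no real obstacle here: the content of the corollary is entirely a consequence of earlier results, and the only point that needs momentary care is to note the inclusion $\Ph\subseteq\co\Ph$, so that the $\Tph$-regret bound of $\bar{\cR}^{(i)}$ transfers verbatim to a $\Ph$-regret bound before \cref{lem:deterministic to mixed} is applied.
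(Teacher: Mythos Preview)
Your proposal is correct and matches the paper's approach: the corollary is presented as an immediate consequence of \cref{thm:final mixed} (which gives the regret and complexity bounds for $\bar{\cR}^{(i)}$) together with \cref{lem:deterministic to mixed} (the sampling wrapper), using the inclusion $\Ph\subseteq\co\Ph$ to pass from a $\Tph$-regret bound to a $\Ph$-regret bound. The paper gives no additional argument beyond this.
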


Therefore, \cref{algo:final pi} is a solution to \cref{prob:pure}.

\section{Convergence to EFCE}

\cref{thm:empirical efce} implies that if all players $i\in[n]$ play the game repeatedly according  to the outputs of a $\Ph$-regret minimizer for $\Pure$ that observes, at each time $t$, the linear utility function given in~\eqref{eq:def loss}, then the empirical frequency of play is a $(\frac{1}{T}\max_i R^{(i),T})$-EFCE, where $R^{(i),T}$ is the regret cumulated by the $\Ph$-regret minimizer for Player~$i$. In particular, when all players play according to the strategies recommended by \cref{algo:final pi}, the following can be shown. 
\begin{theorem}\label{thm:conv finito}
    When all players $i = 1, \dots, n$ play according to the outputs of the regret minimizer $\cR^{(i)}$ defined in \cref{algo:final pi}, receiving as feedback at all times $t$ the linear utility functions $\ell^{(i),t}$ defined in~\eqref{eq:def loss}, the empirical frequency of play after $T$ repetitions of the game is a
    \[
        \mleft(D\frac{2|\Hist| + \sqrt{8\log(n/\delta)}}{\sqrt T}\mright)\text{-EFCE with probability at least } 1-\delta,
    \]
    where $D$ is the difference between the maximum and minimum payoff of the game, and $|\Hist|$ is the number of nodes in the game tree.
\end{theorem}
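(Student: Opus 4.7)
The plan is to combine \cref{thm:empirical efce}, which translates a uniform upper bound on the players' $\Ph$-regrets into an approximation guarantee for the empirical frequency of play with respect to EFCE, with \cref{thm:final}, which provides a high-probability regret bound for each player's algorithm $\cR^{(i)}$ individually.

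First, I would verify that $D$, defined as the range between the maximum and minimum payoff of the game, is a valid upper bound on the range of the linear utility functions $\ell^{(i),\,t}$ of~\eqref{eq:def loss} over $\Q$, as required by \cref{thm:final}. For any fixed deterministic profile $(\puret[1],\dots,\puret[n])$ of the opponents, the map $\Q\ni\vec{q}\mapsto \ell^{(i),\,t}(\vec{q})$ expresses Player~$i$'s expected utility when playing $\vec{q}$ against those deterministic opponents; as such, it is a convex combination of the leaf payoffs $u^{(i)}(z)$ and therefore takes values in $[\min_z u^{(i)}(z),\,\max_z u^{(i)}(z)]$, whose width is at most $D$.

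Second, I would instantiate \cref{thm:final} separately for each player $i\in[n]$, with the confidence parameter $\delta$ replaced by $\delta/n$. This yields, for every fixed $i\in[n]$,
\[
    R^{(i),\,T} \;\le\; 2D\,|\Seqs|\,\sqrt{T} + D\,\sqrt{8T\log(n/\delta)}
\]
with probability at least $1-\delta/n$. Since the private randomness used by each player's algorithm is independent across players, a union bound across the $n$ players shows that the above inequality holds simultaneously for every $i\in[n]$ with probability at least $1-\delta$.

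Third, I would bound $|\Seqs|\le |\Hist|$ through a standard counting argument: each nonempty sequence $(j,a)\in\Seqs_*$ can be mapped injectively into the edge set of the game tree (for instance, by sending $(j,a)$ to the outgoing edge labelled $a$ at a canonical representative node of information set $j$, which is well-defined because all nodes in the same information set share the same action set); since the tree has $|\Hist|-1$ edges, $|\Seqs_*|\le |\Hist|-1$ and therefore $|\Seqs|\le |\Hist|$. Applying \cref{thm:empirical efce} then tells us that the empirical frequency of play is a $\mleft(\max_i R^{(i),\,T}/T\mright)$-EFCE, so dividing the simultaneous regret bound above by $T$ and using $|\Seqs|\le|\Hist|$ yields precisely the stated EFCE-approximation $D(2|\Hist| + \sqrt{8\log(n/\delta)})/\sqrt{T}$. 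The only point requiring genuine care is the probabilistic bookkeeping, namely that each per-player bound of \cref{thm:final} must be invoked at confidence level $\delta/n$ so that the union bound aggregates to the desired $1-\delta$ simultaneous guarantee; the remaining steps are routine substitutions.
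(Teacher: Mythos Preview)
Your proposal is correct and follows essentially the same approach as the paper: apply \cref{thm:final} to each player at confidence level $\delta/n$, take a union bound over the $n$ players, use $|\Seqs|\le|\Hist|$, and then invoke \cref{thm:empirical efce}. One minor remark: the union bound does not require independence of the players' randomness, so your clause ``since the private randomness used by each player's algorithm is independent across players'' is superfluous (though harmless); the paper's proof simply applies the union bound directly.
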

\begin{proof}
    Let $R^{(i),T}$ be the regret cumulated by Player~$i$ up to time $T$ when playing according to \cref{algo:final pi}. From \cref{thm:final}, we have that for all $\delta'\in (0,1)$,
    \[
        \bbP\mleft[R^{(i),T} \le 2D |\Hist| \sqrt{T} + D \sqrt{8T\log(1/\delta')}\mright] \ge \bbP\mleft[R^{(i),T} \le 2D |\Seqs| \sqrt{T} + D \sqrt{8T\log(1/\delta')}\mright] \ge 1-\delta',
    \]
    where the first inequality follows from the fact that $|\Seqs| = \sum_{j\in\cJ}|\A(j)| \le \sum_{h\in\Hist} |\A(h)| \le |\Hist|$ (the number of edges in a tree is always less than the number of nodes).
    So,
    \begin{align*}
        \bbP\mleft[\max_i R^{(i),T} \le 2D |\Hist| \sqrt{T} + D \sqrt{8T\log(1/\delta')}\mright]
        &= \bbP\mleft[\bigcap_i \mleft\{R^{(i),T} \le 2D |\Hist| \sqrt{T} + D \sqrt{8T\log(1/\delta')} \mright\} \mright] \\&\ge 1-n\delta',
    \end{align*}
    where the inequality follows from the union bound. Substituting $\delta\defeq n\delta'$ and using \cref{thm:empirical efce} yields the result.
\end{proof}

A standard application of the Borel-Cantelli lemma enables one to move from the high-probability guarantees at finite time of \cref{thm:conv finito} to almost-sure guarantees in the limit.

\begin{corollary}
    When all players $i = 1, \dots, n$ play infinitely many repetitions of the game according to the outputs of the regret minimizer $\cR^{(i)}$ defined in \cref{algo:final pi}, receiving as feedback at all times $t$ the linear utility functions $\ell^{(i),t}$ defined in~\eqref{eq:def loss}, the empirical frequency of play converges, almost surely, to an EFCE.
\end{corollary}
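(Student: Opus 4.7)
The plan is to upgrade the finite-time high-probability bound of \cref{thm:conv finito} to an almost sure statement via the first Borel-Cantelli lemma. Concretely, \cref{thm:conv finito} provides, for every fixed time $T$ and every $\delta \in (0, 1)$, an $\epsilon_T(\delta)$-EFCE guarantee with probability at least $1 - \delta$, where
\[
    \epsilon_T(\delta) \defeq D\,\frac{2|\Hist| + \sqrt{8\log(n/\delta)}}{\sqrt T}.
\]
The idea is to pick a schedule of failure probabilities $\delta_T$ that is summable (so that Borel-Cantelli applies) yet decays slowly enough that $\epsilon_T(\delta_T) \to 0$. The choice $\delta_T = 1/T^2$ satisfies both requirements at once, since $\sqrt{\log T} = o(\sqrt T)$.

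With this schedule in hand, I would first fix $\epsilon > 0$ and define the event
\[
    B_T^{(\epsilon)} \defeq \mleft\{\text{the empirical frequency of play after $T$ repetitions is not an $\epsilon$-EFCE}\mright\}.
\]
Because $\epsilon_T(1/T^2) \to 0$, there exists $T_0(\epsilon)$ such that $\epsilon_T(1/T^2) \le \epsilon$ for all $T \ge T_0(\epsilon)$; hence \cref{thm:conv finito} applied with $\delta = 1/T^2$ yields $\bbP[B_T^{(\epsilon)}] \le 1/T^2$ for every such $T$. In particular, $\sum_{T} \bbP[B_T^{(\epsilon)}] < \infty$, and the first Borel-Cantelli lemma gives $\bbP[\limsup_T B_T^{(\epsilon)}] = 0$; equivalently, almost surely only finitely many of the events $B_T^{(\epsilon)}$ occur, i.e., almost surely the empirical frequency is eventually an $\epsilon$-EFCE.

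To upgrade from ``eventually $\epsilon$-EFCE for each fixed $\epsilon$'' to almost sure convergence to the set of EFCEs themselves, I would apply the previous step to the countable collection $\epsilon = 1/k$, $k \in \bbN$, and intersect the resulting probability-one events. On this intersection (which still has probability one), for every $k$ the empirical frequency is eventually a $(1/k)$-EFCE; since the set of EFCEs is exactly $\bigcap_{k\in\bbN}\{(1/k)\text{-EFCEs}\}$ inside the compact simplex $\Delta^{|\Pi|}$, this implies almost sure convergence of the empirical frequency to the set of EFCEs. I do not foresee any genuine obstacle: once \cref{thm:conv finito} is available, the argument is a textbook Borel-Cantelli application, and the only real design choice is picking a summable schedule $\delta_T$ for which $\epsilon_T(\delta_T) \to 0$, which any $\delta_T = T^{-p}$ with $p > 1$ fulfills.
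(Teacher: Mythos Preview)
Your proposal is correct and follows exactly the approach the paper indicates: the paper states only that the corollary is ``a standard application of the Borel-Cantelli lemma'' to \cref{thm:conv finito}, without spelling out the details, and your summable schedule $\delta_T = 1/T^2$ together with the countable intersection over $\epsilon = 1/k$ is precisely such a standard application. The paper gives no further proof beyond that one-line remark, so your write-up is in fact more detailed than the original.
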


\begin{acks}
    We thank Dustin Morrill, Marc Lanctot, and Mike Bowling for a useful discussion about behavioral deviation functions, and for pointing out an incorrect statement related to their recent framework~\citep{morrill2021efficient} in a preliminary version of this article. We are also grateful to the anonymous reviewers at NeurIPS 2020, where a preliminary version of this article appeared, for their useful comments.

  This work is based on work supported by the National Science Foundation under grants IIS-1718457,
IIS-1617590, IIS-1901403, and CCF-1733556, the ARO under awards W911NF-17-1-0082 and W911NF2010081, and the  Italian MIUR PRIN 2017 Project ALGADIMAR
``Algorithms, Games, and Digital Market''.  Gabriele Farina is supported by a Facebook fellowship.
\end{acks}

\bibliographystyle{ACM-Reference-Format}
\bibliography{ref}

\end{document}